\newif\if@restonecol
\newcommand{\tpe}{{P}_{e,\text{pair}}}
\newcommand{\bTh}{_{\bar{1}}}
\newcommand{\bFo}{_{\bar{2}}}
\newcommand{\Th}{{\bar{1}}}
\newcommand{\Fo}{{\bar{2}}}
\newtheorem{property}{Property}
\newtheorem{theorem}{Theorem}
\newtheorem{remark}{Remark}
\newtheorem{definition}{Definition}
\newtheorem{lemma}{Lemma}
\newcommand{\BI}{ {\mathcal{I}}  }
\begin{document}

\title{Interference Alignment at Finite SNR for Time-Invariant Channels}
\author{Or~Ordentlich and
        Uri~Erez~\IEEEmembership{Member,~IEEE}

\IEEEcompsocitemizethanks{
\IEEEcompsocthanksitem
This work was supported in part by the Israel Science Foundation under grant 1557/10, the Binational Science Foundation under grant 2008455, a fellowship from The Yitzhak and Chaya Weinstein Research Institute for Signal Processing at Tel Aviv University, and the Feder Family Award.
\IEEEcompsocthanksitem
Or Ordentlich and Uri Erez are with the Department
of Electrical Engineering-Systems, Tel Aviv University, Ramat Aviv 69978, Israel.
\protect\\
E-mail: \{ordent,uri\}@eng.tau.ac.il.
}}

\maketitle
\begin{abstract}
An achievable rate region, based on lattice interference alignment, is derived for a class of time-invariant Gaussian interference channels with more than two users.
The result is established via a new coding theorem for the two-user Gaussian multiple-access channel where both users use a single linear code. The class of interference channels treated is such that all interference channel
gains are rational. For this class of interference channels, beyond recovering the known results on the degrees of freedom,
an explicit rate region is derived for finite signal-to-noise ratios, shedding light on the nature of previously established asymptotic results.
\end{abstract}
\begin{keywords}
Multiple access channel, linear codes, interference channel, interference alignment.
\end{keywords}

\section{Introduction}

An important open problem in network information theory is determining the capacity region of the interference channel. The interference channel is a communication model where multiple pairs of transmitters and receivers utilize the same communication medium. As a result, each user receives the output of a multiple-access channel, i.e., it suffers from interference from transmissions intended for other users.

An important special case of this channel model is the Gaussian interference channel, where each receiver sees a linear combination of its intended signal and the signals transmitted by the interfering users plus an additive white Gaussian noise (AWGN). For the case where only two users are sharing the same medium, i.e., the interference at each receiver is generated by only one user, the capacity region was characterized up to half a bit only recently~\cite{oneBit}. The achievability part utilizes the Han-Kobayashi~\cite{HanKobayashi} scheme which is shown to be nearly optimal in the two-user case. The results of ~\cite{oneBit} are rather disappointing in the sense that they imply that for a wide range of channel parameters, either treating the interference as noise, or alternating access to the medium (i.e., time sharing) between the two transmitter-receiver pairs, is a reasonable approach. In particular, time sharing yields the maximal degrees of freedom (DoF) afforded by the channel (i.e., one), where the number of DoF is defined as the ratio between the maximal possible sum rate and $\nicefrac{1}{2}\log({\text{SNR}} )$ in the limit where the SNR goes to infinity.

An interesting aspect of the interference channel is that the two-user case does not capture the quintessential features of the general ($K$-user)
interference channel, as has recently been demonstrated in the framework of linear Gaussian interference channels. In particular, while one may have suspected
that the channel would be interference limited, i.e., that time sharing would be optimal at high SNR, it has been demonstrated that this is not the case. Rather, it has been shown \cite{timeVarying,bobak,Etkin,Khandany} that the correct ``extension" of the two-user case is that in general, $K/2$ DoF are afforded by the $K$-user Gaussian interference channel.

\subsection{Related Work}
\label{subsec:relatedWork}
The works of \cite{timeVarying,bobak,Etkin,Khandany} have revealed that the Han-Kobayashi approach is inadequate for $K>2$, and a new approach, namely, interference alignment, was needed to achieve the DoF afforded by the (general) $K$-user interference channel.

The concept of interference alignment was introduced by Maddah-Ali et al. in~\cite{Xchannel} in the context of the multiple-input multiple-output (MIMO) X-channel, and then  by Cadambe and Jafar in ~\cite{timeVarying} for the $K$-user Gaussian interference channel with time-varying gains. The idea behind interference alignment in its original form, i.e., for time-varying channels, is essentially to split the signal space observed by each receiver into two orthogonal subspaces, where all the interference signals are confined to one, while the intended signal occupies the other. That is, if at \emph{each} receiver (simultaneously for all receivers) the interferences can be forced to lie in a subspace of dimension roughly half of that of the signal space, the rest of the signal space is free of interference and can be used for communicating the intended messages and hence roughly $K/2$ DoF can be utilized.

For the case of \emph{time or frequency varying channels}, under some mild ergodicity conditions,
this approach achieves the maximum possible DoF. Specifically, the interference alignment scheme of~\cite{timeVarying} achieves $K/2$ degrees of freedom while in~\cite{DoF_upperBound} it was shown that for a fully connected $K$-user interference channel, $K/2$ is also an upper bound on the number of DoF. It was later shown in~\cite{bobak} that for a large family of time or frequency varying channels,
each user can achieve about half of its interference-free ergodic capacity even at finite SNR.

The results of~\cite{timeVarying} and~\cite{bobak} are encouraging in the sense that they imply that the $K$-user Gaussian interference channel
is not inherently interference limited, a result that was quite unanticipated from the studies on the two-user case, which had been the focus of nearly all studies of Gaussian interference channels for nearly three decades.

The focus of this paper is the real \emph{time-invariant} (constant channel gains) $K$-user Gaussian interference channel, for which another from of interference alignment has proven to play a key role as well. In this case, it was shown in~\cite{Etkin, Khandany} that by taking the transmitted signal to belong to the integer lattice, it is possible to align the interference so that it remains confined to this lattice. As a result, the minimum distance of the received constellation at each receiver does not decrease with $K$, and when the SNR approaches infinity, each receiver can decode its intended signal with rate $\approx 1/4\log{\text{SNR}}$, yielding a total of $K/2$ DoF.
Thus, linear constellations, i.e., a PAM constellation in the one-dimensional case, play a key role in interference alignment for time-invariant channels.

Specifically, it was shown in~\cite{Etkin} that if at each receiver, the channel gains corresponding to the interferers are rational, whereas the direct channel gains corresponding to the intended signal are irrational, $K/2$ degrees of freedom are achievable. Even more interestingly, the authors of~\cite{Etkin} have shown that if the direct channel gains are rational as well, the degrees of freedom of the channel are strictly smaller than $K/2$. Later, the authors of~\cite{Khandany} proved that the DoF of the time-invariant interference channel are $K/2$ for almost all sets of channel gains.

As noted above, linear/lattice codes play a key role in coding for time-invariant Gaussian interference channels.
This feature of the $K$-user Gaussian interference channel is shared with a growing number of problems in
network information theory where lattice strategies have been shown to be an important ingredient. There are several examples where lattice strategies achieve better performance than the best known random coding strategies.
In particular, Philosof et al. introduced lattice interference alignment in the context of the doubly-dirty multiple-access channel \cite{Philosof}, i.e., to a Gaussian multiple-access channel with multiple interference signals, each known to a different transmitter.
Other network scenarios where lattices play a key role are the two-way (or multiple-way) relay problem ~\cite{Narayanan}
and in the compute \& forward approach to relay networks~\cite{compAndForIeee}.

Lattice interference alignment for the interference channel was first proposed by Bresler et al. in~\cite{BreslerAllerton,Bresler},
 where an approximate characterization of the capacity
 region for the many-to-one and one-to-many interference channels was derived. Lattice interference alignment was later utilized by Sridharan et al. in~\cite{VeryStrong} where a coding scheme where all users transmit points from the \emph{same} lattice was introduced. If at each receiver all the gains corresponding to the interferers are integers, the sum of the interferences is a point in the same lattice and thus the interference from $K-1$ users is confined to one lattice. Under very strong interference conditions, which are defined in~\cite{VeryStrong} and play the same role as the well-known very high interference condition~\cite{Carliel} for the two-user case, the decoder can first decode the sum of interferers while treating the desired lattice point as noise, then subtract the decoded interference, and finally decode the intended codeword. Later, in~\cite{LayeredSymmetric}, this scheme was combined with a layered coding scheme in order to show that lattice interference alignment can yield substantial gains and, in particular, achieve more that one DoF in some cases for a broader (but still quite limited) class of channels.


The works of~\cite{VeryStrong} and~\cite{LayeredSymmetric} allowed for important progress towards the understanding of interference alignment for finite SNR. Nonetheless, these results are limited since they essentially rely on using superposition with a judicious choice of power allocation such that a very strong interference condition holds, in conjunction with successive decoding. In the decoding procedure, a single layer is decoded at every step while the other layers are treated as noise. At each step of the successive decoding procedure, the decoder sees an equivalent point-to-point channel where lattice codes are used. The performance of lattice codes over AWGN point-to-point channels are well understood, and therefore the scheme of~\cite{VeryStrong} and~\cite{LayeredSymmetric}  can be analyzed with relative ease. For special classes of channel gains, it is possible to design a layered codebook that is simultaneously good for all receivers. For a wide range of channel parameters, however, such a layered scheme is not beneficial, as is also noted in~\cite{LayeredSymmetric}.

\subsection{Summary of Results}
\label{subsec:newContribution}

The main contribution of the present work is in providing a general framework for lattice interference alignment that is not
confined to successive decoding. A coding theorem is established for a multiple-access channel where all users use the same
linear codebook.

Specifically, if the interference is aligned to a lattice, but the very strong interference condition is not satisfied, the decoder can still perform \emph{joint decoding} of the interference codeword and the desired codeword. A major obstacle however arises when one attempts to jointly decode multiple codewords: The alignment of all interferers into one lattice point, which occurs simultaneously at all receivers, is only possible due to the fact that all users transmit lattice points from the \emph{same} lattice. Thus, if joint decoding is applied, each decoder sees a two-user Gaussian multiple-access channel (MAC) where both users use the \emph{same linear code}, for which, to the best of the authors knowledge, no previous results are known.

The fact that the number of DoF of many families of interference channels is $K/2$ implies that it is sufficient to transform the channel seen by each receiver into a two-user MAC in order to achieve optimal performance, i.e., half of the resources are dedicated to the interference and the other half to the intended signal. In light of this observation, along with the proven advantages of lattice codes in creating alignment between different users, it is important to study the achievable performance of the two-user Gaussian MAC where the same lattice code is used by all transmitters.

In this paper, we first address the question of finding an achievable symmetric rate for the Gaussian (modulo-additive) MAC with two users that are ``forced" to use the same linear code. We then employ this new ingredient in order to analyze an interference alignment scheme, suitable for a class of interference channels, which we refer to as the integer-interference channel, where all cross gains are integers (or rationals). The analysis is not asymptotic in the SNR.

While the proposed coding scheme does not require asymptotic conditions, we show that it is asymptotically optimal in a DoF sense, i.e., it achieves $K/2$ degrees of freedom for the integer-interference channel provided that the direct gains are irrational.
The achievable rate regions enables to shed light on the ``mystery'' around the effect of the direct channel gains being rational or irrational, which has to date only been understood for asymptotic high SNR conditions. In the proposed scheme, rational direct channel gains of the form $r/q$ limit the achievable symmetric rate to be smaller than $\log q$, which is not a serious limitation if $q$ is large and the SNR is moderate, but does indeed pose a limitation in the limit of very high SNR.

Moreover, previous results~\cite{Etkin,Khandany} state that the DoF of an interference channel with integer interference gains, are everywhere discontinuous in the direct channel gains. Such a result is quite displeasing and calls into question the applicability of interference alignment for time-invariant channels at non-asymptotic conditions, i.e., raises questions w.r.t. the robustness of lattice interference alignment. The results of this work demonstrate the behavior of the rate when the direct channel gains approach a given set of rational numbers. The (derived achievable) rate is continuous, as is to be expected, everywhere in the direct channel gains for any SNR, but the variation, i.e., sensitivity to the direct channel gain, increases with the SNR. 

While the presented scheme is only valid for channels where all the (non-direct) interference gains are integers (or rationals), we believe that the results are an important step towards the understanding of the feasibility of interference alignment for general time-invariant interference channels in the finite SNR regime.

The rest of this paper is organized as follows. In Section~\ref{sec:notations} some notations used throughout the paper are defined.
In Section~\ref{sec:MAC} an achievable symmetric rate is derived for a two-user Gaussian (modulo-additive) MAC where both users use the same linear code.
Section~\ref{sec:intAlignment} presents an interference alignment scheme for finite SNR. Section~\ref{sec:rationales} discusses the effect of the direct channel gains being rational vs. irrational on the performance of interference alignment. In Section~\ref{sec:nonInteger}, possible approaches for interference alignment when the interference gains are not restricted to be integers (or rational) are discussed. The paper concludes with Section~\ref{sec:conclusions}.

\section{Notational Conventions}
\label{sec:notations}

Throughout the paper we use the following natational conventions. Random variables are denoted by uppercase letters and their realizations by lowercase letters. For example $X$ is a random variable whereas $x$ is a specific value it can take. We use boldface letters to denote vectors, e.g., $\mathbf{x}$ denotes a vector with entries $x_i$.

A number of distinct modulo operations are used extensively throughout the paper.
The notation $x_{\bmod [a,b)}$ denotes reducing $x \in \mathbb{R}$ modulo the interval $[a,b)$.
That is, $x_{\bmod [a,b)}$ is equal to
$$x-m\cdot(b-a)$$
where $m\in\mathbb{Z}$ is the (unique) integer such that $$x-m\cdot(b-a)\in [a,b).$$
Similarly, $x_{\bmod p}$ where $x\in \mathbb{Z}_p$ is defined to equal
$$x-m\cdot p$$
where $m\in\mathbb{Z}$ is the unique integer such that
$x-m\cdot p\in \mathbb{Z}_p$.

If $\mathbf{x}$ is a vector, the notation $\mathbf{x}_{\bmod [a,b)}$ is understood to mean reducing each component of $\mathbf{x}$ modulo the interval $[a,b)$. We define the basic interval $\BI$ as $\left[-L/2,L/2\right)$ where $L=\sqrt{12}$. Reducing $x$  modulo the interval $\BI$ is denoted by $x^*$, i.e.,
$$x^*=[x]_{\bmod [-L/2,L/2)}.$$
The Euclidean norm of a vector $\mathbf{x}$ is denoted by $\left\|\mathbf{x}\right\|$. The notation $\lfloor x\rceil$ denotes rounding $x$ to the nearest integer. We denote the set of all prime numbers by $\mathcal{P}$. All logarithms in the paper are to the base $2$ and therefore all rates are expressed in bits per (real) channel use. All signals considered in this paper are real valued.

\section{Achievable Symmetric Rate for the Two-User Gaussian MAC with a Single Linear Codebook}
\label{sec:MAC}

\subsection{Problem statement}
\label{subsec:problemStatement}

We consider the modulo-additive MAC
\begin{align}
Y=\left[X_1+\gamma X_2+Z\right]^*,
\label{modMAC}
\end{align}
where $Z$ is an i.i.d. Gaussian noise with zero mean and variance $\mathbb{E}\left[Z^2\right]=1/\text{SNR}$.
We are interested in characterizing the achievable rate region for this channel where both users are forced to use the \emph{same linear} code, and where both users are subject to the power constraint
\begin{align}
\frac{1}{n}\mathbb{E}\left[ \left\|\mathbf{x}\right\|^2 \right] \leq 1.\nonumber
\end{align}
Note that a random variable uniformly distributed over $\BI$ has unit power.

An $(n,R)$ code for this model is defined by one encoding function
\begin{align}
f: \left\{1,\ldots,2^{nR}\right\}\rightarrow \BI^n\nonumber
\end{align}
and one decoding function
\begin{align}
g: 
 \BI^n \rightarrow \left\{1,\ldots,2^{nR}\right\}\times \left\{1,\ldots,2^{nR}\right\}\nonumber.
\end{align}
The linearity constraint on the encoding function $f$ is expressed by the condition that for
any $w_1,w_2 \in \left\{1,\ldots,2^{nR}\right\}$, there exists a $w_3\in\left\{1,\ldots,2^{nR}\right\}$
such that
\begin{align}
\left[f(w_1)+f(w_2)\right]^*=f(w_3).
 \label{linearity}
\end{align}
Specifically, user $1$ chooses a message $w_1\in\left\{1,\ldots,2^{nR}\right\}$ and transmits $\mathbf{x}_1=f(w_1)$, and user $2$ chooses a message $w_2\in\left\{1,\ldots,2^{nR}\right\}$ and transmits $\mathbf{x}_2=f(w_2)$. The decoder upon receiving
\begin{align}
\mathbf{y}=\left[\mathbf{x}_1+\gamma \mathbf{x}_2+\mathbf{z}\right]^*\nonumber,
\end{align}
generates estimates for the transmitted messages
\begin{align}
\left\{\hat{w}_1,\hat{w}_2\right\}=g(\mathbf{y})\nonumber.
\end{align}
The error probability for decoding the transmitted messages is denoted by
\begin{align}
P_e=\Pr\left(\{\hat{w}_1,\hat{w}_2\}\neq\{w_1,w_2\}\right)\nonumber.
\end{align}
We say that a symmetric rate $R$ is achievable if for any $\epsilon>0$ and $n$ large enough (depending on $\epsilon$), there exists an $(n,R)$ linear code such that $P_e<\epsilon$.

\subsection{Connection to the standard Gaussian MAC and previous results}
\label{subsec:prevResults}

The channel~(\ref{modMAC}) can be viewed as a degraded version of the Gaussian multiple access channel
\begin{align}
\tilde{Y}=X_1+\gamma X_2 + Z,
\label{nonModMAC}
\end{align}
as $Y$ can be obtained from $\tilde{Y}$ by the transformation
\begin{align}
Y=\tilde{Y}^*.\nonumber
\end{align}
It follows that the achievable symmetric rate for our channel model with the constraint that both users use the same linear code is upper bounded by that of the channel (\ref{nonModMAC}) where each user can use any codebook with rate $R$.

The capacity region of the Gaussian MAC (\ref{nonModMAC}) was characterized by Ahlswede \cite{MACcapacity} through the following equations
\begin{align}
R_1&<\frac{1}{2}\log\left(1+\text{SNR}\right)\nonumber\\
R_2&<\frac{1}{2}\log\left(1+\gamma^2\text{SNR}\right)\nonumber\\
R_1+R_2&<\frac{1}{2}\log\left(1+(1+\gamma^2)\text{SNR}\right).\nonumber
\end{align}
It follows that the symmetric capacity (i.e. the maximum achievable $R=R_1=R_2$) is given by
\begin{align}
C=\min\bigg\{\frac{1}{2}\log&\left(1+\text{SNR}\right),\frac{1}{2}\log\left(1+\gamma^2\text{SNR}\right),\nonumber\\
&\frac{1}{4}\log\left(1+(1+\gamma^2)\text{SNR}\right)\bigg\}.
\label{Rrandom}
\end{align}
The achievable part of the capacity theorem is proved using two \emph{different random} codebooks, whereas we restrict both users to use the \emph{same linear} (over the group $\BI$ with the addition operation) codebook. The main result of this section is the following theorem.

\subsection{Main result and discussion}
\label{sec3}

\begin{theorem}[{MAC with one linear code}]
For the setting described in Section~\ref{subsec:problemStatement} (a two-user Gaussian MAC channel where both users use the same linear code), the following symmetric rate is achievable
\begin{align}
&R<\max_{p\in\mathcal{P}'(\gamma)}\min\nonumber\\
& \bigg\{-\frac{1}{2}\log\left(\frac{1}{p^2}+\sqrt{\frac{2\pi/3}{\text{SNR}}}+\frac{1}{p}e^{-\frac{3\text{SNR}}{2p^2}\delta^2(p,\gamma)}+2e^{-\frac{3\text{SNR}}{8}}\right),\nonumber\\
&-\log\left(\frac{1}{p}+\sqrt{\frac{2\pi/3}{\delta^2(p,\gamma)\text{SNR}}}+2e^{-\frac{3\text{SNR}}{8}}\right)\bigg\},
\label{Capacity}
\end{align}
where
\begin{align}
\delta(p,\gamma)=\min_{l\in\mathbb{Z}_p\backslash\{0\}} l\cdot \left|\gamma-\frac{\lfloor l\gamma\rceil}{l}\right|,
\label{deltaDef}
\end{align}
and\footnote{Replacing the constraint ${p\in\mathcal{P}'(\gamma)}$ with the constraint $p\in\mathcal{P}$ results in a negligible change in the rate region described by~(\ref{Capacity}),(\ref{deltaDef}) for values of $\gamma$ that are not very ``close'' (w.r.t. the SNR) to $\pm 1/2$.}
\begin{align}
\mathcal{P}'(\gamma)=\left[p\in\mathcal{P}\bigg|e^{-\frac{3\text{SNR}}{2p^2}\left(\gamma_{\bmod [-\frac{1}{4},\frac{1}{4})}\right)^2}<1-2p \cdot e^{-\frac{3\text{SNR}}{8}}\right]
\label{thDef}
\end{align}
\label{capacityTheorem}
\end{theorem}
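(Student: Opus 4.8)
The plan is to construct a good linear code as a dithered lattice code built from a $p$-ary linear code, and to analyze a lattice-based joint decoder. Concretely, I would start from a prime $p$ and a linear code over $\mathbb{Z}_p$, lift it to $\mathbb{R}^n$ via Construction A to obtain a lattice $\Lambda$, and then scale and dither so that the codewords lie (essentially) in $\BI^n$ with unit power. The single-linear-code constraint \eqref{linearity} is automatically satisfied because the codebook is a coset structure closed under addition modulo $\BI$: the key algebraic fact is that $X_1+\gamma X_2$, when $\gamma$ is well-approximated by a ratio $\lfloor l\gamma\rceil/l$, collapses the two codewords onto a lattice that is still tied to the same $p$-ary code. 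The reason $p$ must be prime is that $\mathbb{Z}_p$ is then a field, so every nonzero multiple $l$ (for $l\in\mathbb{Z}_p\setminus\{0\}$) acts invertibly, which is exactly what makes the quantity $\delta(p,\gamma)$ in \eqref{deltaDef} the relevant minimum-distance parameter.

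The heart of the argument is a distance/counting estimate at the decoder. After receiving $\mathbf{y}=[\mathbf{x}_1+\gamma\mathbf{x}_2+\mathbf{z}]^*$, I would have the decoder look for the pair of codewords whose combination $\mathbf{x}_1+\gamma\mathbf{x}_2$ is closest (modulo $\BI^n$) to $\mathbf{y}$. An error occurs only if some competing integer combination $l_1 f(w_1')+l_2 f(w_2')$ lands within noise distance of the true point. I would decompose the error event into two regimes, matching the two terms inside the $\min$ in \eqref{Capacity}: first, the event that the \emph{sum} codeword (the aligned interference-plus-signal) is confused, controlled by the lattice's coarse structure and giving the term with $1/p^2$ and the Gaussian tail $e^{-3\,\mathsf{SNR}/(2p^2)\delta^2}$; second, the event that the \emph{individual} messages are confused once the sum is fixed, giving the $-\log(1/p+\cdots)$ term. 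The factors $\sqrt{2\pi/3/\mathsf{SNR}}$ and $e^{-3\,\mathsf{SNR}/8}$ arise from bounding Gaussian integrals over the fundamental Voronoi region and from the probability that the noise leaves the basic interval $\BI=[-L/2,L/2)$ with $L=\sqrt{12}$, so that the modulo reduction does not wrap around.

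The technical engine I would use is a union/Poisson-type bound over lattice points combined with a careful averaging over the random choice of the $p$-ary code and over the dither; this is where the number-theoretic parameter $\delta(p,\gamma)$ enters as the effective separation between distinct interference alignments. The condition defining $\mathcal{P}'(\gamma)$ in \eqref{thDef} is precisely the requirement that the dominant confusable alignment (the one achieving $\gamma_{\bmod[-1/4,1/4)}$) stays below one, i.e.\ that the chosen prime is ``compatible'' with $\gamma$ in the sense that the noise cannot bridge the aligned gap with nonnegligible probability; I would verify that under this condition the error contribution from wrap-around alignments is dominated by the listed terms, and then maximize over admissible primes $p$.

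The main obstacle I expect is the joint treatment of the two confusion regimes without double-counting: because both users share the \emph{same} code, the competing error events are not independent, and the standard random-coding independence arguments for the ordinary Gaussian MAC \eqref{nonModMAC} do not apply. I would handle this by conditioning on the value of the aligned sum and analyzing the conditional distance spectrum, which reduces the individual-message error to a single-user lattice decoding problem whose minimum distance is governed by $\delta(p,\gamma)$. Controlling the cross terms between the ``sum'' and ``difference'' components, and showing that the averaged code achieves the per-term bounds simultaneously, is the delicate part; the remaining steps (Gaussian tail bounds, the $L=\sqrt{12}$ power normalization, and the maximization over $p\in\mathcal{P}'(\gamma)$) are comparatively routine.
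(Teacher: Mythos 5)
Your overall architecture does match the paper's: a Construction~A ensemble over a prime field $\mathbb{Z}_p$, scaled into $\BI^n$, a joint minimum modulo-distance decoder, a union bound with Gaussian tail estimates (your account of the $\sqrt{2\pi/3/\text{SNR}}$ and $2e^{-3\text{SNR}/8}$ terms, arising from folding the noise back into $\BI$, is essentially the paper's Lemma on replacing $\mathbf{Z}^*$ by $\mathbf{Z}$ with a search over $L\mathbb{T}^n$), the Diophantine quantity $\delta(p,\gamma)$ as an effective minimum distance, and the reading of membership in $\mathcal{P}'(\gamma)$ as a feasibility condition rather than a rate constraint. Primality is indeed used so that $\mathbb{Z}_p$ is a field (invertibility of nonzero multipliers, and full-rank linear maps preserving the i.i.d.\ uniform statistics of the ensemble), as you say.

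The genuine gap is in your central decomposition. You attribute the two terms in the $\min$ of~(\ref{Capacity}) to ``sum codeword confused'' versus ``individual messages confused once the sum is fixed,'' and propose conditioning on the aligned sum. For non-integer $\gamma$ this object does not exist: $[\mathbf{x}_1+\gamma\mathbf{x}_2]^*$ is not a codeword, so there is nothing in the code to condition on, and the map $(w_1,w_2)\mapsto[\mathbf{x}_1+\gamma\mathbf{x}_2]^*$ is generically injective precisely because the decoder must recover both messages. What the paper actually does is classify each competing pair $\{\mathbf{w}\bTh,\mathbf{w}\bFo\}$ by its linear dependence over $\mathbb{Z}_p$ on the transmitted pair $\{\mathbf{w}_1,\mathbf{w}_2\}$, giving four cases with very different \emph{competitor counts}: the fully independent case has $2^{2nR}$ competitors, which is the sole source of the factor $\tfrac{1}{2}$ in the first term (one needs $2R<-\log\Omega_A$), and the $1/p^2$ there is the per-coordinate collision probability of the pairwise difference $U=[X_1+\gamma X_2-X\bTh-\gamma X\bFo]^*$, not a sum-confusion event; the cases where exactly one competing vector is a linear combination of the other three have $\approx p^3 2^{nR}$ competitors and yield the second term with the $1/p$ and $\delta^{-1}$ dependence. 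Crucially, the condition defining $\mathcal{P}'(\gamma)$ comes from the remaining case in which \emph{both} competing vectors depend linearly on the transmitted pair — only $p^4$ pairs, constant in $n$ — whose worst sub-case is the self-alignment event $U=\left[d_2X_2+\varepsilon X_2-\varepsilon\left[dX_2\right]^*\right]^*$ with $\varepsilon=\gamma-\lfloor\gamma\rceil$, a codeword confused with integer multiples of itself; there the guaranteed separation degrades to $\frac{L}{p}\big|\varepsilon_{\bmod[-\frac{1}{4},\frac{1}{4})}\big|$ (the interval $[-\tfrac14,\tfrac14)$ emerging from the sub-case $r=p-1$, where the relevant offset is $2\varepsilon$), and because the number of such competitors is polynomial it suffices that the per-pair factor be strictly below one — which is verbatim the membership condition in~(\ref{thDef}). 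Your proposal contains no mechanism generating these self-alignment events, the $[-\tfrac14,\tfrac14)$ reduction, or the exponential-versus-polynomial competitor distinction that separates the rate terms from the $\mathcal{P}'(\gamma)$ constraint; relatedly, you omit that the decoder must restrict its search to linearly independent message-vector pairs, with the $O(p\cdot 2^{-nR})$ probability that the transmitted pair itself is dependent accounted for separately — without this restriction the independence statistics your union bound relies on do not hold.
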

\textbf{Discussion:}
Inspecting the equations describing the achievable rate region of Theorem~\ref{capacityTheorem}, the role of the optimization parameter $p$, and the factor $\delta(p,\gamma)$ may seem at first strange. The factor $\delta(p,\gamma)$ is a measure of how accurately $\gamma$ can be approximated by a rational number with a denominator smaller than $p$. For example, if $\gamma$ is a rational number that can be written in the form $\gamma=r/q$,  then for any $p>q$ we have $\delta(p,\gamma)=0$. If this is the case, only values of $p\leq q$ yield non-trivial rates in (\ref{Capacity}), which in turn implies that for any value of SNR the rate of (\ref{Capacity}) is smaller than $\log q$.\footnote{This phenomenon is unique to the case where the codebook used by the two transmitters is linear. In the case where both transmitters use the same random codebook, it can be shown that (at least) the cutoff rate is identical to that of two different random codebooks (except for the singular case where $\gamma=1$ where it is impossible to distinguish between the users)}

\vspace{2mm}
\noindent \underline{\emph{Example}:}

In order to informally explain the rate saturation phenomenon (why $R<\log q$ for any SNR), we consider an example where $\gamma=1/3$. Assume that user $1$ transmits the codeword $\mathbf{x}_1=f(w_1)$ and user $2$ transmits the codeword $\mathbf{x}_2=f(w_2)$. We next observe that there are roughly $2^{nR}$ (up to a polynomial multiplicative term) ``competing'' pairs of codewords that can cause an error with a probability no smaller than $\left(1/3\right)^n$. Using the union bound as an approximation for the average error probability of the code, this implies that the achievable rate can be no greater than $\log 3$.

From the linearity property of the code (\ref{linearity}), there exists\footnote{Here we assume that the mapping $[3\mathbf{x}]^*$ gives a different result for every $\mathbf{x}\in\mathcal{C}$, as is the case for the ensemble of codebooks we consider in the sequel.} a codeword $\mathbf{x}_4=f(w_4)$ such that
\begin{align}
\mathbf{x}_2=\left[3\mathbf{x}_4\right]^*.\nonumber
\end{align}
Furthermore, for any codeword $\mathbf{x}_5=f(w_5)$, which is independent of $\mathbf{x}_1$ and $\mathbf{x}_2$, there exists a codeword $\mathbf{x}_3=f(w_3)$ such that
\begin{align}
\mathbf{x}_3=\left[3\mathbf{x}_5\right]^*.\nonumber
\end{align}
Now consider the pair of competing codewords
\begin{align}
{\mathbf{x}}\bTh=\left[\mathbf{x}_1+ \mathbf{x}_4+\mathbf{x}_5\right]^*\nonumber
\end{align}
and
\begin{align}
{\mathbf{x}}\bFo=\left[-{\mathbf{x}}_3\right]^*\nonumber
\end{align}
which exists by the linearity of the codebook. After passing through the channel (\ref{modMAC}), the ``distance'' between the transmitted pair of codewords and the ``competing'' pair of codewords is
\begin{align}
\mathbf{U}&=\left[\mathbf{x}_1+\frac{1}{3}\mathbf{x}_2-{\mathbf{x}}\bTh-\frac{1}{3}{\mathbf{x}}\bFo\right]^*\nonumber\\
&=\bigg[\mathbf{x}_1+\frac{1}{3}\mathbf{x}_2-\mathbf{x}_1-\mathbf{x}_4\nonumber-\mathbf{x}_5-\frac{1}{3}\left[-{\mathbf{x}}_3\right]^*\bigg]^*\nonumber\\
&=\bigg[\left(\frac{1}{3}\left[3\mathbf{x}_4\right]^*-\mathbf{x}_4\right)-\left(\mathbf{x}_5+\frac{1}{3}\left[-3\mathbf{x}_5\right]^*\right)\bigg]^*.
\label{exampleEq}
\end{align}
The terms
\begin{align}
\frac{1}{3}\left[3\mathbf{x}_4\right]^*-\mathbf{x}_4\nonumber
\end{align}
and
\begin{align}
\mathbf{x}_5+\frac{1}{3}\left[-3\mathbf{x}_5\right]^*\nonumber
 \end{align}
 can only take values in the set $\left\{-L/3,0,L/3\right\}$, which implies that $U$ can only take values in $\left\{-L/3,0,L/3\right\}$ as well.

If we further assume that each of the codewords in the codebook has a memoryless uniform distribution over the interval{\footnote{A uniform distribution over a uniform grid inside $\BI$ would have the same effect.} $\BI$, we can conclude that $U$ has a memoryless uniform distribution over $\left\{-L/3,0,L/3\right\}$. This means that after passing through the channel, the transmitted pair $\left\{\mathbf{X}_1,\mathbf{X}_2\right\}$ is equivalent to the competing pair $\left\{{\mathbf{X}}\bTh,{\mathbf{X}}\bFo\right\}$ with probability $\left(1/3\right)^n$. Since there are about $2^{nR}$ possible choices for $\mathbf{x}_3$, we see that rates above $R=\log 3$ would result in error with high probability.

While this conclusion relies on the union bound, which may not be tight for some cases, since almost every pair of two codewords from a randomly drawn linear codebook are statistically independent, the union bound is in fact not a bad approximation in the considered problem.

\vspace{2mm}
\noindent \underline{\emph{Comparison with random codebooks}:}

In order to better understand the performance of our coding scheme, we compare the maximum symmetric rate it achieves, which we refer to as $R_{\text{lin}}(\text{SNR})$, with that achieved by a coding scheme that utilizes two different random codebooks. We refer to the latter symmetric rate as $R_{\text{rand}}(\text{SNR})$ which is given by (\ref{Rrandom}). Define the normalized rate
\begin{align}
r_{\text{norm}}(\text{SNR})=\frac{R_{\text{lin}}(\text{SNR})}{R_{\text{rand}}(\text{SNR})}.\label{RnormEq}
\end{align}
Figure~\ref{RateFig} depicts $r_{\text{norm}}(\text{SNR})$ as a function of $\gamma\in[0,0.5)$ for a range of moderate to high values of SNR, specifically $\text{SNR}=20\text{dB}$, $30\text{dB}$ and $40\text{dB}$. Figure~\ref{RateFig2} depicts $r_{\text{norm}}(\text{SNR})$ as a function of $\gamma\in[0,0.5)$ for extremely high values of SNR, namely $\text{SNR}=100\text{dB}$, $110\text{dB}$ and $120\text{dB}$.

Figures~\ref{RateFig} and \ref{RateFig2} demonstrate the sensitivity of the rate to the channel gains. For a range of ``reasonable'' values of SNR, the rate changes rather smoothly with $\gamma$. For extremely high SNR, however, a slight change in the value of $\gamma$ may dramatically change the achievable rate.

The figures also suggest that for almost every value of $\gamma$, the normalized rate $r_{\text{norm}}(\text{SNR})$ approaches one as the SNR tends to infinity. Thus, the symmetric rate achieved when both users are using the same linear code scales with the SNR as $R_{\text{rand}}(\text{SNR})$ for asymptotic SNR conditions.

\begin{figure}[htb]
\includegraphics[width=1 \columnwidth]{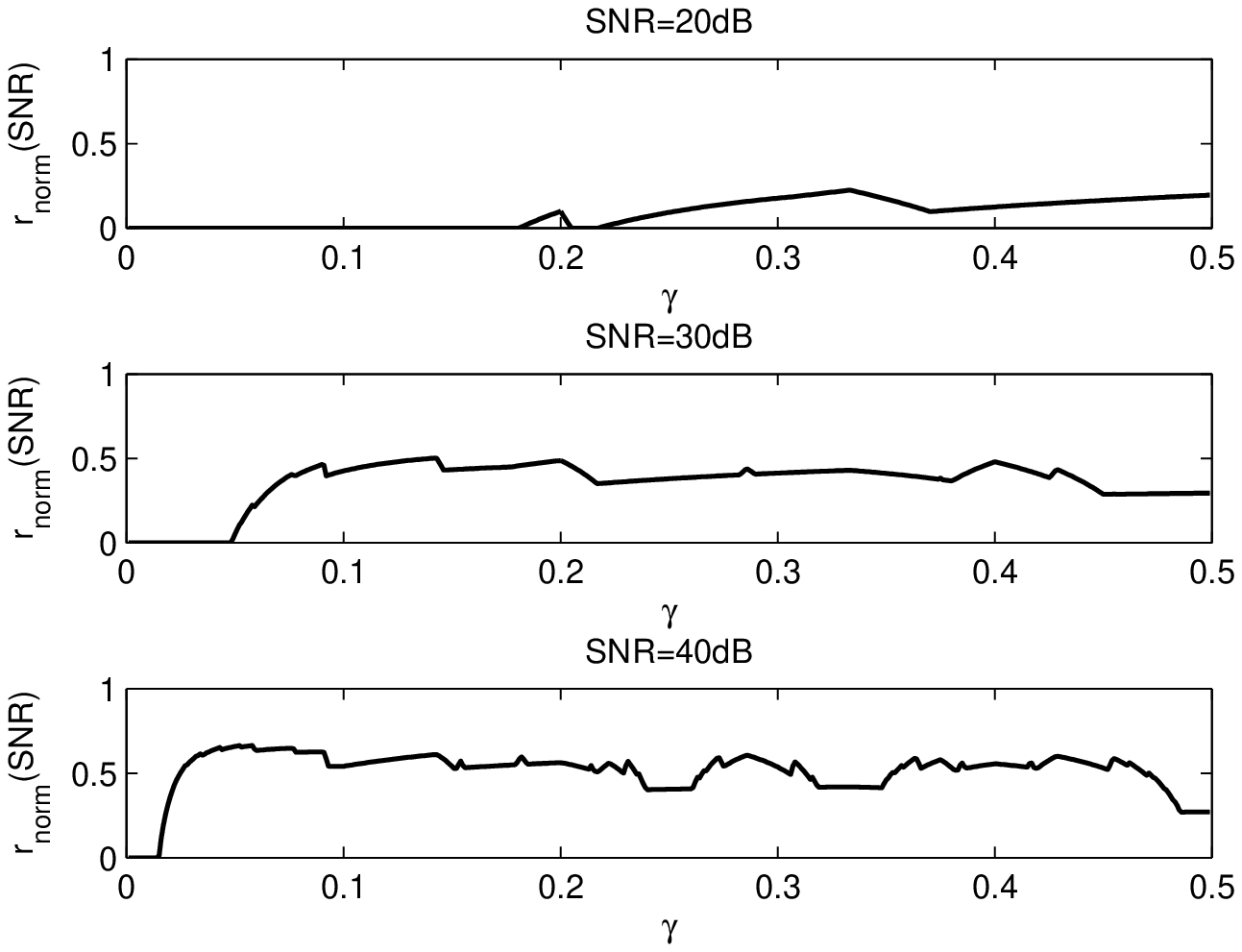}
\caption{$r_{\text{norm}}(\text{SNR})$ plotted as a function of $\gamma$ for $\text{SNR}=20\text{dB},30\text{dB},40\text{dB}$.}
\label{RateFig}
\end{figure}

\begin{figure}[htb]
\includegraphics[width=1 \columnwidth]{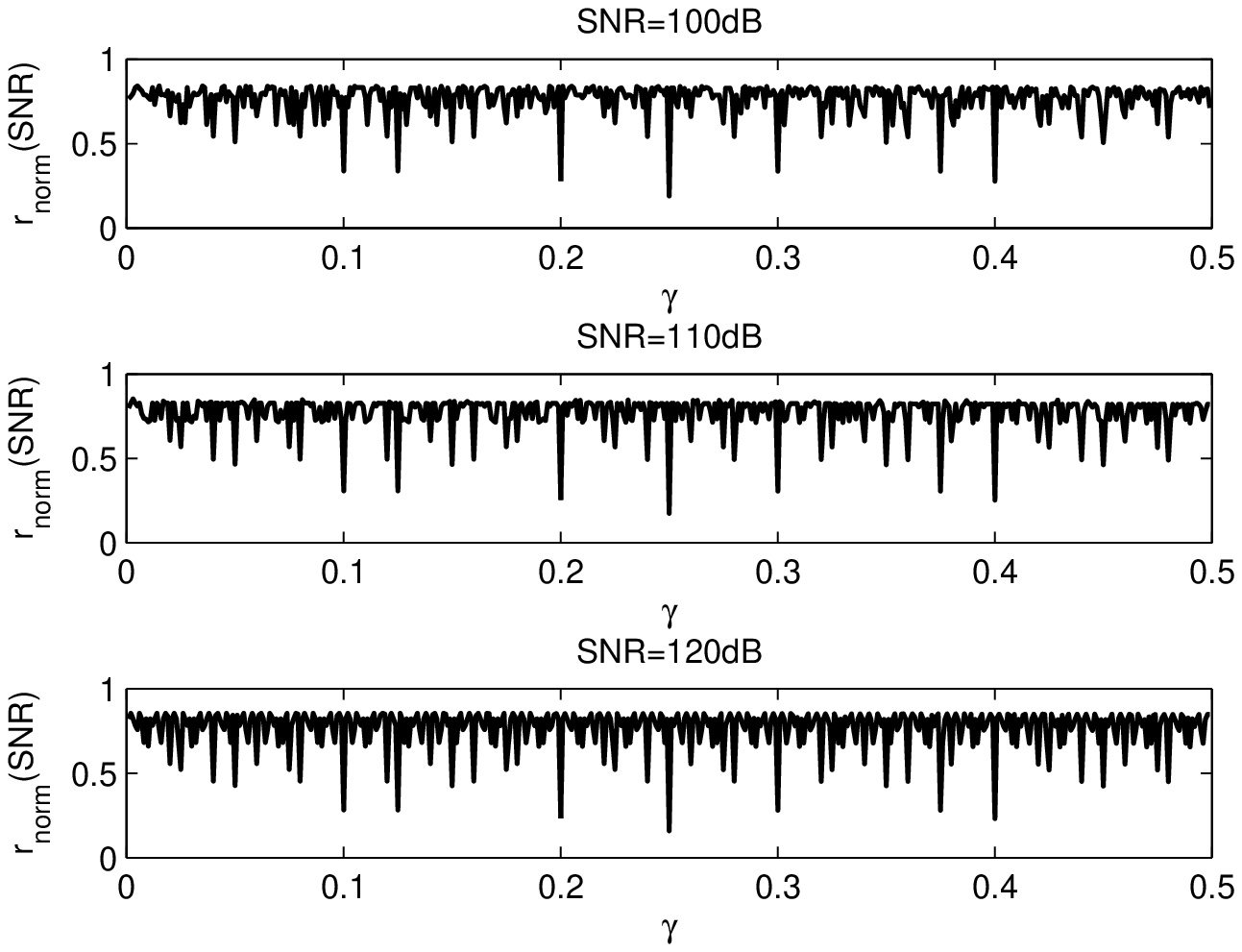}
\caption{$r_{\text{norm}}(\text{SNR})$ plotted as a function of $\gamma$ for $\text{SNR}=100\text{dB},110\text{dB},120\text{dB}$.}
\label{RateFig2}
\end{figure}

\vspace{2mm}

Note that Theorem~\ref{capacityTheorem} does not take into account shaping issues, since it uses a one-dimensional lattice as the coarse lattice. We have chosen not to pursue shaping in this paper in order to simplify the analysis. Moreover, the losses resulting from using the same linear code for both users outweigh the shaping loss, which can be upper bounded by a constant number of bits. Nevertheless, it is most likely that extending Theorem~\ref{capacityTheorem} to the case where both users use the same nested lattice codebook, with a ``good'' coarse lattice, can result in a higher symmetric rate for the two-user MAC.

\subsection{Proof of Theorem~\ref{capacityTheorem}}

We first describe the process of the code generation, the encoding and the decoding procedures, and then turn to analyze the error probability in decoding the transmitted messages.

\vspace{2mm}

\noindent \underline{\emph{Construction of linear codebook ensemble}:}
\vspace{1mm}

We begin by describing the generation process of the ensemble of linear codebooks considered, which is a variant of the well known Construction A (see e.g.~\cite{ConwaySloane}). A generating matrix $G$ of dimensions $k\times n$ is used, where all elements of $G$ are independently and randomly drawn according to the uniform distribution over the prime field $\mathbb{Z}_p$. We set
\begin{align}
k=n\frac{R}{\log p}.\nonumber
\end{align}
The set $\tilde{\mathcal{C}}$ is generated by multiplying all $k$-tuple vectors with elements from the field $\mathbb{Z}_p$ by the matrix $G$ (where all operations are over $\mathbb{Z}_p$)
\begin{align}
\tilde{\mathcal{C}}=\left\{\tilde{\mathbf{c}}=\mathbf{w}^T G \ | \ \mathbf{w}\in\mathbb{Z}_p^{k\times 1}\right\}.\nonumber
\end{align}
We refer to the vectors $\mathbf{w}$ as message vectors, and note that there are $2^{nR}$ such vectors, each corresponding to one of the possible messages.

Finally, the codebook $\mathcal{C}$ is generated from the set $\tilde{\mathcal{C}}$ by properly scaling and shifting it such that it meets the power constraint
\begin{align}
\mathcal{C}=\left[\frac{L}{p}\cdot \tilde{\mathcal{C}}\right]^*.\nonumber
\end{align}
The ensemble of codebooks created by the above procedure satisfies the following properties:
\begin{enumerate}
\item For any set of linearly independent message vectors, $\mathbf{w}_1,\mathbf{w}_2,\ldots,\mathbf{w}_l$, the corresponding codewords $\mathbf{X}_1,\mathbf{X}_2,\ldots,\mathbf{X}_l$ are statistically independent.
\item Each codeword $\mathbf{X}$ (except for the zero codeword) is memoryless:
\begin{align}
P(\mathbf{X})=\prod_{t=1}^{n} P(X_t).\nonumber
\end{align}
\item For any $\mathbf{w}\neq\mathbf{0}$ the corresponding codeword is uniformly distributed over the constellation
    \begin{align}
    \Lambda=\frac{L}{p}\left[\mathbb{Z}^n_p\right]_{\bmod [-\frac{p}{2},\frac{p}{2})}.\nonumber
    \end{align}
\item Each codeword $\mathbf{X}$ in the ensemble satisfies the power constraint
\begin{align}
\frac{1}{n}\mathbb{E}\left[ \|\mathbf{X}\|^2 \right] \leq 1.\nonumber
\end{align}
\item Each codebook in the ensemble satisfies the linearity constraint as defined in (\ref{linearity}).
\end{enumerate}

\vspace{2mm}

\noindent \underline{\emph{Encoding}:}
\vspace{1mm}

Suppose a codebook from the above ensemble, which is completely characterized by the matrix $G$, has been chosen.  User $i$ uniformly draws a message vector $\mathbf{w}_i$, and transmits
\begin{align}
\mathbf{x}_i=\left[\frac{L}{p}\mathbf{w}_i^T G\right]^*.
\label{codewordGeneration}
\end{align}
The channel output is thus
\begin{align}
\mathbf{y}=\left[\mathbf{x}_1+\gamma\mathbf{x}_2+\mathbf{z}\right]^*.\nonumber
\end{align}

\vspace{2mm}

\noindent \underline{\emph{Decoding}:}
\vspace{1mm}

Given the encoding matrix $G$ and the channel output $\mathbf{y}$, the decoder searches the pair of codewords $\left\{\mathbf{x}_i,\mathbf{x}_j\right\}$ for which
\begin{align}
\mathbf{\psi}(i,j)=\left[\mathbf{x}_i+\gamma\mathbf{x}_j\right]^*
\label{psiDef}
\end{align}
is closest to $\mathbf{y}$ in the following sense
\begin{align}
\left(\hat{i},\hat{j}\right)=\arg\min_{i,j}\left(\sum_{t=1}^n\left(\left[y_t-\psi_t(i,j)\right]^*\right)^2\right).
\label{decoderEq}
\end{align}
If there is more than one pair of indices satisfying (\ref{decoderEq}), an error is declared.

The decoder only searches over the pairs of codewords corresponding to message vectors $\left\{\mathbf{w}_i,\mathbf{w}_j\right\}$ that are linearly independent (over $\mathbb{Z}_p$).
This constraint on the decoder facilitates the analysis of the error probability since it means that when $G$ is assumed to be random, the decoder only searches over the pairs of codewords $\left\{\mathbf{X}_i,\mathbf{X}_j\right\}$ which are statistically independent. The above constraint implies that if the users had chosen message vectors $\left\{\mathbf{w}_i,\mathbf{w}_j\right\}$ which are linearly dependent, an error event occurs. For the rest of the analysis we assume that indeed the chosen message vectors are linearly independent, and as a consequence $\left\{\mathbf{X}_i,\mathbf{X}_j\right\}$ are statistically independent when $G$ is assumed to be random. We account for the probability of the error event that occurs when this is not the case, in the final step of the proof.

We note that the decision rule (\ref{decoderEq}) is an approximation of the maximum-likelihood decoder which searches for a pair of codewords $\left\{\mathbf{x}_i,\mathbf{x}_j\right\}$ that satisfies
\begin{align}
\Pr\left(\mathbf{y}|\mathbf{x}_i,\mathbf{x}_j\right)>\Pr\left(\mathbf{y}|\mathbf{x}_m,\mathbf{x}_l\right) \ \ \forall (m,l)\neq (i,j).\nonumber
\end{align}
The suboptimal decoding algorithm we use simplifies the analysis, but does not incur significant losses w.r.t. the optimal maximum-likelihood decoder (for the channel~(\ref{modMAC})).

\vspace{2mm}

\noindent \underline{\emph{Analysis of error probability}:}
\vspace{1mm}

We analyze the average error probability over the ensemble of codebooks described above, i.e., we assume the generating matrix $G$ is random, and average over all possible realizations of $G$.

Assume that the message vectors $\left\{\mathbf{w}_1,\mathbf{w}_2\right\}$ were chosen by users $1$ and $2$ respectively, such that codeword $\mathbf{X}_1$ was transmitted by user $1$, and $\mathbf{X}_2$ by user $2$. We first analyze the pairwise error probability, i.e., the probability of the decoder preferring a (different) specific pair of message vectors $\left\{\mathbf{w}\bTh,\mathbf{w}\bFo\right\}$, corresponding to the pair of codewords $\left\{\mathbf{X}\bTh,\mathbf{X}\bFo\right\}$, over the transmitted pair.

As we recall, due to the linear structure of the codebook, linear dependencies within the set of chosen and ``competing'' message vectors $\left\{\mathbf{w}_1,\mathbf{w}_2,\mathbf{w}\bTh,\mathbf{w}\bFo\right\}$ result in statistical dependencies within the set of transmitted and ``competing'' codewords $\left\{\mathbf{X}_1,\mathbf{X}_2,\mathbf{X}\bTh,\mathbf{X}\bFo\right\}$.
We are interested in the average pairwise error probability associated with each pair of ``competing'' message vectors $\left\{\mathbf{w}\bTh,\mathbf{w}\bFo\right\}$. Thus, the average pairwise error probability has to be analyzed w.r.t. each one of the possible statistical dependencies. We develop upper bounds on the average pairwise error probability associated with each type of statistical dependency, and then invoke the union bound in order to establish an upper bound on $\mathbb{E}[P_e]$, the average probability of the decoder not deciding on the correct pair of transmitted codewords. Using this bound, an achievable rate region is obtained.

Denote the pairwise error probability from the pair of message vectors $\left\{\mathbf{w}_1,\mathbf{w}_2\right\}$ to the pair $\left\{\mathbf{w}\bTh,\mathbf{w}\bFo\right\}$, for a given codebook in the ensemble, by $\tpe$, and the average pairwise error probability over the ensemble by $\mathbb{E}\left[\tpe\right]$.

We begin by deriving a general expression that upper bounds the average pairwise error probability $\mathbb{E}\left[\tpe\right]$ and then evaluate it for each of the possible statistical dependencies within the set $\left\{\mathbf{X}_1,\mathbf{X}_2,\mathbf{X}\bTh,\mathbf{X}\bFo\right\}$.

The decoder makes an error to the pair $\left\{\mathbf{X}\bTh,\mathbf{X}\bFo\right\}$ only if
\begin{align}
\sum_{t=1}^n\big(&\left[Y_t-\Psi_t(1,2)\right]^*\big)^2\geq\sum_{t=1}^n\big(\left[Y_t-\Psi_t(\Th,\Fo)\right]^*\big)^2,
\label{decoderError}
\end{align}
where $\Psi(i,j)$ is defined in~(\ref{psiDef}).
The condition in~(\ref{decoderError}) is equivalent to
\begin{align}
\sum_{t=1}^n&\left(Z_t^*\right)^2\geq\sum_{t=1}^n\left(\left[Z_t+\Psi_t(1,2)-\Psi_t(\Th,\Fo)\right]^*\right)^2.
\label{errorDeriv}
\end{align}
Define the \emph{pairwise difference} random variable
\begin{align}
U_t&=\left[\Psi_t(1,2)-\Psi_t(\Th,\Fo)\right]^*\nonumber\\
&=\left[X_{1,t}+\gamma X_{2,t}-X_{\Th,t}-\gamma X_{\Fo,t}\right]^*.
\label{Udefinition}
\end{align}
and vector
\begin{align}
\mathbf{U}=\left[U_1 \ U_2 \ \ldots \ U_n\right].
\label{Uvec_definition}
\end{align}
Note that the distribution of the pairwise difference vector $\mathbf{U}$ encapsulates the statistical dependencies in the set of codewords $\left\{\mathbf{X}_1,\mathbf{X}_2,\mathbf{X}\bTh,\mathbf{X}\bFo\right\}$. We first express our upper bounds on the average pairwise error probability as a function of the random vector $\mathbf{U}$, and only then account for the fact that the statistics of $\mathbf{U}$ vary with the different types of statistical dependencies between the transmitted and the ``competing'' pairs of codewords.

Substituting (\ref{Udefinition}) into (\ref{errorDeriv}), we have that an error occurs only if
\begin{align}
\sum_{t=1}^n(Z_t^*)^2\geq\sum_{t=1}^n\left(\left[Z_t+U_t\right]^*\right)^2.
\label{randomErrorEvent}
\end{align}
Given a specific codebook from the ensemble was chosen, $\mathbf{U}$ is deterministic, and~(\ref{randomErrorEvent}) implies
\begin{align}
\tpe&= \Pr\left(\left\|\mathbf{Z}^*\right\|^2
\geq\left\|\left[\mathbf{Z}+\mathbf{u}\right]^*\right\|^2\right)\nonumber\\
&= \Pr\left(\left\|\mathbf{Z}^*\right\|^2\geq\min_{\mathbf{v}\in L\mathbb{Z}^n}
\left\|\mathbf{Z}^*+\mathbf{u}+\mathbf{v}\right\|^2\right).
\label{foldedNoise1}
\end{align}
Let $\mathbb{T}^n=\left\{-1,0,1\right\}^n$. Since every coordinate of the vectors $\mathbf{u}$ and $\mathbf{z}^*$ has an absolute value smaller than $L/2$, the value of $\mathbf{v}\in L\mathbb{Z}^n$ that minimizes the expression $\left\|\mathbf{z}^*+\mathbf{u}+\mathbf{v}\right\|^2$ cannot have an absolute value greater than $L$ in any component, and it suffices to limit the search for it to $L\mathbb{T}^n$. Hence (\ref{foldedNoise1}) simplifies to
\begin{align}
\tpe=\Pr\left(\left\|\mathbf{Z}^*\right\|^2\geq\min_{\mathbf{v}\in L\mathbb{T}^n}
\left\|\mathbf{Z}^*+\mathbf{u}+\mathbf{v}\right\|^2\right).
\label{foldedNoise}
\end{align}
We now state a simple lemma that enables us to replace the folded Gaussian noise $\mathbf{Z}^*$ in (\ref{foldedNoise}) with a simple Gaussian noise $\mathbf{Z}$.

\vspace{2mm}
\begin{lemma}
\label{moduloLemma}

For $\mathbf{z}\in\mathbb{R}^n$, $\mathbf{u}\in\BI^n$ and the events
\begin{align}
E_1=\left\{\left\|\mathbf{z}^*\right\|^2\geq\min_{\mathbf{v}\in L\mathbb{T}^n}
\left\|\mathbf{z}^*+\mathbf{u}+\mathbf{v}\right\|^2\right\},
\label{event1}
\end{align}
and
\begin{align}
E_2=\left\{\left\|\mathbf{z}\right\|^2\geq\min_{\mathbf{\tilde{v}}\in L\mathbb{T}^n}
\left\|\mathbf{z}+\mathbf{u}+\mathbf{\tilde{v}}\right\|^2\right\},
\label{event2}
\end{align}
the following relation holds
\begin{align}
E_1\subseteq E_2.\nonumber
\end{align}
\end{lemma}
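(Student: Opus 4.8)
The plan is to prove the inclusion pointwise in $\mathbf z$ (with $\mathbf u$ fixed) and coordinate-by-coordinate. Both the squared norms and the minimizations over $L\mathbb{T}^n=\{-L,0,L\}^n$ separate across coordinates, so I would write
\[
\min_{\mathbf v\in L\mathbb{T}^n}\bigl\|\mathbf z+\mathbf u+\mathbf v\bigr\|^2=\sum_{t=1}^{n}\min_{v\in\{-L,0,L\}}(z_t+u_t+v)^2=:\sum_{t=1}^{n}b_t,
\]
and analogously define $a_t=\min_{v\in\{-L,0,L\}}(z_t^*+u_t+v)^2$ for the folded point. Then $E_1=\{\sum_t (z_t^*)^2\ge\sum_t a_t\}$ and $E_2=\{\sum_t z_t^2\ge\sum_t b_t\}$, and it suffices to establish the single-coordinate inequality $z_t^2-(z_t^*)^2\ge b_t-a_t$ for every $t$. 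Indeed, summing over $t$ gives $\|\mathbf z\|^2-\sum_t b_t\ge\|\mathbf z^*\|^2-\sum_t a_t$, and on $E_1$ the right-hand side is nonnegative, which is exactly the defining inequality of $E_2$.

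For the one-dimensional claim I would drop the index, write $z=z^*+Lm$ with $m\in\mathbb{Z}$ and $z^*\in\BI$, and recall $u\in\BI$. Because $|z^*+u|<L$, the discussion immediately preceding the lemma shows that for the \emph{folded} point the restricted minimization coincides with the minimization over all of $L\mathbb{Z}$, so $a=\mathrm{dist}(z^*+u,L\mathbb{Z})^2$ and in particular $a\le (L/2)^2$. By contrast $b=\mathrm{dist}(z+u,\{-L,0,L\})^2$ is the squared distance from the \emph{unfolded} point $z+u$ to the three-point set; since $\{-L,0,L\}\subset L\mathbb{Z}$ we always have $b\ge a\ge 0$, consistent with $z^2\ge(z^*)^2$. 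The target becomes $z^2-(z^*)^2\ge b-a$.

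I would finish by a short case analysis on $m$. For $m=0$ we have $z=z^*$, hence $b=a$ and the claim is the trivial $0\ge 0$. For $|m|\ge 2$ the nearest point of $\{-L,0,L\}$ to $z+u$ is the boundary point, giving (say for $m\ge 2$) $b=(z+u-L)^2$; dropping the nonnegative term $a$, it is enough to show $z^2-(z+u-L)^2\ge (z^*)^2$, and since $z=z^*+Lm\ge 3L/2$ while $u<L/2$, the left-hand side factors as $(L-u)(2z+u-L)$ and exceeds $\tfrac34 L^2>(z^*)^2$ with ample slack (the cases $m\le -2$ are entirely analogous). The delicate regime is $|m|=1$, where $b-a$ and $z^2-(z^*)^2$ are of the same order; here I would split according to where $w:=z^*+u\in[-L,L)$ lies relative to $\pm L/2$, compute $a$ and $b$ explicitly on each subinterval, and check that the resulting inequality always collapses to the valid bound $u<L$, which holds since $u\in\BI$.

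The only genuine work is the $|m|=1$ case: it is essentially tight, so the argument must invoke the support constraint $u\in\BI$ rather than any crude estimate. The conceptual point to be careful about is that the \emph{same} finite search set $L\mathbb{T}^n$ appears in both events; if one replaced $b$ by the distance to the full lattice $L\mathbb{Z}$ (which would equal $a$), the coordinatewise inequality would fail, so the restriction to $\{-L,0,L\}$ is precisely what the proof hinges on. The half-open nature of $\BI=[-L/2,L/2)$ introduces only measure-zero boundary ambiguities, which are irrelevant both to the set inclusion and to the probabilities in which the lemma is ultimately used.
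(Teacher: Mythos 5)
Your proof is correct and rests on the same reduction as the paper's --- both pass from the set inclusion to the deterministic inequality $\|\mathbf z\|^2-\min_{\mathbf{\tilde v}\in L\mathbb{T}^n}\|\mathbf z+\mathbf u+\mathbf{\tilde v}\|^2\geq\|\mathbf z^*\|^2-\min_{\mathbf v\in L\mathbb{T}^n}\|\mathbf z^*+\mathbf u+\mathbf v\|^2$ and verify it coordinate by coordinate --- but your case decomposition is genuinely different, and the difference is instructive. The paper takes $\mathbf v'=\arg\min_{\mathbf v\in L\mathbb{Z}^n}\|\mathbf z+\mathbf u+\mathbf v\|^2$ and splits coordinates according to whether $|v'(s)|\leq L$: when it is, both restricted minima equal the (common, by periodicity) squared distance to the full lattice, so $b_s=a_s$ and only $z_s^2\geq(z_s^*)^2$ is needed; when it is not, $|z_s|>L$ must hold, and the crude estimates $z^2-b\geq L|z|-L^2/4>3L^2/4$ versus $(z^*)^2-a\leq L^2/4$ settle the coordinate with slack. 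That split dissolves the tight regime entirely. Your split by the fold integer $m$ in $z=z^*+Lm$ instead concentrates all the difficulty in $|m|=1$, which you correctly identify as essentially tight and resolve by the subinterval computation: in the only nontrivial subcase (say $m=1$ and $w:=z^*+u\geq L/2$, where $b=w^2$ and $a=(w-L)^2$) the target $2Lz^*+L^2\geq b-a=2Lw-L^2$ reduces to $2L(L-u)\geq 0$, which holds since $u\in\BI$ --- exactly the collapse to $u\leq L$ you predicted; your $|m|\geq2$ bound via the factorization $(L-u)(2z+u-L)>\tfrac{3}{4}L^2\geq(z^*)^2$ is also fine. So both proofs work: the paper's less obvious split buys uniformly crude estimates, while your more natural split costs one exact computation at $|m|=1$.

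One side remark in your closing paragraph is wrong, though harmless to the proof: if $b$ were replaced by the squared distance of $z+u$ to all of $L\mathbb{Z}$, it would \emph{equal} $a$ by periodicity, and the coordinatewise inequality $z^2-(z^*)^2\geq b-a$ would then hold trivially rather than fail; enlarging the search set in $E_2$ only enlarges $E_2$ and weakens the statement. The restriction to $L\mathbb{T}^n$ is therefore not what the inclusion hinges on --- it is needed in the step that \emph{uses} the lemma, where the union bound must run over a finite shift set, producing the three terms per coordinate and the $2e^{-\frac{3\text{SNR}}{8}}$ correction in $\Omega$.
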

\vspace{2mm}

\begin{proof}
See Appendix~\ref{proofOfLemmaMod}.
\end{proof}
\vspace{2mm}

The next lemma provides an upper bound on $\mathbb{E}\left[\tpe\right]$, the average pairwise error probability over the ensemble, that depends only on $\mathbb{E}\left[\exp\left\{-\frac{\text{SNR}}{8}U^2\right\}\right]$.

\vspace{2mm}

\begin{lemma}
The average pairwise error probability over the ensemble is upper bounded by
\begin{align}
\mathbb{E}\left[\tpe\right]\leq\Omega^n,\nonumber
\end{align}
where
\begin{align}
\Omega=\mathbb{E}\left[\exp\left\{-\frac{\text{SNR}}{8}U^2\right\}\right]+2\exp\left\{-\frac{3\text{SNR}}{8}\right\}.
\label{OmegaEq}
\end{align}
\end{lemma}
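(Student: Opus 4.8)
The plan is to start from the already-simplified pairwise-error expression (\ref{foldedNoise}) and first strip off the modulo folding of the noise. Applying Lemma~\ref{moduloLemma} with the realized difference vector $\mathbf{u}$ gives $E_1\subseteq E_2$, so for a fixed codebook in the ensemble $\tpe\le\Pr(E_2)$; that is, the folded noise $\mathbf{Z}^*$ may be replaced by an ordinary i.i.d.\ Gaussian vector $\mathbf{Z}$ of variance $1/\text{SNR}$. The event $E_2$ is a union over the finite set $L\mathbb{T}^n=\{-L,0,L\}^n$ of the elementary events $\{\|\mathbf{Z}\|^2\ge\|\mathbf{Z}+\mathbf{u}+\tilde{\mathbf{v}}\|^2\}$, so I would invoke the union bound and control each term separately.

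For a fixed shift $\tilde{\mathbf{v}}$, writing $\mathbf{w}=\mathbf{u}+\tilde{\mathbf{v}}$ and expanding $\|\mathbf{Z}+\mathbf{w}\|^2=\|\mathbf{Z}\|^2+2\inner{\mathbf{Z}}{\mathbf{w}}+\|\mathbf{w}\|^2$ reduces the elementary event to $\inner{\mathbf{Z}}{\mathbf{w}}\le-\tfrac{1}{2}\|\mathbf{w}\|^2$. Since $\inner{\mathbf{Z}}{\mathbf{w}}$ is a zero-mean Gaussian of variance $\|\mathbf{w}\|^2/\text{SNR}$, the standard tail estimate $Q(x)\le e^{-x^2/2}$ yields the per-shift bound $\exp\{-\text{SNR}\,\|\mathbf{u}+\tilde{\mathbf{v}}\|^2/8\}$. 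Summing this over $\tilde{\mathbf{v}}\in L\mathbb{T}^n$ factorizes across coordinates, so the union bound becomes $\prod_{t=1}^n\big(e^{-\text{SNR}\,u_t^2/8}+e^{-\text{SNR}(u_t+L)^2/8}+e^{-\text{SNR}(u_t-L)^2/8}\big)$. The key elementary observation is that $u_t\in\BI=[-L/2,L/2)$ forces $(u_t\pm L)^2\ge (L/2)^2=L^2/4=3$, so each of the two shifted exponentials is at most $e^{-3\text{SNR}/8}$. This gives, for \emph{every} realized codebook, $\tpe\le\prod_{t=1}^n\big(e^{-\text{SNR}\,u_t^2/8}+2e^{-3\text{SNR}/8}\big)$.

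Finally I would average over the codebook ensemble. Taking the expectation of this product bound and using that the columns of $G$ are drawn independently, the four coordinatewise entries $(X_{1,t},X_{2,t},X_{\Th,t},X_{\Fo,t})$ depend only on column $t$ of $G$ and are therefore i.i.d.\ across $t$; hence the pairwise-difference vector $\mathbf{U}$ in (\ref{Udefinition}) is memoryless. Consequently the expectation of the product equals the product of the expectations, and each factor collapses to the common value $\Omega=\mathbb{E}[e^{-\text{SNR}\,U^2/8}]+2e^{-3\text{SNR}/8}$, giving $\mathbb{E}[\tpe]\le\Omega^n$. I expect the main subtlety to lie precisely in this last factorization step, namely in verifying that the coordinatewise four-tuples — and hence $U_t$ — are genuinely i.i.d.\ across $t$ under the Construction-A ensemble, so that the ensemble average of the product factorizes cleanly into $\Omega^n$; by contrast the Gaussian tail estimate and the $(u_t\pm L)^2\ge 3$ bound are routine.
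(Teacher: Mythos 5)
Your proof is correct and takes essentially the same route as the paper's: Lemma~\ref{moduloLemma} to replace the folded noise by ordinary Gaussian noise, a union bound over $L\mathbb{T}^n$ with the tail estimate $Q(\tau)\leq e^{-\tau^2/2}$, and the memorylessness of $\mathbf{U}$ under the Construction-A ensemble to factorize the bound into $\Omega^n$. The only (immaterial) difference is the order of operations---you perform the sum--product interchange and the $(u_t\pm L)^2\geq L^2/4=3$ bound deterministically for each codebook before averaging, whereas the paper averages first and then interchanges and bounds---and the i.i.d.-across-coordinates property you flag as the main subtlety is exactly what the paper's stated ensemble properties (memoryless codewords, hence memoryless and identically distributed $U_t$) guarantee.
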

\vspace{2mm}
\begin{proof}
Using Lemma~\ref{moduloLemma}, we have
\begin{align}
\tpe&=\Pr\left(\left\|\mathbf{Z}^*\right\|^2\geq\min_{\mathbf{v}\in L\mathbb{T}^n}
\left\|\mathbf{Z}^*+\mathbf{u}+\mathbf{v}\right\|^2\right)\nonumber\\
&\leq\Pr\left(\left\|\mathbf{Z}\right\|^2\geq\min_{\mathbf{\tilde{v}}\in L\mathbb{T}^n}
\left\|\mathbf{Z}+\mathbf{u}+\mathbf{\tilde{v}}\right\|^2\right)\nonumber\\
&=\Pr\left(\bigcup_{\mathbf{\tilde{v}}\in L\mathbb{T}^n}\left(\left\|\mathbf{Z}\right\|^2\geq\
\left\|\mathbf{Z}+\mathbf{u}+\mathbf{\tilde{v}}\right\|^2\right)\right).
\label{nonFoldedGauss}
\end{align}
Using the union bound,~(\ref{nonFoldedGauss}) can be further bounded by
\begin{align}
\tpe&\leq\sum_{\mathbf{\tilde{v}}\in L\mathbb{T}^n}\Pr\left(\left\|\mathbf{Z}\right\|^2\geq\
\left\|\mathbf{Z}+\mathbf{u}+\mathbf{\tilde{v}}\right\|^2\right)\nonumber\\
&=\sum_{\mathbf{\tilde{v}}\in L\mathbb{T}^n}\Pr\left(-(\mathbf{u}+\mathbf{\tilde{v}})^T\mathbf{Z}\geq\
\frac{1}{2}\left\|\mathbf{u}+\mathbf{\tilde{v}}\right\|^2\right).
\label{gaussUB}
\end{align}
Since $\mathbf{Z}$ is a vector of i.i.d. Gaussian components with zero mean and variance $1/\text{SNR}$, the random variable $-(\mathbf{u}+\mathbf{\tilde{v}})^T\mathbf{Z}$ is Gaussian with zero mean and variance $\|\mathbf{u}+\mathbf{\tilde{v}}\|^2/\text{SNR}$. Using the notation
\begin{align}
Q(\tau)=\int_{\tau}^{\infty}\frac{1}{\sqrt{2\pi}}\exp\left\{-\frac{1}{2}\tau^2\right\} d\tau,\nonumber
\end{align}
and recalling that
\begin{align}
Q(\tau)\leq\exp\left\{-\frac{1}{2}\tau^2\right\},\nonumber
\end{align}
(\ref{gaussUB}) becomes
\begin{align}
\tpe&\leq\sum_{\mathbf{\tilde{v}}\in L\mathbb{T}^n}Q\left(\frac{\sqrt{\text{SNR}}}{2}\|\mathbf{u}+\mathbf{\tilde{v}}\|\right)\nonumber\\
&\leq\sum_{\mathbf{\tilde{v}}\in L\mathbb{T}^n}\exp\left\{-\frac{\text{SNR}}{8}\|\mathbf{u}+\mathbf{\tilde{v}}\|^2\right\}.\nonumber
\end{align}
In order to find the average (over the ensemble) pairwise error probability, we need to average $\tpe$ according to the distribution of $\mathbf{U}$
\begin{align}
\mathbb{E}\left[\tpe\right]\leq\sum_{\mathbf{\tilde{v}}\in L\mathbb{T}^n}\mathbb{E}&\bigg[\exp\left\{-\frac{\text{SNR}}{8}\|\mathbf{U}+\mathbf{\tilde{v}}\|^2\right\}\bigg].
\label{PeExp1}
\end{align}
Since the code generation is memoryless, $\mathbf{U}$ is also memoryless, and (\ref{PeExp1}) can be rewritten as
\begin{align}
\mathbb{E}[\tpe]\leq\sum_{\mathbf{\tilde{v}}\in L\mathbb{T}^n}\prod_{t=1}^n
\mathbb{E}\left[\exp\left\{-\frac{\text{SNR}}{8}(U_t+\tilde{v}_t)^2\right\}\right].
\label{PeExp2}
\end{align}
Equation~(\ref{PeExp2}) can be further simplified by replacing the order between the sum and the product
\begin{align}
&\mathbb{E}[\tpe]\leq\prod_{t=1}^n\sum_{{\tilde{v}}\in L\mathbb{T}}
\mathbb{E}\left[\exp\left\{-\frac{\text{SNR}}{8}(U_t+\tilde{v})^2\right\}\right]\nonumber\\
&=\left(\sum_{{\tilde{v}}\in L\mathbb{T}}\mathbb{E}\left[\exp\left\{-\frac{\text{SNR}}{8}(U+\tilde{v})^2\right\}\right]\right)^n\label{identU}\\
&=\bigg(\mathbb{E}\left[\exp\left\{-\frac{\text{SNR}}{8}(U-L)^2\right\}\right]+\mathbb{E}\left[\exp\left\{-\frac{\text{SNR}}{8}U^2\right\}\right]\nonumber\\
&+\mathbb{E}\left[\exp\left\{-\frac{\text{SNR}}{8}(U+L)^2\right\}\right]\bigg)^n\nonumber\\
&\leq \left(\mathbb{E}\left[\exp\left\{-\frac{\text{SNR}}{8}U^2\right\}\right]+2\exp\left\{-\frac{\text{SNR}}{8}\frac{L^2}{4}\right\}\right)^n\label{boundedU}\\
&=\left(\mathbb{E}\left[\exp\left\{-\frac{\text{SNR}}{8}U^2\right\}\right]+2\exp\left\{-\frac{3\text{SNR}}{8}\right\}\right)^n\nonumber\\
&=\Omega^n\nonumber
\end{align}
where~(\ref{identU}) follows from the fact that the random variables $\left\{U_t\right\}_{t=1}^n$ are identically distributed, and~(\ref{boundedU}) is true since $|U|\leq L/2$, and thus $(U+L)^2\geq L^2/4$ as well as $(U-L)^2\geq L^2/4$.
\end{proof}
\vspace{2mm}
In order to obtain an explicit upper bound on the average (over the ensemble) pairwise error probability, we are left with the task of calculating $\Omega$, or equivalently calculating $\mathbb{E}\left[\exp\left\{-\frac{\text{SNR}}{8}U^2\right\}\right]$.

We recall that $\mathbf{U}$ is a deterministic function of the pair of transmitted codewords $\left\{\mathbf{X}_1,\mathbf{X}_2\right\}$ and the pair of ``competing'' codewords $\left\{\mathbf{X}\bTh,\mathbf{X}\bFo\right\}$, where each one of the codewords is generated as specified in~(\ref{codewordGeneration}). The statistical dependencies within the set of codewords $\left\{\mathbf{X}_1,\mathbf{X}_2,\mathbf{X}\bTh,\mathbf{X}\bFo\right\}$ correspond to the linear dependencies within the set of message vectors $\left\{\mathbf{w}_1,\mathbf{w}_2,\mathbf{w}\bTh,\mathbf{w}\bFo\right\}$. Since we assumed the message vectors $\left\{\mathbf{w}_1,\mathbf{w}_2\right\}$ are linearly independent,\footnote{The message vectors $\left\{\mathbf{w}\bTh,\mathbf{w}\bFo\right\}$ are also linearly independent, as the decoder only searches over the pairs of linearly independent message vectors.} there are only four possible cases of linear dependencies within the set $\left\{\mathbf{w}_1,\mathbf{w}_2,\mathbf{w}\bTh,\mathbf{w}\bFo\right\}$:

\emph{Case A}: The four vectors $\left\{\mathbf{w}_1,\mathbf{w}_2,\mathbf{w}\bTh,\mathbf{w}\bFo\right\}$ are linearly independent.

\emph{Case B}: The vectors $\left\{\mathbf{w}_1,\mathbf{w}_2,\mathbf{w}\bTh\right\}$ are linearly independent and $\mathbf{w}\bFo$ is a linear combination of them.

\emph{Case C}: The vectors $\left\{\mathbf{w}_1,\mathbf{w}_2,\mathbf{w}\bFo\right\}$ are linearly independent and $\mathbf{w}\bTh$ is a linear combination of them.

\emph{Case D}: The vectors $\left\{\mathbf{w}_1,\mathbf{w}_2\right\}$ are linearly independent and both $\mathbf{w}\bTh$ and $\mathbf{w}\bFo$ are linear combination of them.

\vspace{2mm}

Each case of statistical dependencies induces a different distribution on $U$. Thus for the calculation of $\mathbb{E}\left[\exp\left\{-\frac{\text{SNR}}{8}U^2\right\}\right]$, each case should be considered separately. To that end, we now give upper bounds on $\Omega$ for the four different possible cases.  The derivations of these bounds are given in Appendix~\ref{dependencies}.
\vspace{2mm}

\emph{Case A}:  The codewords $\{\mathbf{X}_1,\mathbf{X}_2,\mathbf{X}\bTh,\mathbf{X}\bFo\}$ are all statistically independent. Given $\left\{\mathbf{w}_1,\mathbf{w}_2\right\}$, there are less than $2^{2nR}$ pairs of competing message vectors $\left\{\mathbf{w}\bTh,\mathbf{w}\bFo\right\}$ that incur this kind of statistical dependency. Denote by $\Omega_A$ the value of $\Omega$ associated with case $A$. We have
\begin{align}
\Omega_A<\frac{1}{p^2}+\sqrt{\frac{2\pi/3}{\text{SNR}}}+\frac{1}{p}e^{-\frac{3\text{SNR}}{2p^2}\delta^2(p,\gamma)}+2e^{-\frac{3\text{SNR}}{8}},
\label{PeA}
\end{align}
where
\begin{align}
\delta(p,\gamma)=\min_{l\in\mathbb{Z}_p\backslash\{0\}} l\cdot \left|\gamma-\frac{\lfloor l\gamma\rceil}{l}\right|.
\label{delta}
\end{align}
\vspace{2mm}

\emph{Case B}: The codewords $\{\mathbf{X}_1,\mathbf{X}_2,\mathbf{X}\bTh\}$ are statistically independent and
\begin{align}
\mathbf{X}\bFo=\left[a\mathbf{X}_1+b\mathbf{X}_2+c\mathbf{X}\bTh\right]^*,\nonumber
\end{align}
where $a$, $b$, and $c$ can take any value in $\mathbb{Z}_p$. Given $\left\{\mathbf{w}_1,\mathbf{w}_2\right\}$, there are no more than $p^3 2^{nR}$ pairs of competing message vectors $\left\{\mathbf{w}\bTh,\mathbf{w}\bFo\right\}$ that incur this kind of statistical dependency. Denote by $\Omega_B$ the value of $\Omega$ associated with case $B$. We have
\begin{align}
\Omega_B<\frac{1}{p}+\sqrt{\frac{2\pi/3}{\delta^2(p,\gamma)\text{SNR}}}+2e^{-\frac{3\text{SNR}}{8}},
\label{PeB}
\end{align}
where $\delta(p,\gamma)$ is as in (\ref{delta}).

\vspace{2mm}

\emph{Case C}: The codewords $\{\mathbf{X}_1,\mathbf{X}_2,\mathbf{X}\bFo\}$ are statistically independent and
\begin{align}
\mathbf{X}\bTh=\left[a\mathbf{X}_1+b\mathbf{X}_2+c\mathbf{X}\bFo\right]^*,\nonumber
\end{align}
 where $a$, $b$, and $c$ can take any value in $\mathbb{Z}_p$. Given $\left\{\mathbf{w}_1,\mathbf{w}_2\right\}$, there are no more than $p^3 2^{nR}$ pairs of competing message vectors $\left\{\mathbf{w}\bTh,\mathbf{w}\bFo\right\}$ that incur this kind of statistical dependency. Denote by $\Omega_C$ the value of $\Omega$ associated with case $C$. We have
\begin{align}
\Omega_C<\frac{1}{p}+\sqrt{\frac{2\pi/3}{\delta^2(p,\gamma)\text{SNR}}}+2e^{-\frac{3\text{SNR}}{8}},
\label{PeC}
\end{align}
where $\delta(p,\gamma)$ is as in (\ref{delta}). Note that although the bounds (\ref{PeB}) and (\ref{PeC}) are identical, cases B and C
are not identical (i.e., there is no symmetry) since the two codewords $\mathbf{X}\bTh$ and $\mathbf{X}\bFo$ play a different role in the pairwise difference vector $\mathbf{U}$, as
$\mathbf{X}\bFo$ is multiplied by $\gamma$ while $\mathbf{X}\bTh$ is not.

\vspace{2mm}

\emph{Case D}: The codewords $\{\mathbf{X}_1,\mathbf{X}_2\}$ are statistically independent, whereas
\begin{align}
\mathbf{X}\bTh=\left[a\mathbf{X}_1+b\mathbf{X}_2\right]^*,\nonumber
\end{align}
and
\begin{align}
\mathbf{X}\bFo=\left[c\mathbf{X}_1+d\mathbf{X}_2\right]^*,\nonumber
\end{align}
where $a$, $b$, $c$ and $d$ can take any value in $\mathbb{Z}_p$, except for $a=1,b=0,c=0,d=1$ (in which case $\mathbf{X}\bTh=\mathbf{X}_1$ and $\mathbf{X}\bFo=\mathbf{X}_2$). Given $\left\{\mathbf{w}_1,\mathbf{w}_2\right\}$, there are no more than $p^4$ pairs of competing message vectors $\left\{\mathbf{w}\bTh,\mathbf{w}\bFo\right\}$ that incur this kind of statistical dependency. Denote by $\Omega_D$ the value of $\Omega$ associated with case $D$. We have
\begin{align}
\Omega_D<\max\bigg\{&\frac{1}{p}+\sqrt{\frac{2\pi/3}{\delta^2(p,\gamma)\text{SNR}}}+2e^{-\frac{3\text{SNR}}{8}},\nonumber\\
&\frac{p-1}{p}+\frac{1}{p}e^{-\frac{3\text{SNR}}{2p^2}\left(\gamma_{\bmod [-\frac{1}{4},\frac{1}{4})}\right)^2}+2e^{-\frac{3\text{SNR}}{8}}\bigg\}.
\label{PeD}
\end{align}
where $\delta(p,\gamma)$ is given in (\ref{delta}).

\vspace{2mm}

We can now establish the theorem. Denote by $\mathbb{E}\left[{P}_{e,\text{pair},i}\right], \ i=A,B,C,D$, the average error probability associated with each case of statistical dependencies. We recall that the decoder in our scheme only searches over the pairs of codewords corresponding to message vectors $\left\{\mathbf{w}_i,\mathbf{w}_j
\right\}$ that are linearly independent. Thus, an error event occurs if the message vectors $\left\{\mathbf{w}_1,\mathbf{w}_2
\right\}$ chosen by the users are linearly dependant. Denote by ${P}_{e,E}$ the probability of this event (which is independent of the codebook).
By basic combinatorics
\begin{align}
{P}_{e,E}=(p+1)\cdot 2^{-nR}-p\cdot2^{-2nR}<2p\cdot2^{-nR}.\nonumber
\end{align}
Using the union bound, the average error probability over the ensemble can be upper bounded by
\begin{align}
\mathbb{E}\left[P_e\right]&\leq 2^{2nR}\cdot \mathbb{E}[{P}_{e,\text{pair},A}]+p^3 \cdot 2^{nR} \cdot \mathbb{E}[{P}_{e,\text{pair},B}]\nonumber\\
&+p^3 \cdot 2^{nR} \cdot \mathbb{E}[{P}_{e,\text{pair},C}]+p^4 \cdot \mathbb{E}[{P}_{e,\text{pair},D}]+{P}_{e,E}\nonumber\\
&\leq 2^{2n\left(R+\frac{1}{2}\log\Omega_A\right)}\nonumber\\
&+2^{n\left(R+3\frac{\log p}{n}+\log\Omega_B\right)}\nonumber\\
&+2^{n\left(R+3\frac{\log p}{n}+\log\Omega_C\right)}\nonumber\\
&+2^{n\left(\frac{4\log p}{n}+\log\Omega_D\right)}\nonumber\\
&+2^{n\left(-R+\frac{\log 2p}{n}\right)}.\nonumber
\end{align}
Holding $p$ constant and taking $n$ to infinity we see that the average error probability goes to zero if
\begin{align}
R<&-\frac{1}{2}\log\Omega_A,\label{R1a}\\
R<&-\log\Omega_B,\label{R2a}\\
R<&-\log\Omega_C,\label{R3a}\\
0>&\log \Omega_D.\label{R4a}
\end{align}
The conditions (\ref{R1a}), (\ref{R2a}) and~(\ref{R3a}) imply that the rate should be taken to satisfy
\begin{align}
R<&\min \nonumber\\ &\bigg\{-\frac{1}{2}\log\left(\frac{1}{p^2}+\sqrt{\frac{2\pi/3}{\text{SNR}}}+\frac{1}{p}e^{-\frac{3\text{SNR}}{2p^2}\delta^2(p,\gamma)}+2e^{-\frac{3\text{SNR}}{8}}\right)\nonumber\\
&-\log\left(\frac{1}{p}+\sqrt{\frac{2\pi/3}{\delta^2(p,\gamma)\text{SNR}}}+2e^{-\frac{3\text{SNR}}{8}}\right)\bigg\},
\label{RateExp}
\end{align}
whereas condition (\ref{R4a}) implies
\begin{align}
\frac{1}{p}+\sqrt{\frac{2\pi/3}{\delta^2(p,\gamma)\text{SNR}}}+2e^{-\frac{3\text{SNR}}{8}}<1,
\label{CondD4a}
\end{align}
and
\begin{align}
\frac{p-1}{p}+\frac{1}{p}e^{-\frac{3\text{SNR}}{2p^2}\left(\gamma_{\bmod [-\frac{1}{4},\frac{1}{4})}\right)^2}+2e^{-\frac{3\text{SNR}}{8}}<1.
\label{CondD4b}
\end{align}
Condition~(\ref{CondD4a}) is satisfied for any positive rate, since it is contained in~(\ref{RateExp}). Condition~(\ref{CondD4a})
is equivalent to
\begin{align}
e^{-\frac{3\text{SNR}}{2p^2}\left(\gamma_{\bmod [-\frac{1}{4},\frac{1}{4})}\right)^2}<1-2pe^{-\frac{3\text{SNR}}{8}}.
\label{pCond}
\end{align}
Since any prime value of $p$ that satisfies (\ref{pCond}) is valid, we can maximize (\ref{RateExp}) over all prime values of $p$ satisfying (\ref{pCond}), i.e. over all values in $\mathcal{P}'(\gamma)$ as defined in~(\ref{thDef}), which yields (\ref{Capacity}). Finally, since there must be at least one codebook in the ensemble with a smaller (or equal) error probability than the average over the ensemble, the theorem is proved.

\section{Application to interference alignment}
\label{sec:intAlignment}

In the previous section we found an achievable symmetric rate for the Gaussian modulo-additive MAC channel where both users use the same linear codebook. The motivation for developing such a coding scheme is to enable \emph{lattice interference alignment}.

Assume a receiver observes a linear combination of codewords transmitted by several users (corrupted by noise) and is interested in decoding only one of the codewords, namely the received signal is
\begin{align}
\mathbf{y}=h_1\mathbf{x}_1+\sum_{k=2}^K h_k \mathbf{x}_k+\mathbf{z},\nonumber
\end{align}
where $\left\{h_k\right\}_{k=1}^K$ are the channel gains, $\mathbf{x}_1$ is the desired codeword, $\left\{\mathbf{x}_k\right\}_{k=2}^K$ are the interfering codewords, and $\mathbf{z}$ is a vector of i.i.d. Gaussian noise.

One approach is to treat all the interfering codewords as noise. This approach would not be effective when the total power of the interference is on the order of that of the desired codeword (or stronger). Another possible approach would be trying to decode all the codewords $\left\{\mathbf{x}_k\right\}_{k=1}^K$, thus treating the channel as a MAC with $K$ users. It is well-known (see for example~\cite{Cover}) that at high SNR, the achievable rates as dictated by the capacity region of the (Gaussian) MAC channel are essentially a ``zero-sum" game (up to a power gain), i.e., time sharing is nearly optimal. In particular, for such a channel, if all users are working at the same rate, the symmetric rate scales like \begin{align}
\frac{1}{2K}\log \text{SNR}.\nonumber
\end{align}
Since the decoder is only interested in one of the codewords, it seems wasteful to decode all of the interferers as well. For this reason, it is desirable to \emph{align} all interferers to one codeword, as was first noticed in~\cite{Bresler}. After alignment is performed, the receiver only has to decode two codewords: the desired codeword $\mathbf{x}_1$, and the aligned interference codeword.

A linear code, as defined in (\ref{linearity}), facilitates the task of aligning the $K-1$ interfering codewords into one codeword. Specifically, if all interfering codewords $\left\{\mathbf{x}_k\right\}_{k=2}^K$ are taken from the same linear code $\mathcal{C}$, and the channel gains $\left\{h_k\right\}_{k=2}^K$ associated with the interfering codewords are all integers, we have
\begin{align}
\left[\sum_{k=2}^K h_k \mathbf{x}_k\right]^*={\mathbf{x}_{\text{IF}}}\in\mathcal{C},\nonumber
\end{align}
and therefore the received vector can be reduced modulo the interval $\BI$ to yield
\begin{align}
\mathbf{y}^*&=\left[h_1\mathbf{x}_1+\left[\sum_{k=2}^K h_k \mathbf{x}_k\right]^*+\mathbf{z}\right]^*\nonumber\\
&=\left[h_1\mathbf{x}_1+{\mathbf{x}_{\text{IF}}}+\mathbf{z}\right]^*.
\label{inter2MAC}
\end{align}
Since $\mathbf{x}_1$ and ${\mathbf{x}_{\text{IF}}}$ are both members of the same linear codebook, the equivalent channel in (\ref{inter2MAC}) satisfies the conditions of Theorem~\ref{capacityTheorem}, and we can find an achievable symmetric rate for it.

At this point it is worth noting the advantage of joint decoding over successive decoding. A successive decoding procedure, as used in~\cite{VeryStrong} and~\cite{LayeredSymmetric}, can decode both codewords only if a very strong interference condition is satisfied; that is, one of the codewords can be treated as noise while decoding the other codeword. For a wide range of values of $h_1$, successive decoding does not allow for for positive transmission rates. The result of the previous section provides an achievable rate region that is greater than zero for a much wider range of values of $h_1$.

We next give a formal definition for the Gaussian interference channel, and then use the results of the previous section in order to derive achievable rates for certain classes of interference channels.

\subsection{The $K$-user Gaussian interference channel}
\label{KintChannel}

The $K$-user Gaussian interference channel consists of $K$ pairs of transmitters and receivers, where each transmitter $k$ tries to convey one message $w_k$ out of a set of $2^{nR_k}$ possible messages to its corresponding receiver. Specifically, the signal observed by receiver $j$ is
\begin{align}
Y_j=h_{jj}X_j+\sum_{k=1,k\neq j}^K h_{jk}X_k+Z_j,\nonumber
\end{align}
where $h_{jk}$ is the channel gain from transmitter $k$ to receiver $j$, and $Z_j$ is the Gaussian noise present at receiver $j$. All transmitters and receivers have perfect knowledge of all channel gains. We assume that the Gaussian noise at each receiver is i.i.d. with zero mean and variance $1/\text{SNR}$ and that the noises at different receivers are statistically independent. We assume all transmitters are subject to the same power constraint
\begin{align}
\frac{1}{n}\mathbb{E}\left[\|\mathbf{x}_k\|^2\right]\leq 1.\nonumber
\end{align}
Each transmitter $k$ has an encoding function
\begin{align}
f_k: \left\{1,\ldots,2^{n R_k}\right\}\rightarrow \mathbb{R}^n,\nonumber
\end{align}
such that the signal transmitted by user $k$ during $n$ channel uses is
\begin{align}
\mathbf{x}_k=f_k(w_k).\nonumber
\end{align}
Receiver $j$ recovers the message using a decoding function
\begin{align}
g_j: \mathbb{R}^n\rightarrow \left\{1,\ldots,2^{nR_j}\right\}.\nonumber
\end{align}
Let
\begin{align}
\hat{w}_j=g_j(\mathbf{y}_j)\nonumber
\end{align}
be the estimate receiver $j$ produces for the message transmitted by transmitter $j$. We define the error probability as the probability that at least one of the receivers did not decode its intended message correctly
\begin{align}
\bar{P}_{e,\text{IF}}=\mathbb{E}\left[\Pr\left(\left\{\hat{w}_1,\ldots,\hat{w}_K\right\}\neq\left\{w_1,\ldots,w_K\right\}\right)\right],\nonumber
\end{align}
where the expectation here is over a uniform distribution on the messages.

We say that a rate-tuple $\left\{R_1,\ldots,R_K\right\}$ is achievable if there exists a set of encoding and decoding functions such that $\bar{P}_{e,\text{IF}}$ vanishes as $n$ goes to infinity, and that a symmetric rate $R_{\text{sym}}$ is achievable if the rate-tuple $\left\{R_{\text{sym}},\ldots,R_{\text{sym}}\right\}$ is achievable.

\subsection{The integer-interference channel}
\label{intSquareChannel}
We restrict attention to a special family of $K$-user Gaussian interference channels which we refer to as the integer-interference channel. In this family, all the channel gains corresponding to interferers are integers, i.e., for all $j\neq k$,
\begin{align}
h_{jk}=a_{jk}\in \mathbb{Z}.\nonumber
\end{align}
The following theorem establishes an achievable symmetric rate for the integer-interference channel.
\begin{theorem}
\label{intSquareTheorem}
For the $K$-user integer-interference channel, the following symmetric rate is achievable
\begin{align}
&R_{\text{sym}}<\max_{p\in\bigcap_{j=1}^K\mathcal{P}'(h_{jj})} \min_{j\in\{1,\ldots,K\}} \min\nonumber\\
&\bigg\{-\frac{1}{2}\log\left(\frac{1}{p^2}+\sqrt{\frac{2\pi/3}{\text{SNR}}}+\frac{1}{p}e^{-\frac{3\text{SNR}}{2p^2}\delta^2(p,h_{jj})}+2e^{-\frac{3\text{SNR}}{8}}\right),\nonumber\\
&-\log\left(\frac{1}{p}+\sqrt{\frac{2\pi/3}{\delta^2(p,h_{jj})\text{SNR}}}+2e^{-\frac{3\text{SNR}}{8}}\right)\bigg\},
\label{intSquareRsym}
\end{align}
where $\delta(\cdot,\cdot)$ is defined in (\ref{deltaDef}), and $\mathcal{P}'(\cdot)$ is defined in (\ref{thDef}).
\end{theorem}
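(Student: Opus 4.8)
The plan is to reduce Theorem~\ref{intSquareTheorem} directly to the single-user MAC result of Theorem~\ref{capacityTheorem} by exploiting the integer interference structure together with a common linear codebook. First I would have every transmitter employ the \emph{same} linear codebook $\mathcal{C}$ constructed in the proof of Theorem~\ref{capacityTheorem} (the scaled Construction-A ensemble over $\mathbb{Z}_p$), so that transmitter $k$ sends $\mathbf{x}_k=f(w_k)$. Fix a receiver $j$. Since all cross gains $h_{jk}=a_{jk}$ are integers for $k\neq j$, the linearity property~(\ref{linearity}) guarantees that the aggregate interference folds into a single codeword,
\begin{align}
\left[\sum_{k\neq j} a_{jk}\mathbf{x}_k\right]^*=\mathbf{x}_{\text{IF},j}\in\mathcal{C},\nonumber
\end{align}
exactly as in~(\ref{inter2MAC}). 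Reducing the received signal $\mathbf{y}_j$ modulo the basic interval $\BI$ then yields the equivalent modulo-additive channel
\begin{align}
\mathbf{y}_j^*=\left[h_{jj}\mathbf{x}_j+\mathbf{x}_{\text{IF},j}+\mathbf{z}_j\right]^*,\nonumber
\end{align}
which is precisely the two-user single-linear-code MAC of~(\ref{modMAC}) with crossover gain $\gamma=h_{jj}$, where the desired codeword $\mathbf{x}_j$ plays the role of $X_1$ and the aligned interference $\mathbf{x}_{\text{IF},j}$ plays the role of $\gamma X_2$ (after relabeling so that the direct gain multiplies the undesired term in the MAC convention).

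Next I would apply Theorem~\ref{capacityTheorem} to this equivalent channel. For each receiver $j$, that theorem yields an achievable symmetric rate given by the right-hand side of~(\ref{Capacity}) with $\gamma$ replaced by $h_{jj}$, valid for any prime $p\in\mathcal{P}'(h_{jj})$. The per-receiver achievable rate is therefore the inner $\min\{\cdot,\cdot\}$ expression appearing in~(\ref{intSquareRsym}) evaluated at $\gamma=h_{jj}$. The subtlety, and the step I expect to require the most care, is that Theorem~\ref{capacityTheorem} is stated for a single receiver whose error probability is averaged over the codebook ensemble, whereas here I need a \emph{single} codebook that works \emph{simultaneously} at all $K$ receivers. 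To handle this I would not invoke the conclusion of Theorem~\ref{capacityTheorem} as a black box, but rather reopen its ensemble argument: for a fixed prime $p$ and a random generating matrix $G$, the average (over $G$) error probability at receiver $j$ is bounded by the same $\Omega^n$-type expressions~(\ref{PeA})--(\ref{PeD}) with $\gamma=h_{jj}$, and these vanish provided $p\in\mathcal{P}'(h_{jj})$ and $R_{\text{sym}}$ is below the corresponding $\min$. Summing the ensemble-averaged error probabilities over the $K$ receivers via the union bound gives $\mathbb{E}[\bar P_{e,\text{IF}}]\leq\sum_{j=1}^K\mathbb{E}[P_{e,j}]$, which still tends to zero as $n\to\infty$ as long as $R_{\text{sym}}$ lies below the per-receiver bound for \emph{every} $j$ and the prime $p$ lies in $\bigcap_{j=1}^K\mathcal{P}'(h_{jj})$ so that it is simultaneously admissible for all receivers.

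Taking the minimum over $j\in\{1,\ldots,K\}$ of the per-receiver rates and then maximizing over admissible primes $p\in\bigcap_{j=1}^K\mathcal{P}'(h_{jj})$ produces exactly the expression~(\ref{intSquareRsym}). Since the union bound over a fixed number $K$ of receivers preserves the exponential decay of the ensemble-averaged error probability, there exists at least one codebook in the ensemble—equivalently one realization of $G$—for which $\bar P_{e,\text{IF}}$ simultaneously vanishes at all receivers, which establishes achievability. The main obstacle, to reiterate, is the simultaneity requirement: ensuring that a single draw of $G$ is good for all $K$ equivalent two-user MACs at once, which is resolved because $K$ is fixed and the union bound over finitely many vanishing quantities still vanishes, while the intersection $\bigcap_{j=1}^K\mathcal{P}'(h_{jj})$ guarantees the chosen prime $p$ simultaneously satisfies the Case-D admissibility condition~(\ref{pCond}) at every receiver.
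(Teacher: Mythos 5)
Your proposal is correct and follows essentially the same route as the paper: align the integer-gain interference into a single codeword of the common linear codebook, reduce modulo $\BI$ to obtain the two-user single-code MAC of~(\ref{modMAC}) with $\gamma=h_{jj}$, invoke Theorem~\ref{capacityTheorem} at each receiver, and intersect the admissible prime sets $\mathcal{P}'(h_{jj})$. Your only addition is to spell out the simultaneity argument (union bound on ensemble-averaged error probabilities over the $K$ receivers) that the paper relegates to a footnote, and your parenthetical about which codeword plays the role of $X_1$ versus $\gamma X_2$ has the labels swapped---the interference $\mathbf{x}_{\text{IF},j}$ enters with gain $1$ and the desired $\mathbf{x}_j$ with gain $h_{jj}$---though this is immaterial since Theorem~\ref{capacityTheorem} decodes both codewords at the same symmetric rate.
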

\vspace{2mm}
\begin{proof}
We begin by recalling the encoding and decoding procedures.

\underline{\emph{Encoding}:}
\noindent Generate a linear ensemble of codebooks of rate $R_{\text{sym}}$ over $\mathbb{Z}_p$ as described in Section~\ref{sec:MAC}, where $p$ is taken as the maximizing value in (\ref{intSquareRsym}). Choose the codebook $\mathcal{C}$ which achieves the smallest average error probability, $\bar{P}_{e,\text{IF}}$.
Each user encodes its message using this codebook.

\vspace{2mm}
\underline{\emph{Decoding}:}
Each receiver first reduces its observation modulo the interval $\BI$. The equivalent channel receiver $j$ sees is therefore
\begin{align}
{\mathbf{y}_j^*}
&=\left[h_{jj}\mathbf{x}_j+\sum_{k=1,k\neq j}^K a_{jk}\mathbf{x}_k+\mathbf{z}_j\right]^*\nonumber\\
&=\left[h_{jj}\mathbf{x}_j+{\mathbf{x}_{\text{IF},j}}+\mathbf{z}_j\right]^*,\nonumber
\end{align}
where
\begin{align}
{\mathbf{x}_{\text{IF},j}}=\left[\sum_{k=1,k\neq j}^K a_{jk}\mathbf{x}_k\right]^*.\nonumber
\end{align}
The linearity of the codebook implies that ${\mathbf{x}_{\text{IF},j}}\in\mathcal{C}$.

From Theorem~\ref{capacityTheorem} we know that $\mathbf{x}_j$ and ${\mathbf{x}_{\text{IF},j}}$ can be decoded reliably (by receiver $j$) as long as the symmetric rate $R_{\text{sym}}$ satisfies
\begin{align}
(a) \nonumber \\ R_{\text{sym}}&<-\frac{1}{2}\log\left(\frac{1}{p^2}+\sqrt{\frac{2\pi/3}{\text{SNR}}}+\frac{1}{p}e^{-\frac{3\text{SNR}}{2p^2}\delta^2(h_{jj},p)}+2e^{-\frac{3\text{SNR}}{8}}\right)
\label{Rate_j1}\\
(b) \nonumber  \\
R_{\text{sym}}&<-\log\left(\frac{1}{p}+\sqrt{\frac{2\pi/3}{\delta^2(p,h_{jj})\text{SNR}}}+2e^{-\frac{3\text{SNR}}{8}}\right),
\label{R_j2}
\end{align}
and
\begin{align}
p\in\mathcal{P}'(h_{jj}).
\label{Pcondition}
\end{align}
The codebook $\mathcal{C}$ satisfies conditions (\ref{Rate_j1}), (\ref{R_j2}) and (\ref{Pcondition}) for every\footnote{The existence of a codebook $\mathcal{C}$ that is simultaneously good for all $K$ equivalent MAC channels is guaranteed for any finite number of users $K$.} $1\leq j\leq K$, and thus the theorem is proved.

\end{proof}

\subsection{Integer-interference channel: Degrees of freedom}
\label{subsec:DoF}

Theorem~\ref{intSquareTheorem} provides an achievable symmetric rate for the integer-interference channel that is valid for any SNR. We now show that in the limit where the SNR goes to infinity, the coding scheme achieves $K/2$ degrees of freedom, which is the upper bound established in~\cite{DoF_upperBound}. This shows that for the integer-interference channel, the proposed scheme is optimal in a DoF sense, and thus recovers the asymptotic results of~\cite{Etkin}.

In order to find the number of DoF the scheme achieves, we need the following theorem from the field of Diophantine approximations, which is due to Khinchin.

\begin{theorem}[Khinchin]
\label{KhinchinTheorem}

For almost every $\gamma\in\mathbb{R}$, the number of solutions to the inequality
\begin{align}
\big|\gamma-\frac{a}{l}\big|\leq\Phi(l)\nonumber
\end{align}
for $a\in\mathbb{Z}$ and $l\in\mathbb{N}$, is finite if the series
\begin{align}
\sum_{l=1}^{\infty}l\Phi(l)\nonumber
\end{align}
converges, and infinite if it diverges.
\end{theorem}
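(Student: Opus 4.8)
The plan is to treat the two directions separately, since they are of genuinely different difficulty, and to reduce throughout to $\gamma\in[0,1)$: the statement ``for almost every $\gamma\in\RR$'' is a local, translation-invariant property, so it suffices to prove it on $[0,1)$ and take a countable union over integer translates. For each denominator $l\in\mathbb{N}$, let $E_l\subseteq[0,1)$ be the set of $\gamma$ for which there exists $a\in\ZZ$ with $|\gamma-a/l|\le\Phi(l)$. Each $E_l$ is a union of at most $l+1$ intervals of length $2\Phi(l)$ centered at the points $a/l\in[0,1)$, so its Lebesgue measure satisfies $\mu(E_l)\le C\,l\Phi(l)$ for an absolute constant $C$. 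Having a solution $(a,l)$ with denominator $l$ is exactly the event $\gamma\in E_l$, and for fixed $l$ the number of admissible $a$ is finite; hence the total number of solutions is finite if and only if $\gamma$ lies in only finitely many $E_l$.

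For the convergence case I would invoke the first Borel--Cantelli lemma. If $\sum_l l\Phi(l)<\infty$, then $\sum_l\mu(E_l)\le C\sum_l l\Phi(l)<\infty$, so $\mu(\limsup_l E_l)=0$. Thus almost every $\gamma\in[0,1)$ belongs to only finitely many $E_l$, and by the reduction above has only finitely many solutions $(a,l)$. This half is routine.

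The divergence case is the substantial one. Now $\sum_l\mu(E_l)\gtrsim\sum_l l\Phi(l)=\infty$, but the second Borel--Cantelli lemma does not apply, because the sets $\{E_l\}$ are far from independent: the intervals around $a/l$ and $b/m$ can coincide or overlap heavily when $l$ and $m$ share common factors. The plan is to (i) pass to the events defined by \emph{coprime} numerators, $\gcd(a,l)=1$, which restores independence-like behaviour while retaining essentially the full measure; (ii) establish a quasi-independence estimate $\mu(E_l\cap E_m)\le K\,\mu(E_l)\,\mu(E_m)$ for $l\neq m$ and an absolute constant $K$, via a counting argument bounding the number of pairs $(a/l,b/m)$ whose associated intervals meet; and (iii) apply the divergence form of the Borel--Cantelli lemma (the Erd\H{o}s--Chung inequality) to obtain $\mu(\limsup_l E_l)>0$. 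Finally I would upgrade positive measure to full measure using a zero--one law for $\limsup$ sets of this Diophantine type (Cassels' or Gallagher's zero--one law, equivalently the ergodicity of the underlying dynamics), giving $\mu(\limsup_l E_l)=1$ and hence infinitely many solutions for almost every $\gamma$.

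The main obstacle is step (ii), the overlap/quasi-independence estimate in the divergence direction, and it is precisely here that the monotonicity of $l\Phi(l)$ (implicit in the classical statement) is essential: it is what lets the near-coincidences between fractions with denominators $l$ and $m$ be controlled uniformly and summed, and what allows $\sum_l \phi(l)\Phi(l)$ to inherit divergence from $\sum_l l\Phi(l)$ after the coprime restriction. Dropping monotonicity turns the problem into the far harder Duffin--Schaeffer problem. I would therefore carry out the counting under the monotonicity hypothesis, where the interval overlaps are governed by elementary bounds on the spacing of Farey fractions, and treat the zero--one law as a black box.
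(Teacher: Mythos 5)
The paper does not actually prove this statement: its ``proof'' is a one-line pointer to the literature (``See, e.g.,~\cite{Diophantine}''), since Khinchin's theorem is invoked as a known black box. So your proposal can only be measured against the standard proofs. Your convergence half is correct and complete as an outline: the reduction to $[0,1)$ by translation invariance, the bound $\mu(E_l)\le C\,l\Phi(l)$, the first Borel--Cantelli lemma, and the observation that each $l$ admits only finitely many numerators $a$ together give finiteness almost everywhere, with no monotonicity needed. Equally to your credit, you caught something real about the statement itself: as printed, with no monotonicity hypothesis on $\Phi$, the divergence half is \emph{false} --- Duffin and Schaeffer constructed a non-monotone $\Phi$ with $\sum_l l\Phi(l)=\infty$ for which almost every $\gamma$ has only finitely many solutions. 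Monotonicity is indeed implicit in the classical theorem, and the paper is unaffected because it only instantiates $\Phi(l)=l^{-2-\epsilon_1}$, which is monotone.

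The one genuine soft spot is your step (ii). The uniform pairwise bound $\mu(E_l\cap E_m)\le K\,\mu(E_l)\,\mu(E_m)$ with an absolute constant $K$ is not attainable even after restricting to coprime numerators: coprime fractions with denominators $l\neq m$ are at least $1/(lm)$ apart, so at most $2m\delta+1$ fractions $b/m$ lie within $\delta$ of a given $a/l$, and the resulting count produces a main term of order $\phi(l)\,m\,\Phi(l)\Phi(m)$, which exceeds $\mu(E_l)\mu(E_m)$ by a factor of roughly $m/\phi(m)$ --- and that ratio is unbounded in $m$. What is true, and what the Chung--Erd\H{o}s inequality actually requires, is the \emph{averaged} form $\sum_{l,m\le N}\mu(E_l\cap E_m)\le C\bigl(\sum_{l\le N}\mu(E_l)\bigr)^2+C\sum_{l\le N}\mu(E_l)$, in which the unbounded factors $m/\phi(m)$ are tamed on average; this is how the textbook treatments (e.g.\ Harman's \emph{Metric Number Theory}, or Sprind\v{z}uk) run the argument, followed, as you say, by Gallagher's or Cassels' zero--one law to upgrade positive measure to full measure. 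Alternatively, under monotonicity you can bypass the overlap machinery entirely via Khinchin's original continued-fraction argument. In short: right architecture and a valuable caveat about monotonicity, but the pairwise quasi-independence claim must be weakened to the summed version before the divergence Borel--Cantelli step can legitimately be invoked.
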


\vspace{2mm}

\begin{proof}
See, e.g.,~\cite{Diophantine}.
\end{proof}

\vspace{2mm}

Setting $\Phi(l)=l^{-2-\epsilon_1}$ in Theorem~\ref{KhinchinTheorem}, it follows that for any $\epsilon_1>0$, and almost every $\gamma\in\mathbb{R}$, there exist an integer $l^*(\gamma,\epsilon_1)$ for which there are no solutions to the inequality
\begin{align}
l\cdot\bigg|\gamma-\frac{\lfloor l\gamma\rceil}{l}\bigg|\leq l^{-1-\epsilon_1},\label{deltaIneq1}
\end{align}
in the range $l>l^*(\gamma,\epsilon_1)$. Moreover, for such $\gamma$ there exist a constant $c_1>0$ for which
\begin{align}
l\cdot\bigg|\gamma-\frac{\lfloor l\gamma\rceil}{l}\bigg|\geq c_1,\label{deltaIneq2}
\end{align}
for every $l\leq l^*(\gamma,\epsilon_1)$. Combining~(\ref{deltaIneq1}) with~(\ref{deltaIneq2}), it follows that for almost every $\gamma\in\mathbb{R}$ there exist a positive integer $\tilde{l}^*(\gamma,\epsilon_1)$, such that if $p>\tilde{l}^*(\gamma,\epsilon_1)$ there are no solutions to the inequality
\begin{align}
l\cdot\bigg|\gamma-\frac{\lfloor l\gamma\rceil}{l}\bigg|\leq p^{-1-\epsilon_1},\nonumber
\end{align}
in the range $l\in\mathbb{Z}_p\backslash\{0\}$.
Thus, for any $\epsilon_1>0$ and $p$ large enough we have
\begin{align}
\delta(p,\gamma)\geq p^{-1-\epsilon_1},
\label{gammaBound}
\end{align}
for almost every $\gamma\in\mathbb{R}$.

We now show that the ratio between the symmetric rate as given by Theorem~\ref{intSquareTheorem} and $\nicefrac{1}{4}\log\text{SNR}$ approaches $1$ for almost every set of direct channel gains when the $\text{SNR}$ tends to infinity, and hence the number of DoF is $K/2$.

Before giving a formal definition to the number of degrees of freedom, we need a few preliminary definitions. We define an interference channel code $\mathcal{C}'$ as a set of encoders $\left\{f_k\right\}_{k=1}^K$ and decoders $\left\{g_k\right\}_{k=1}^K$.
We define an interference channel coding scheme as a family of interference channel codes $\left\{\mathcal{C}'(\text{SNR})\right\}$, and define $\mathcal{R}'(\text{SNR})$ as the set of all rate-tuples that are achievable for the interference channel code $\mathcal{C}'(\text{SNR})$.

\vspace{2mm}

\begin{definition}
\label{DoFDef}

An interference channel coding scheme $\left\{\mathcal{C}'(\text{SNR})\right\}$ is said to achieve $d$ degrees of freedom if
\begin{align}
\limsup_{\text{SNR}\rightarrow\infty}\max_{R_1,\ldots,R_K\in\mathcal{R}'(\text{SNR})}\frac{\sum_{k=1}^K R_k}{\frac{1}{2}\log{\text{SNR}}}=d.\nonumber
\end{align}
\end{definition}

\vspace{2mm}

In order to show that for asymptotic (high) SNR, the ratio between the symmetric rate $R_{\text{sym}}$ given by~(\ref{intSquareRsym}) and $\nicefrac{1}{4}\log\text{SNR}$ approaches $1$, we set $p=\text{SNR}^{1/4-\epsilon_2}$ (where $\epsilon_2>0$ is chosen such that $p$ is a prime number).
From~(\ref{gammaBound}), we see that for this choice, for high enough SNR and almost every $\gamma\in\mathbb{R}$ we have
\begin{align}
\delta(p,\gamma)>\text{SNR}^{-1/4+\epsilon_1\epsilon_2+\epsilon_2-\epsilon_1/4}.
\label{DiophantineDelta}
\end{align}
We now use (\ref{DiophantineDelta}) in order to find lower bounds on the maximal achievable symmetric rate of Theorem~\ref{intSquareTheorem} for asymptotic SNR conditions.

The argument of the logarithm in (\ref{Rate_j1}) can be upper bounded (after some straightforward algebra) by
\begin{align}
\frac{1}{p^2}&+\sqrt{\frac{2\pi/3}{\text{SNR}}}+\frac{1}{p}e^{-\frac{3\text{SNR}}{2p^2}\delta^2(p,h_{jj})}+2e^{-\frac{3\text{SNR}}{8}}\nonumber\\
&<\text{SNR}^{-1/2+2\epsilon_2}+\sqrt{\frac{2\pi}{3}}\text{SNR}^{-1/2}\nonumber\\
&+2\text{SNR}^{-1/4+\epsilon_2}e^{-\frac{3}{2}\text{SNR}^{4\epsilon_2+2\epsilon_1\epsilon_2-\epsilon_1/2}}+2e^{-\frac{3\text{SNR}}{8}},
\label{asymptot1}
\end{align}
and the argument of the logarithm in (\ref{R_j2}) by
\begin{align}
\frac{1}{p}&+\sqrt{\frac{2\pi/3}{\delta^2(p,h_{jj})\text{SNR}}}+2e^{-\frac{3\text{SNR}}{8}}\nonumber\\
&<\text{SNR}^{-1/4+\epsilon_2}+\sqrt{\frac{2\pi}{3}}\text{SNR}^{-1/4-\epsilon_2-\epsilon_1\epsilon_2+\epsilon_1/4}+2e^{-\frac{3\text{SNR}}{8}}.
\label{asymptot2}
\end{align}
Taking $\epsilon_1$ and $\epsilon_2$ to zero such that
\begin{align}
4\epsilon_2+2\epsilon_1\epsilon_2-\epsilon_1/2>0,\nonumber
\end{align}
and taking $\text{SNR}$ to infinity we see that the r.h.s. of~(\ref{asymptot1}) and~(\ref{asymptot2}) are approximately $\text{SNR}^{-1/2}$ and $\text{SNR}^{-1/4}$ respectively. Since~(\ref{asymptot1}) and~(\ref{asymptot2}) hold for almost every $h_{jj}\in\mathbb{R}$, they simultaneously hold for almost every set of direct channel gains $\left\{h_{jj}\right\}_{j=1}^K$ as well. We thus conclude that the ratio between the symmetric rate from Theorem~\ref{intSquareTheorem} and $\nicefrac{1}{4}\log{\text{SNR}}$ approaches $1$ when the SNR tends to infinity, and therefore the number of DoF is $K/2$ for almost every set of direct channel gains.

\subsection{Example}

Consider the $5$-user integer-interference channel where the channel gains are the entries of the matrix
\begin{align}
H=\left(
    \begin{array}{ccccc}
      h & 1 & 2 & 3 & 4 \\
      5 & h & 3 & 6 & 7 \\
      2 & 11 & h & 1 & 3 \\
      3 & 7 & 6 & h & 9 \\
      11 & 2 & 6 & 4 & h \\
    \end{array}
  \right).
  \label{channelMat}
\end{align}
For this channel, we plot the achievable sum rate of our scheme (which is 5 times the symmetric rate $R_{\text{sym}}$), and for reference we also plot the sum rate a time sharing scheme would have achieved. One more curve we plot for reference is the curve
\begin{align}
\frac{K}{2}\frac{1}{2}\log(1+(1+h^2)\text{SNR}),
\label{DoFupperBound}
\end{align}
which corresponds to the sum rate that could have been achieved if the symmetric rate for a two-user Gaussian MAC with one linear code was the same as that of the same channel with two random codes, in other words, if $r_{\text{norm}}$ given in~(\ref{RnormEq}) were $1$. In the absence of explicit upper bounds for the $K$-user interference channel with finite SNR,~(\ref{DoFupperBound}) serves as a reasonable benchmark to the best performance one can expect to achieve, which is based on the known fact that the number of DoF the channel offers is $K/2$.

We consider two different values of $h$: $h=0.707$, and $h=\sqrt{2}/2$.\footnote{Our coding scheme would have the same performance for $h+m, \ m\in\mathbb{Z}$ as well; however the reference curves do change when adding integers to $h$.} The results are shown in Figure~\ref{intSqaureRateFig}. In Figure~\ref{intSqaureRateFig2} we plot the same curves for $h=0.24$ and $h=\sqrt{7}/11$.

\begin{figure}[htb]
\includegraphics[width=1 \columnwidth]{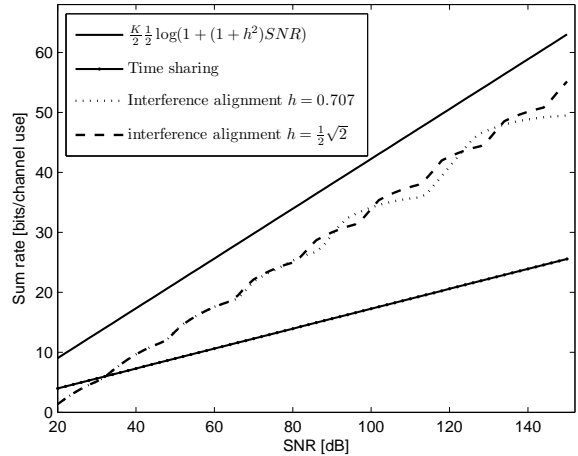}
\caption{The achievable sum rate for the integer-interference channel (\ref{channelMat}) for $h=0.707$ and $h=\sqrt{2}/2$.}
\label{intSqaureRateFig}
\end{figure}

\begin{figure}[htb]
\includegraphics[width=1 \columnwidth]{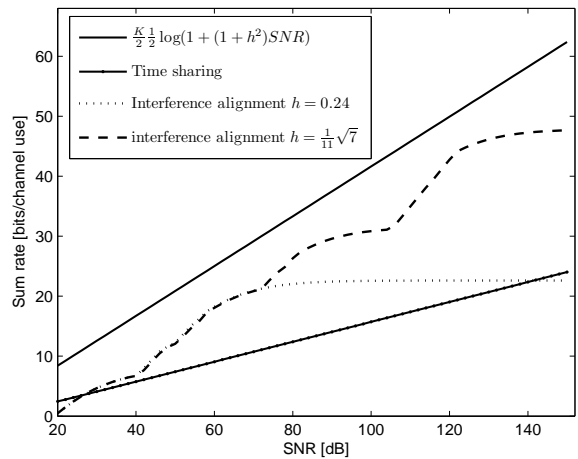}
\caption{The achievable sum rate for the integer-interference channel (\ref{channelMat}) for $h=0.24$ and $h=\sqrt{7}/11$.}
\label{intSqaureRateFig2}
\end{figure}

These examples show the advantages of interference alignment over time sharing for high enough SNR. For a larger number of users, interference alignment is preferable over time sharing for lower values of SNR. We note that in this paper we have not considered further optimizations for the achievable rate, such as combining it with time sharing of powers, or superposition (layered coding schemes), which may result in higher gains. Time sharing of powers, i.e. having all users transmit with more power for some of the time and remain silent for the rest of the time is important at low SNR values, since our coding scheme only achieves positive rates above some SNR threshold.

An important insight from Figures~\ref{intSqaureRateFig} and~\ref{intSqaureRateFig2} is the sensitivity of interference alignment to the channel gain $h$. Even though in each figure we have used values of $h$ that are very close, the performance of the scheme differs significantly when the SNR is very high. This sensitivity is the subject of the next section.

\section{Quantifying rationality}
\label{sec:rationales}

Previous results regarding interference alignment for the time-invariant $K$-user Gaussian interference channel were mainly focused on the degrees of freedom of the channel. In~\cite{Etkin} it is proved that the number of DoF of a $K$-user integer-interference channel are $K/2$ when the direct channel gains are irrational and is strictly smaller than $K/2$ where the channel gains are rational. This result is also supported by the results of~\cite{Khandany} where it is shown that the degrees of freedom of a $K$-user (not necessarily integer) interference channel is $K/2$, unless there exist some rational connections between the channel coefficients.

Clearly, from an engineering perspective, the dependence of the DoF on whether the direct channel gains are rational or irrational is very displeasing. It is therefore important to understand the effect of the channel gains being rational at finite SNR.

Theorem~\ref{intSquareTheorem} sheds light on this matter. Specifically, it quantifies the loss associated with a rational direct channel gain $h_{jj}$ in the symmetric rate as a function of how large the denominator of $h_{jj}$ is w.r.t. the SNR. Specifically, if $h_{jj}$ is a rational number of the form $h_{jj}=r/q$, the symmetric rate achieved by the coding scheme can never exceed $\log q$. This is evident from the presence of the factor $\delta(p,h_{jj})$ in (\ref{intSquareRsym}). For any $p>q$ we have $\delta(p,r/q)=0$. Thus, in order to get positive rates we must choose $p\leq q$ and the symmetric rate would be smaller than $\log q$ for any SNR.
For this reason, a small denominator $q$ limits the symmetric rate even for low values of SNR. However, a large value of $q$ limits performance only at high SNR. This phenomenon can be seen in Figure~\ref{intSqaureRateFig} where at a certain SNR point, the symmetric rate corresponding to $h=0.707$ (which is a rational number) saturates. In Figure~\ref{intSqaureRateFig2} this effect is even more pronounced for $h=0.24$ as the denominator of $h$ in this case is $q=25$, rather than $q=1000$ which is the case for $h=0.707$.
It is also seen from Figures~\ref{intSqaureRateFig} and~\ref{intSqaureRateFig2} that for low values of SNR, the symmetric rates corresponding to the irrational values of $h$ and their quantized rational versions are nearly indistinguishable.

Another question that arises from the results of \cite{Etkin} and \cite{Khandany} is how the rate behaves when the direct gains approach a rational number. Theorem~\ref{intSquareTheorem} provides an answer to this question as well. If $h_{jj}-r/q$ is small, then $\delta(h_{jj},p)$ would also be small for $p>q$, which would result in an effectively lower SNR. However, the function $\delta(p,h_{jj})$ is continuous in the second variable, and thus, letting $h_{jj}$ approach $r/q$ (where $q<p$) results in a \emph{continuous} decrease of the effective SNR.

\section{Non-integer interference channels}
\label{sec:nonInteger}
We have seen that for the integer-interference channel, the result of Theorem~\ref{capacityTheorem} was very useful for finding a new achievable rate region. The requirement that at each receiver all the channel gains corresponding to interferers are integers is necessary because the codebook we use is only closed under addition of integer-valued multiplications of codewords, which allows to align all interfering signals into one codeword.

Unfortunately, the integer-interference channel model does not capture the essence of the physical wireless medium. Under realistic statistical models for the interference channel, the probability of getting an integer-interference channel is clearly zero.

It is thus desirable to transform the original interference channel into an integer-interference channel by applying certain operations at the transmitters and the receivers.\footnote{We do not discuss the possibility of adding more antennas at the receivers or the transmitters which could also assist in the problem.} Specifically, it is necessary that at each receiver the ratios between all interference gains be rational, and then an appropriate scaling at each receiver can transform the channel into an integer-interference channel.

Assume that each receiver had scaled its observation such that one of the interference gains equals $1$. We would like to ``shape'' the other interfering gains seen at each receiver to be integers as well, using operations at the transmitters. It turns out that by using power back-off at each transmitter, i.e., each transmitter $k$ scales its codeword by a factor of $\alpha_k\leq1$ prior to transmission, it is possible to transform only $K-1$ (in addition to the channel gains that were equalized to $1$ by the receivers) of the total $K^2$ channel gains into integers. It follows that perfect alignment, i.e., alignment of all interferers at all receivers simultaneously, is not possible (by these methods) even for $K=3$ users.

One solution to this problem is performing partial alignment, as described in~\cite{Khandany}, which is suitable for almost every set of channel gains. This method roughly transforms the channel seen by each receiver into a MAC with a large number of inputs, where about half of the inputs correspond to the information transmitted by the desired transmitter, and the other half to interferences. At asymptotic (high) SNR conditions, it was shown in~\cite{Khandany} that this approach achieves $K/2$ degrees of freedom, i.e., each receiver is capable of decoding all the inputs corresponding to its intended messages in the presence of the interfering inputs.

An extension of the partial interference alignment approach proposed in~\cite{Khandany} to the non-asymptotic SNR regime, in the same manner we extended the results of~\cite{Etkin} for the integer-interference channel to non-asymptotic SNR, would require an extension of Theorem~\ref{capacityTheorem} to more than two users. Finding a non-trivial achievable symmetric rate for a Gaussian MAC with a general number of users, where all users are forced to use the same linear code, appears to be a rather difficult task. Moreover, the symmetric rate found in Theorem~\ref{capacityTheorem} depends on the ratio between the MAC channel coefficients. This dependence is expressed in our results through the factor $\delta(p,\gamma)$, which effectively reduces the SNR when it is small. The results for the two-user case suggest that for a larger number of users, the SNR loss caused by the ratio between the coefficients would significantly increase. If this is indeed true, partial interference alignment would not result in significant gains over time sharing, for the moderate to high SNR regime.

A different approach that is yet to be exhausted for the time-invariant interference channel, is using the time dimension in conjunction with power back-off in order to achieve alignment, and allow for joint decoding of the intended message and some function of the interferers at each receiver.

Such an approach can be thought of as the interference channel's dual of space-time codes, and we refer to it as ``power-time'' codes. An example of such a power-time code is given in Appendix~\ref{sec:powerTime}. The power-time code given in the example is suitable for the $3$-user interference channel (with arbitrary channel coefficients), and allows to achieve $9/8$ degrees of freedom for almost all channel gains. While it is already known from~\cite{Khandany} that the number of DoF offered by this channel is $3/2$, our ``power-time'' approach gives an explicit expression for the symmetric rate for finite SNR.

A third possible approach for general interference channels, is inspired by the recently proposed relaying strategy Compute-and-Forward~\cite{compAndForIeee}. Nazer and Gastpar have shown in~\cite{compAndForIeee} that when using nested lattice codes (see e.g.~\cite{nestedLattices}), it is not necessary to completely align all users. In~\cite{compAndForIeee}, a receiver sees a linear combination of signals transmitted by different users, and tries to decode an integer-valued linear combination of these signals. In order to do that, the decoder multiplies its observation by a scaling factor which plays the role of directing the channel gains towards the integer-valued coefficients of the linear combination it tries to decode. Since the channel gains  usually do not have rational connections between them, there would always be some residual error in the approximation of that integer-valued linear combination. By using a nested lattice scheme, this residual error can be made statistically independent of the transmitted signals, with a distribution that approaches that of an AWGN when the lattice dimension tends to infinity.

A similar technique can be applied to the interference channel as well. Each receiver will try to scale the gains of the interferers it sees towards some integer-valued vector, and then jointly decode this integer-valued linear combination, which is a point in the original lattice (which all transmitters have used for transmission), along with the desired lattice point transmitted by the user which tries to communicate with that receiver. The advantage of this approach is that it does not require perfect alignment. However, the residual noise caused by this imperfect alignment incurs some losses in the achieved rate, and will most likely accumulate when $K$ grows.

\section{Concluding remarks}
\label{sec:conclusions}

We studied the two-user Gaussian MAC where all users are restricted to use the same linear code, and derived a new coding theorem which provides an achievable symmetric rate for it. Some of the bounds in the derivation of the coding theorem are rather crude, and as a result our analysis may not be very tight. Deriving a tighter analysis would be an interesting avenue for future work, and of particular interest is an extension of our coding theorem to to the case where both users transmit codewords from the same nested lattice codebook, with a ``good'' coarse lattice, as opposed to the one dimensional lattice we used.
Nevertheless, we believe that the derived expression for the symmetric rate captures the essence of the problem of restricting both users to transmit codewords from the same linear code.
The new coding theorem was utilized for establishing a new achievable rate region for the integer-interference channel which is valid for any value of SNR.
For a wide range of channel parameters the new rate region is the best known, and in particular, in the limit of SNR approaching infinity it coincides with previously known asymptotic results. We discussed strategies which enable to apply our results to the general (non-integer) $K$-user interference channel, among which is the novel ``Power-Time'' codes approach.

\section*{Acknowledgement}
The authors would like to express their deep gratitude to Ayal Hitron and Ronen Dar for their helpful technical comments.

\begin{appendices}

\section{Proof of Lemma~\ref{moduloLemma}}
\label{proofOfLemmaMod}

\begin{proof}
Let
\begin{align}
\mathbf{v}'=\arg\min_{\mathbf{v}\in L\mathbb{Z}^n}
\left\|\mathbf{z}+\mathbf{u}+\mathbf{v}\right\|^2,\nonumber
\end{align}
and denote by $\mathcal{S}$ the set of indices for which the absolute value of $\mathbf{v}'$ is not greater than $L$, and by $\bar{\mathcal{S}}$ the set of indices for which it is greater than $L$, namely
\begin{align}
&\mathcal{S}=\left\{s:|v'(s)|\leq L\right\}\nonumber\\
&\bar{\mathcal{S}}=\left\{s:|v'(s)|> L\right\}.\nonumber
\end{align}

 Let $\mathbf{u}_{\mathcal{S}},\mathbf{v}_{\mathcal{S}},\mathbf{\tilde{v}}_{\mathcal{S}},\mathbf{z}_{\mathcal{S}},\mathbf{z}^*_{\mathcal{S}}$ be sub-vectors of $\mathbf{u},\mathbf{v},\mathbf{\tilde{v}},\mathbf{z},\mathbf{z}^*$ in the indices $\mathcal{S}$, and $\mathbf{u}_{\bar{\mathcal{S}}},\mathbf{v}_{\bar{\mathcal{S}}},\mathbf{\tilde{v}}_{\bar{\mathcal{S}}},\mathbf{z}_{\bar{\mathcal{S}}},\mathbf{z}^*_{\bar{\mathcal{S}}}$ be their sub-vectors in the indices $\bar{\mathcal{S}}$.
  It suffices to show the next two inequalities:
\begin{align}
&\left\|\mathbf{z}_{\mathcal{S}}\right\|^2-\min_{\tilde{\mathbf{v}}_{\mathcal{S}}\in L\mathbb{T}^{|\mathcal{S}|}}\left\|\mathbf{z}_{\mathcal{S}}+
\mathbf{u}_{\mathcal{S}}+\tilde{\mathbf{v}}_{\mathcal{S}}\right\|^2\geq\nonumber\\
&\left\|\mathbf{z}^*_{{\mathcal{S}}}
\right\|^2-\min_{\mathbf{{v}}_{\mathcal{S}}\in L\mathbb{T}^{|\mathcal{S}|}}\left\|
\mathbf{z}^*_{{\mathcal{S}}}+
\mathbf{u}_{\mathcal{S}}+\mathbf{v}_{\mathcal{S}}\right\|^2
\label{ineq1}
\end{align}
and
\begin{align}
&\left\|\mathbf{z}_{\bar{\mathcal{S}}}\right\|^2-\min_{\tilde{\mathbf{v}}_{\bar{\mathcal{S}}}\in L\mathbb{T}^{|\bar{\mathcal{S}}|}}\left\|\mathbf{z}_{\bar{\mathcal{S}}}+
\mathbf{u}_{\bar{\mathcal{S}}}+\tilde{\mathbf{v}}_{\bar{\mathcal{S}}}\right\|^2\geq\nonumber\\
&\left\|\mathbf{z}^*_{{\bar{\mathcal{S}}}}
\right\|^2-\min_{\mathbf{{v}}_{\bar{\mathcal{S}}}\in L\mathbb{T}^{|\bar{\mathcal{S}}|}}\left\|
\mathbf{z}^*_{{\bar{\mathcal{S}}}}+
\mathbf{u}_{\bar{\mathcal{S}}}+\mathbf{v}_{\bar{\mathcal{S}}}\right\|^2.
\label{ineq2}
\end{align}
The first inequality (\ref{ineq1}) is true since for every index $s\in\mathcal{S}$ we have
\begin{align}
\min_{\tilde{v}_s\in L\mathbb{T}}\left(z_s+u_s+\tilde{v}_s\right)^2=\min_{{v}_s\in L\mathbb{T}}\left(z^*_{s}+u_s+v_s\right)^2,\nonumber
\end{align}
and since $z_s^2\geq(z^*_s)^2$.

In order to see that (\ref{ineq2}), is true we note that for any index $\bar{s}$ (and in particular $\bar{s}\in\bar{\mathcal{S}}$) we have
\begin{align}
(z^*_{\bar{s}})^2-\min_{v_{\bar{s}}\in L\mathbb{T}}(z^*_{\bar{s}}+u_{\bar{s}}+v_{\bar{s}})^2\leq\left(\frac{L}{2}\right)^2.\nonumber
\end{align}
 On the other hand, for any $\bar{s}\in\bar{\mathcal{S}}$, the inequality $|z_{\bar{s}}|>L$ must hold, since otherwise the value of $v_{\bar{s}}\in L\mathbb{Z}$ minimizing $(z_{\bar{s}}+u_{\bar{s}}+v_{\bar{s}})^2$ cannot be greater than $L$. Moreover, since $|u_{\bar{s}}|\leq L/2$, it follows that
 \begin{align}
 \min_{\tilde{v}_{\bar{s}}\in L\mathbb{T}}(z_{\bar{s}}+u_{\bar{s}}+\tilde{v}_{\bar{s}})^2\leq\left(|z_{\bar{s}}|-\frac{L}{2}\right)^2.\nonumber
  \end{align}
  Finally, for every $\bar{s}\in\bar{\mathcal{S}}$ we can write
\begin{align}
(z_{\bar{s}})^2&-\min_{\tilde{v}_{\bar{s}}\in L\mathbb{T}}\left(z_{\bar{s}}+u_{\bar{s}}+\tilde{v}_{\bar{s}}\right)^2\geq(z_{\bar{s}})^2-\left(|z_{\bar{s}}|-\frac{L}{2}\right)^2\nonumber\\
&=\frac{L}{2}\left(2|z_{\bar{s}}|-\frac{L}{2}\right)>\left(\frac{L}{2}\right)^2\nonumber\\
&\geq(z^*_{\bar{s}})^2-\min_{v_{\bar{s}}\in L\mathbb{T}}\left(z^*_{\bar{s}}+u_{\bar{s}}+v_{\bar{s}}\right)^2,\nonumber
\end{align}
which establishes (\ref{ineq2}).
\end{proof}

\section{Derivation of the Error Probabilities for the Different Cases}
\label{dependencies}

We begin this section with four lemmas that are repeatedly used in the error probability derivations for the different cases of statistical dependencies. The proofs are rather cumbersome and are given in Appendix~\ref{subsec:proofs}.

\vspace{2mm}

\begin{lemma}
\label{lemmaExpectation}
Let $X$ be a random variable uniformly distributed over $\left[\frac{L}{p}\mathbb{Z}_p\right]^*$, and let $\Theta$ be some random variable statistically independent of $X$. Define the random variable $U$ by
\begin{align}
U=\left[X+\Theta\right]^*.\nonumber
\end{align}
The following inequality holds:
\begin{align}
\mathbb{E}\left[e^{-\frac{\text{SNR}}{8}U^2}\right]\leq\frac{1}{p}+\sqrt{\frac{2\pi/3}{\text{SNR}}}.\nonumber
\end{align}

\end{lemma}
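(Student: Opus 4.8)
The plan is to bound $\mathbb{E}\left[e^{-\frac{\text{SNR}}{8}U^2}\right]$ by conditioning on $\Theta$ and analyzing the conditional expectation over the uniform random variable $X$. Since $X$ is uniform over the $p$-point constellation $\left[\frac{L}{p}\mathbb{Z}_p\right]^*$, which consists of $p$ equally spaced points with spacing $L/p$ covering the interval $\BI$, the reduced sum $U=\left[X+\Theta\right]^*$ is, conditioned on $\Theta$, uniformly distributed over a \emph{shifted} version of this same $p$-point grid (reduced modulo $\BI$). The key geometric fact I would exploit is that reduction modulo $\BI$ permutes the grid points but does not change the fact that they remain $p$ equally spaced points with spacing $L/p$ in $\BI$. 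Therefore the conditional distribution of $U$ given $\Theta$ is supported on $p$ points, each carrying probability $1/p$, whose positions are some cyclic shift of $\left\{-L/2 + kL/p\right\}$.

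\textbf{The main estimate.}

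Given this, I would write
\begin{align}
\mathbb{E}\left[e^{-\frac{\text{SNR}}{8}U^2}\,\Big|\,\Theta\right]=\frac{1}{p}\sum_{k} e^{-\frac{\text{SNR}}{8}u_k^2},\nonumber
\end{align}
where $\{u_k\}$ are the $p$ shifted grid points. At most one of these points can lie in the central interval $(-L/2p,\,L/2p)$ closest to the origin; I would bound the contribution of that single nearest point trivially by $1$ (giving the $1/p$ term), and bound the remaining $p-1$ points by comparing the sum to an integral. Specifically, since the points are spaced at least $L/p=\sqrt{12}/p$ apart, I would dominate
\begin{align}
\frac{1}{p}\sum_{k\neq 0} e^{-\frac{\text{SNR}}{8}u_k^2}\leq \frac{1}{p}\cdot\frac{p}{L}\int_{-\infty}^{\infty} e^{-\frac{\text{SNR}}{8}t^2}\,dt=\frac{1}{L}\sqrt{\frac{8\pi}{\text{SNR}}},\nonumber
\end{align}
using the standard Gaussian integral $\int e^{-a t^2}dt=\sqrt{\pi/a}$ with $a=\text{SNR}/8$. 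Substituting $L=\sqrt{12}$ gives $\frac{1}{\sqrt{12}}\sqrt{8\pi/\text{SNR}}=\sqrt{\frac{8\pi}{12\,\text{SNR}}}=\sqrt{\frac{2\pi/3}{\text{SNR}}}$, which is exactly the second term in the claimed bound. Since this conditional bound is uniform in $\Theta$, taking the expectation over $\Theta$ preserves it.

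\textbf{The main obstacle.}

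The step I expect to require the most care is the integral-comparison argument that converts the sum $\frac{1}{p}\sum_{k\neq 0} e^{-\frac{\text{SNR}}{8}u_k^2}$ into a clean integral upper bound with the correct constant. One must verify that dropping the nearest point and treating the function $e^{-\frac{\text{SNR}}{8}t^2}$ as decreasing away from the origin lets each remaining grid-point value be dominated by the integral of the Gaussian over the grid cell of width $L/p$ adjacent to it on the side closer to the origin; this is a standard ``sum $\leq$ integral'' argument for a unimodal function, but it needs the spacing lower bound $L/p$ and the exclusion of the central point to be handled precisely so that no mass is double-counted and the factor $\frac{p}{L}$ (reciprocal spacing) emerges correctly. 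Once that monotone-rearrangement bookkeeping is done, the remaining arithmetic simplifying $\frac{1}{L}\sqrt{8\pi/\text{SNR}}$ to $\sqrt{2\pi/3}/\sqrt{\text{SNR}}$ is routine.
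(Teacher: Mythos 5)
Your proposal is correct and takes essentially the same approach as the paper: the paper likewise conditions on $\Theta$, uses its shifted-grid lemma to see that the conditional support of $U$ is a uniformly shifted $p$-point grid of spacing $L/p$, bounds the single nearest point by $1$ (yielding the $1/p$ term), and dominates the remaining terms by the Gaussian integral with the reciprocal-spacing factor, giving $\sqrt{(2\pi/3)/\text{SNR}}$. The only cosmetic difference is that the paper first enlarges the finite sum to the infinite sum $\sum_{k\in\mathbb{Z}}e^{-\rho(k+\tilde{\theta})^2}$ and invokes its pre-derived bound $1+\sqrt{\pi/\rho}$, whereas you bound the finite sum directly; the underlying sum-to-integral comparison is identical.
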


\vspace{2mm}

\begin{lemma}
\label{lemmaExp2}

Let $X$ be a random variable uniformly distributed over $\left[\frac{L}{p}\mathbb{Z}_p\right]^*$, and $\Theta$ be some random variable statistically independent of $X$. Define

\begin{align}
U=\left[(f+\varepsilon)X+\Theta\right]^*,\nonumber
\end{align}
where $f$ is a constant in $\mathbb{Z}_p$, and $\varepsilon\in\mathbb{R}$.
For any $f\in\mathbb{Z}_p$
\begin{align}
\mathbb{E}\left[e^{-\frac{\text{SNR}}{8}U^2}\right]\leq\frac{1}{p}+\sqrt{\frac{2\pi/3}{\delta^2(p,\varepsilon)\text{SNR}}},\nonumber
\end{align}
where
\begin{align}
\delta(p,\varepsilon)=\min_{l\in\mathbb{Z}_p\backslash\{0\}}l\cdot\bigg|\varepsilon-\frac{\lfloor l\varepsilon\rceil}{l}\bigg|.
\label{deltaDef2}
\end{align}

\end{lemma}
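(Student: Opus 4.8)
The plan is to condition on $\Theta$ and reduce the lemma to a deterministic estimate for $p$ points on the circle $\mathbb{R}/L\mathbb{Z}$. Since $X$ is uniform over the $p$ points $\frac{L}{p}j$, $j\in\mathbb{Z}_p$, fixing $\Theta=\theta$ makes $U$ uniform over the $p$ values $U_j=\left[(f+\varepsilon)\frac{L}{p}j+\theta\right]^*$. As the bound to be proved does not involve $\Theta$, it suffices to establish, for every fixed $\theta$,
$$\frac{1}{p}\sum_{j\in\mathbb{Z}_p}e^{-\frac{\text{SNR}}{8}U_j^2}\leq\frac{1}{p}+\sqrt{\frac{2\pi/3}{\delta^2(p,\varepsilon)\text{SNR}}},$$
and then take the expectation over $\Theta$. (If $\delta(p,\varepsilon)=0$ the right-hand side is infinite and there is nothing to prove, so I assume $\delta(p,\varepsilon)>0$.) This parallels the proof of Lemma~\ref{lemmaExpectation}; the new ingredient is that the coefficient $f+\varepsilon$ perturbs the equispaced grid underlying that lemma, so the heart of the argument is to control the resulting minimal spacing.

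The main step is to show that any two distinct points are separated on the circle by at least $d:=\frac{L}{p}\,\delta(p,\varepsilon)$. Writing $n=j-j'$ with $1\leq|n|\leq p-1$, one has $U_j-U_{j'}\equiv(f+\varepsilon)\frac{L}{p}n\pmod L$, so the circular distance between $U_j$ and $U_{j'}$ equals $\frac{L}{p}$ times the distance from $a:=(f+\varepsilon)n$ to the nearest multiple of $p$. I would decompose $a=M+s$ with $M=fn+\lfloor\varepsilon n\rceil\in\mathbb{Z}$ and $s=\varepsilon n-\lfloor\varepsilon n\rceil$; the definition~(\ref{deltaDef2}) of $\delta$ together with $1\leq|n|\leq p-1$ gives $|s|\geq\delta(p,\varepsilon)$, while $|s|\leq\tfrac12$. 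Reducing $M$ to its centered residue $m\in\{-(p-1)/2,\ldots,(p-1)/2\}$ and noting $|m+s|\leq p/2$, the nearest multiple of $p$ to $M+s$ is $0$, so the distance equals $|m+s|$. A two-case argument then finishes: if $m=0$ then $|m+s|=|s|\geq\delta(p,\varepsilon)$, and if $m\neq0$ then $|m+s|\geq1-\tfrac12\geq\delta(p,\varepsilon)$ since $\delta(p,\varepsilon)\leq\tfrac12$. I expect this computation — specifically, verifying that the integer contribution $fn$ cannot shrink the spacing below $\frac{L}{p}\delta(p,\varepsilon)$ — to be the main obstacle.

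Granted the separation $d$, the remaining estimate is routine. Ordering the points by their distance to $0$ as $r_{(1)}\leq r_{(2)}\leq\cdots$, the $k$ nearest lie in an arc of length $2r_{(k)}$ and have pairwise gaps at least $d$, forcing $r_{(k)}\geq(k-1)d/2$. Hence
$$\sum_{j\in\mathbb{Z}_p}e^{-\frac{\text{SNR}}{8}U_j^2}\leq\sum_{k\geq1}e^{-\frac{\text{SNR}}{8}\left(\frac{(k-1)d}{2}\right)^2}\leq1+\int_0^\infty e^{-\frac{\text{SNR}}{8}\left(\frac{xd}{2}\right)^2}dx=1+\frac{1}{d}\sqrt{\frac{8\pi}{\text{SNR}}}.$$
Dividing by $p$ and substituting $d=\frac{L}{p}\delta(p,\varepsilon)$ with $L=\sqrt{12}$ turns the second term into $\sqrt{\frac{2\pi/3}{\delta^2(p,\varepsilon)\text{SNR}}}$, yielding exactly the claimed bound. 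Since this estimate holds uniformly in $\theta$, averaging over $\Theta$ completes the proof.
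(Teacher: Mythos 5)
Your proposal is correct and follows essentially the same strategy as the paper's proof: condition on $\Theta=\theta$, lower bound the minimum distance of the constellation $\left[(f+\varepsilon)X+\theta\right]^*$ by $\frac{L}{p}\delta(p,\varepsilon)$, and then bound the resulting exponential sum, arriving at exactly the same constants. The two sub-steps are executed by slightly different elementary means. For the minimum distance, the paper's Lemma~\ref{lemmaDmin} disposes of the integer contribution $f(\tilde{x}_1-\tilde{x}_2)$ in one stroke by further reducing modulo $[-\frac{1}{2},\frac{1}{2})$, which directly leaves $\big|\left[\varepsilon(\tilde{x}_1-\tilde{x}_2)\right]_{\bmod [-\frac{1}{2},\frac{1}{2})}\big|\geq\delta(p,\varepsilon)$; your case analysis on the centered residue $m$ of $fn+\lfloor \varepsilon n\rceil$ modulo $p$ (using $|s|\leq\frac{1}{2}$, $|s|\geq\delta(p,\varepsilon)$ and $\delta(p,\varepsilon)\leq\frac{1}{2}$) reaches the same conclusion with a bit more bookkeeping, and is equally valid. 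For the sum, the paper uses Lemma~\ref{lemmaConstellation} to dominate the constellation sum by an infinite equispaced grid sum with spacing $d_{\text{min}}$ and then applies part~(b) of Lemma~\ref{lemaExpSeriesTmp}, whereas you use the packing bound $r_{(k)}\geq(k-1)d/2$ for points ordered by circular distance to the origin followed by an integral comparison; both give $\frac{1}{p}\left(1+\sqrt{\frac{8\pi}{\text{SNR}\cdot d^2}}\right)$ and hence the identical final bound, and your observation that the case $\delta(p,\varepsilon)=0$ is vacuous is a legitimate simplification. No gaps.
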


\vspace{2mm}

\begin{lemma}
\label{lemmaExp3}

Let $X_1$, $X_2$ and $X_3$ be three i.i.d random variables uniformly distributed over $\left[\frac{L}{p}\mathbb{Z}_p\right]^*$, and $\gamma\in\mathbb{R}$ some arbitrary constant. Define
\begin{align}
U=\left[X_1+\gamma\left(X_2-X_3\right)\right]^*.
\label{Udef3}
\end{align}
Then
\begin{align}
\mathbb{E}\left[e^{-\frac{\text{SNR}}{8}U^2}\right]&\leq\frac{1}{p^2}+\sqrt{\frac{2\pi/3}{\text{SNR}}}+\frac{1}{p}e^{-\frac{3\text{SNR}}{2p^2}\delta^2(p,\gamma)},\nonumber
\end{align}
where $\delta(\cdot,\cdot)$ is defined in~(\ref{deltaDef2}).
\end{lemma}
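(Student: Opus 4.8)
The plan is to reduce the statement to a computation on the one-dimensional lattice $\frac{L}{p}\mathbb{Z}_p$ that underlies the constellation. Writing $X_i=\frac{L}{p}k_i$ with each $k_i$ uniform over a set of centered residues modulo $p$, the scaling property of the modulo operation (reducing $\frac{L}{p}y$ into $[-\frac{L}{2},\frac{L}{2})$ is $\frac{L}{p}$ times reducing $y$ into $[-\frac{p}{2},\frac{p}{2})$) gives
\begin{align}
U=\frac{L}{p}\left[k_1+\gamma(k_2-k_3)\right]_{\bmod[-\frac{p}{2},\frac{p}{2})}.\nonumber
\end{align}
I would then condition on the integer difference $d=k_2-k_3$, which lies in $\{-(p-1),\ldots,p-1\}$, and split according to whether $d=0$. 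The only facts about the law of $d$ that are needed are $\Pr(d=0)=\Pr(X_2=X_3)=1/p$ and $\sum_{d\neq0}\Pr(d)=1-1/p$, so its (triangular) distribution never has to be written out.

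For $d=0$ we have $U=\frac{L}{p}k_1$, and the conditional expectation equals $\frac{1}{p}\sum_{k_1}e^{-\frac{3\text{SNR}}{2p^2}k_1^2}$ (using $L^2=12$); isolating the $k_1=0$ term, which equals $1$, and comparing the rest to a Gaussian integral yields $\frac{1}{p}+\sqrt{\frac{2\pi/3}{\text{SNR}}}$, exactly as in Lemma~\ref{lemmaExpectation}. For each fixed $d\neq0$, $U=\frac{L}{p}[k_1+\gamma d]_{\bmod[-\frac{p}{2},\frac{p}{2})}$ is a \emph{shifted} copy of the lattice: as $k_1$ runs over the residues, the values of $U$ are $\frac{L}{p}(n+s)$, where $n$ runs over the centered residues and $s=\gamma d-\lfloor\gamma d\rceil$ is the signed distance from $\gamma d$ to its nearest integer. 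The crucial number-theoretic step is that the term nearest the origin, $n=0$, has magnitude $\frac{L}{p}|s|=\frac{L}{p}\,|\gamma d-\lfloor\gamma d\rceil|$; writing $m=|d|\in\{1,\ldots,p-1\}$ and using that $x\mapsto x-\lfloor x\rceil$ is odd, this equals $\frac{L}{p}\,|\gamma m-\lfloor\gamma m\rceil|\geq\frac{L}{p}\,\delta(p,\gamma)$ directly from definition~(\ref{deltaDef2}). Hence the nearest term is at most $e^{-\frac{3\text{SNR}}{2p^2}\delta^2(p,\gamma)}$. This is precisely where the improvement over Lemma~\ref{lemmaExpectation} arises: there the shift is arbitrary and the nearest term can only be bounded by $1$, whereas here it is bounded away from the origin by $\delta(p,\gamma)$.

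The remaining terms $n\neq0$ form a unit-spaced grid whose points all lie at distance at least $1/2$ from the origin once the peak is removed, so a standard comparison of a unimodal summand with its integral (pairing each point with an adjacent unit interval on which it is the minimum) gives $\frac{1}{p}\sum_{n\neq0}e^{-\frac{3\text{SNR}}{2p^2}(n+s)^2}\leq\frac{1}{p}\int_{-\infty}^{\infty}e^{-\frac{3\text{SNR}}{2p^2}x^2}\,dx=\sqrt{\frac{2\pi/3}{\text{SNR}}}$, uniformly in $d$. Thus $\mathbb{E}\left[e^{-\frac{\text{SNR}}{8}U^2}\mid d\right]\leq\frac{1}{p}e^{-\frac{3\text{SNR}}{2p^2}\delta^2(p,\gamma)}+\sqrt{\frac{2\pi/3}{\text{SNR}}}$ for every $d\neq0$. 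Averaging over $d$,
\begin{align}
\mathbb{E}\left[e^{-\frac{\text{SNR}}{8}U^2}\right]&\leq\frac{1}{p}\left(\frac{1}{p}+\sqrt{\frac{2\pi/3}{\text{SNR}}}\right)\nonumber\\
&\quad+\left(1-\frac{1}{p}\right)\left(\frac{1}{p}e^{-\frac{3\text{SNR}}{2p^2}\delta^2(p,\gamma)}+\sqrt{\frac{2\pi/3}{\text{SNR}}}\right),\nonumber
\end{align}
and the two $\sqrt{2\pi/3/\text{SNR}}$ contributions combine (with weights $1/p$ and $1-1/p$) into a single such term, while $\frac{1}{p}\cdot\frac{1}{p}$ produces the $1/p^2$ term and $(1-1/p)\frac{1}{p}\leq\frac{1}{p}$ produces the exponential term, giving the claimed bound. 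I expect the main obstacle to be the number-theoretic step: correctly identifying the nearest lattice point and verifying that its distance is controlled by $\delta(p,\gamma)$ even though $d$ ranges over $\{-(p-1),\ldots,p-1\}$ rather than over $\mathbb{Z}_p\backslash\{0\}$; the integral comparison for the tail, while needing a little care at the innermost points, is routine.
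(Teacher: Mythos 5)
Your proof is correct and follows essentially the same route as the paper's: rescaling to the integer grid, conditioning on the difference $\Theta=\tilde{X}_2-\tilde{X}_3$ and splitting on $\Theta=0$ (with $\Pr(\Theta=0)=1/p$), identifying the conditional constellation as the centered grid shifted by the fractional part $\gamma\theta-\lfloor\gamma\theta\rceil$ (the paper's Lemma~\ref{lemmaMod}), bounding the resulting sum by the peak term plus a Gaussian integral (part (b) of Lemma~\ref{lemaExpSeriesTmp}), and lower-bounding the peak offset by $\delta(p,\gamma)$ since $|\theta|\in\{1,\ldots,p-1\}=\mathbb{Z}_p\backslash\{0\}$. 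The only cosmetic differences are that you inline these auxiliary facts rather than citing them and skip the triangular law of $\Theta$, which the paper likewise uses only through $\Pr(\Theta=0)=1/p$ and the total mass $1-1/p$ of the remaining values.
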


\vspace{2mm}

\begin{lemma}
\label{lemmaExp4}

Let $X$ be a random variable uniformly distributed over $\left[\frac{L}{p}\mathbb{Z}_p\right]^*$, and $\varepsilon\in\mathbb{R}$ some arbitrary constant. Define
\begin{align}
U=\left[f X+\varepsilon X-\varepsilon \right[r X\left]^*\right]^*,
\label{Udef4}
\end{align}
where $f\in\mathbb{Z}_p\backslash\{0\}$ and $r\in\mathbb{Z}_p\backslash\{0,1\}$.
Then
\begin{align}
\mathbb{E}\left[e^{-\frac{\text{SNR}}{8}U^2}\right]\leq \frac{p-1}{p}+\frac{1}{p}e^{-\frac{3\text{SNR}}{2p^2}\left(\varepsilon_{\bmod [-\frac{1}{4},\frac{1}{4})}\right)^2}.\nonumber
\end{align}
\end{lemma}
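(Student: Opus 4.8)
The plan is to treat the expectation as the finite average over the $p$ equally likely values of $X$ and to show that a single term carries the exponential decay while the remaining $p-1$ terms are controlled by the trivial bound $e^{-\frac{\text{SNR}}{8}U^{2}}\le 1$. Writing $X=\frac{L}{p}m$ with $m\in\mathbb{Z}_{p}$ and pulling the factor $\frac{L}{p}$ out of the outer reduction (so that reduction modulo $L$ becomes reduction of the bracketed real number into $[-\frac{p}{2},\frac{p}{2})$), one records the closed form
\begin{align}
U=\frac{L}{p}\,\bigl(fm+\varepsilon\,(m-(rm\bmod p))\bigr)_{\bmod[-\frac{p}{2},\frac{p}{2})},\nonumber
\end{align}
where $rm\bmod p$ denotes the representative of $rm$ in $\mathbb{Z}_{p}$. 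It then suffices to exhibit one index $m_{0}$ with $|U(m_{0})|\ge\frac{L}{p}\,\bigl|\varepsilon_{\bmod[-\frac{1}{4},\frac{1}{4})}\bigr|$: the excluded $p-1$ terms contribute at most $p-1$, the special term contributes at most $e^{-\frac{\text{SNR}}{8}\frac{L^{2}}{p^{2}}\left(\varepsilon_{\bmod[-\frac{1}{4},\frac{1}{4})}\right)^{2}}=e^{-\frac{3\text{SNR}}{2p^{2}}\left(\varepsilon_{\bmod[-\frac{1}{4},\frac{1}{4})}\right)^{2}}$ (using $L^{2}=12$), and dividing the average by $p$ reproduces the claimed bound exactly.

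The core of the argument is the construction of $m_{0}$, which uses the primality of $p$. Since $r\neq 1$, the element $1-r$ is invertible modulo $p$, and I would take $m_{0}\equiv(1-r)^{-1}\pmod p$. Inspecting representatives shows that, as long as $r\not\equiv-1$, this $m_{0}$ satisfies $m_{0}-(rm_{0}\bmod p)=1$ exactly rather than $1-p$; equivalently $[rx_{0}]^{*}=x_{0}-\frac{L}{p}$ is a single grid-step wrap. The closed form above then collapses to $U(m_{0})=\frac{L}{p}\bigl((fm_{0}+\varepsilon)\bmod p\bigr)$, and with $g:=fm_{0}\bmod p$ a nonzero integer (as $f,m_{0}\neq 0$ and $p$ is prime) the task reduces to the elementary inequality $\bigl|(g+\varepsilon)_{\bmod[-\frac{p}{2},\frac{p}{2})}\bigr|\ge\bigl|\varepsilon_{\bmod[-\frac{1}{4},\frac{1}{4})}\bigr|$ for every nonzero integer $g$. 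Letting $N$ be the integer nearest $\varepsilon$, either $g+N\equiv0\pmod p$, in which case the reduced value equals $\varepsilon-N$ and has magnitude $|\varepsilon_{\bmod[-\frac{1}{2},\frac{1}{2})}|\ge|\varepsilon_{\bmod[-\frac{1}{4},\frac{1}{4})}|$, or $g+N\not\equiv0$, in which case the reduced value sits within $\frac{1}{2}$ of a nonzero residue and hence has magnitude at least $\frac{1}{2}\ge|\varepsilon_{\bmod[-\frac{1}{4},\frac{1}{4})}|$.

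The single excluded case $r\equiv-1\pmod p$ is handled by direct computation: here $[rX]^{*}=-X$, so $U=[(f+2\varepsilon)X]^{*}$ and $U(1)=\frac{L}{p}\bigl((f+2\varepsilon)\bmod p\bigr)$, whose distance to the integer lattice is $|2\varepsilon_{\bmod[-\frac{1}{2},\frac{1}{2})}|=2\,|\varepsilon_{\bmod[-\frac{1}{4},\frac{1}{4})}|$; the same nearest-integer dichotomy then gives $|U(1)|\ge\frac{L}{p}|\varepsilon_{\bmod[-\frac{1}{4},\frac{1}{4})}|$. This is precisely where the factor of two, and therefore the reduction modulo $\frac{1}{2}$ rather than modulo $1$, enters: $|\varepsilon_{\bmod[-\frac{1}{4},\frac{1}{4})}|$ is the weakest of the per-case guarantees and so renders the bound uniformly valid. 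I expect the witness construction to be the main obstacle---specifically verifying that $(1-r)^{-1}$ produces a clean single-step wrap for every $r\not\equiv-1$ and controlling the boundary behaviour of the modular reductions---whereas collapsing the expectation to one decaying summand is routine.
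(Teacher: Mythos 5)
Your proof is correct and takes essentially the same route as the paper's: the same witness $m_0\equiv(1-r)^{-1}\pmod p$ (with the same verification that the wrap value $1-p$ forces $m_0=-\frac{p-1}{2}$ and hence $r\equiv-1$), the same separate treatment of $r=p-1$ via $\left[rX\right]^*=-X$ producing the factor of two and thus the $\bmod\left[-\frac{1}{4},\frac{1}{4}\right)$ reduction, and the same final step of bounding $p-1$ summands by $1$ and the witness term by the exponential. The only cosmetic difference is that the paper eliminates the integer term $f\tilde{x}$ immediately by weakening the $\bmod\left[-\frac{p}{2},\frac{p}{2}\right)$ reduction to a $\bmod\left[-\frac{1}{2},\frac{1}{2}\right)$ reduction, whereas you handle it through your nearest-integer dichotomy on $g+N$ modulo $p$; both are sound.
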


\vspace{2mm}

We also state two properties that are extensively used (sometimes implicitly) throughout the calculations in this section:

\vspace{2mm}

\begin{property}
For any $X\in\mathbb{R}$ and $a\in\mathbb{Z}$
\begin{align}
\left[a\left[X\right]^*\right]^*=\left[aX\right]^*.\nonumber
\end{align}
\end{property}

\vspace{2mm}

\begin{property}
For any $\alpha>0$ and $\beta>0$
\begin{align}
\left[\alpha X\right]_{\bmod [-\frac{\beta}{2},\frac{\beta}{2})}=\alpha\left[ X\right]_{\bmod [-\frac{\beta}{2\alpha},\frac{\beta}{2\alpha})}.\nonumber
\end{align}
\end{property}

\vspace{2mm}

We now turn to the analysis of the different cases of statistical dependencies.

\subsection*{\underline{Case A}}

In this case we have
\begin{align}
U_A&=\left[X_1+\gamma X_2-X\bTh-\gamma X\bFo\right]^*,\nonumber
\end{align}
where $X_1$, $X_2$, $X\bTh$ and $X\bFo$ are four i.i.d. random variables uniformly distributed over $\left[\frac{L}{p}\mathbb{Z}_p\right]^*$.
Let
\begin{align}
\bar{X}_1=\left[X_1-X\bTh\right]^*,\nonumber
\end{align}
and note that $\bar{X}_1$ is uniformly distributed over $\left[\frac{L}{p}\mathbb{Z}_p\right]^*$, and is statistically independent of $X_2$ and $X\bFo$. We have
\begin{align}
U_A=\left[\bar{X}_1+\gamma(X_2-X\bFo)\right]^*.\nonumber
\end{align}
Applying Lemma~\ref{lemmaExp3} gives
\begin{align}
\mathbb{E}\left[e^{-\frac{\text{SNR}}{8}U^2}\right]\leq\frac{1}{p^2}+\sqrt{\frac{2\pi/3}{\text{SNR}}}+\frac{1}{p}e^{-\frac{3\text{SNR}}{2p^2}\delta^2(p,\gamma)}.
\label{OmegaAtmp}
\end{align}
Denote by $\Omega_A$ the value of $\Omega$ associated with case $A$. Substituting (\ref{OmegaAtmp}) into (\ref{OmegaEq}) we have
\begin{align}
\Omega_A<\frac{1}{p^2}+\sqrt{\frac{2\pi/3}{\text{SNR}}}+\frac{1}{p}e^{-\frac{3\text{SNR}}{2p^2}\delta^2(p,\gamma)}+2e^{-\frac{3\text{SNR}}{8}}.
\label{OmegaA}
\end{align}

\subsection*{\underline{Case B}}

In this case
\begin{align}
&U_B=\bigg[X_1+\gamma X_2-X\bTh-\gamma\left[aX_1+bX_2+cX\bTh\right]^*\bigg]^*\nonumber\\
&=\bigg[X_1+\lfloor\gamma\rceil X_2+\left(\gamma-\lfloor\gamma\rceil\right)X_2-X\bTh\nonumber\\
& \ -\lfloor\gamma\rceil\left[aX_1+bX_2+cX\bTh\right]^*\nonumber\\
& \ -\left(\gamma-\lfloor\gamma\rceil\right)\left[aX_1+bX_2+cX\bTh\right]^*\bigg]^*\nonumber\\
&=\bigg[(1-\lfloor\gamma\rceil a)X_1+(\lfloor\gamma\rceil-\lfloor\gamma\rceil b)X_2+(-1-\lfloor\gamma\rceil c)X\bTh\nonumber\\
& \ +(\gamma-\lfloor\gamma\rceil)\left(X_2-\big[aX_1+bX_2+cX\bTh\big]^*\right)\bigg]^*,\nonumber
\end{align}
where $X_1$, $X_2$ and $X\bTh$ are three i.i.d. random variables uniformly distributed over $\left[\frac{L}{p}\mathbb{Z}_p\right]^*$.
Define
\begin{align}
d_1=\left[1-\lfloor\gamma\rceil a\right]_{\bmod p},\nonumber
\end{align}
\begin{align}
d_2=\left[\lfloor\gamma\rceil-\lfloor\gamma\rceil b\right]_{\bmod p},\nonumber
\end{align}
\begin{align}
d_3=\left[-1-\lfloor\gamma\rceil c\right]_{\bmod p},\nonumber
\end{align}
and
\begin{align}
\varepsilon=\gamma-\lfloor\gamma\rceil\in[-\frac{1}{2},\frac{1}{2}).\nonumber
\end{align}
Using these notations we have
\begin{align}
U_B&=\bigg[\left[d_1 X_1+d_2 X_2 + d_3 X\bTh\right]^*\nonumber\\
& \ +\varepsilon\left(X_2-\left[aX_1+bX_2+cX\bTh\right]^*\right)\bigg]^*.
\label{sumElements}
\end{align}
We now show that if the vectors $\left[d_1 \ d_3\right]$ and $\left[a \ c\right]$ are linearly independent (over $\mathbb{Z}_p$), the first and the second elements in the sum are statistically independent. To that end we prove the following lemma.

\vspace{2mm}

\begin{lemma}
 Let $A$ be a full-rank deterministic matrix with dimensions $M\times M$ over $\mathbb{Z}_p$ where $p$ is a prime number, and $\mathbf{X}$ a vector with dimensions $M\times 1$ containing elements which are i.i.d. random variables uniformly distributed over $\mathbb{Z}_p$.

 Let $\mathbf{\bar{X}} =A\mathbf{X}$ (with operations over $\mathbb{Z}_p$). Then the elements of $\mathbf{\bar{X}}$ are also i.i.d. random variables uniformly distributed over $\mathbb{Z}_p$.
\label{lemmaIso}
\end{lemma}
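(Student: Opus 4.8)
The plan is to exploit the fact that over a prime field full rank is equivalent to invertibility, so that the linear map in question is a \emph{bijection} of $\mathbb{Z}_p^M$ onto itself, and then to invoke the elementary principle that a bijection carries a uniform distribution to a uniform distribution. First I would observe that since $p$ is prime, $\mathbb{Z}_p$ is a field, and hence the full-rank matrix $A$ is invertible over $\mathbb{Z}_p$; denote its inverse by $A^{-1}$. Consequently the map $\mathbf{x}\mapsto A\mathbf{x}$ is a bijection from $\mathbb{Z}_p^M$ onto itself, with inverse $\bar{\mathbf{x}}\mapsto A^{-1}\bar{\mathbf{x}}$.

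Next I would note that the hypothesis that the entries of $\mathbf{X}$ are i.i.d.\ and uniform over $\mathbb{Z}_p$ is precisely the statement that $\mathbf{X}$ is uniform over the product set $\mathbb{Z}_p^M$, i.e.\ $\Pr(\mathbf{X}=\mathbf{x})=p^{-M}$ for every $\mathbf{x}\in\mathbb{Z}_p^M$. Using the bijection, for any $\bar{\mathbf{x}}\in\mathbb{Z}_p^M$ we then have
\begin{align}
\Pr(\bar{\mathbf{X}}=\bar{\mathbf{x}})=\Pr(A\mathbf{X}=\bar{\mathbf{x}})=\Pr(\mathbf{X}=A^{-1}\bar{\mathbf{x}})=p^{-M},\nonumber
\end{align}
so that $\bar{\mathbf{X}}$ is uniform over $\mathbb{Z}_p^M$ as well.

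Finally I would convert uniformity over $\mathbb{Z}_p^M$ back into the desired i.i.d.\ statement. Computing a marginal, for each coordinate $i$ and each $a\in\mathbb{Z}_p$ there are exactly $p^{M-1}$ vectors $\bar{\mathbf{x}}$ with $\bar{x}_i=a$, whence $\Pr(\bar{X}_i=a)=p^{M-1}\cdot p^{-M}=p^{-1}$, so each marginal is uniform over $\mathbb{Z}_p$. Since moreover the joint distribution factors as the product of these marginals, namely $\Pr(\bar{\mathbf{X}}=\bar{\mathbf{x}})=p^{-M}=\prod_{i=1}^{M}\Pr(\bar{X}_i=\bar{x}_i)$, the entries of $\bar{\mathbf{X}}$ are i.i.d.\ uniform over $\mathbb{Z}_p$, which is the claim. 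I do not anticipate any genuine obstacle: the sole place where primality is used is the invertibility of $A$, and everything else is the standard observation that a linear bijection preserves the uniform measure on a finite set.
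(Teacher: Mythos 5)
Your proposal is correct and follows essentially the same route as the paper's own proof: both argue that full rank makes $\mathbf{x}\mapsto A\mathbf{x}$ a bijection of $\mathbb{Z}_p^M$, so $\bar{\mathbf{X}}$ inherits the uniform distribution on $\mathbb{Z}_p^M$, which is equivalent to the entries being i.i.d.\ uniform. The only difference is that you spell out the final step (marginals and factorization of the joint probability) that the paper leaves implicit, which is a harmless elaboration.
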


\vspace{2mm}

\begin{proof}
Since $A$ is full-rank, for any vector $\mathbf{\bar{x}}$ there exists one and only one vector $\mathbf{x}$ that satisfies $\mathbf{\bar{x}} =A\mathbf{x}$. Since the elements of $\mathbf{X}$ are i.i.d. and uniformly distributed, the vector $\mathbf{X}$ is uniformly distributed over $\mathbb{Z}_p^M$, which implies that
\begin{align}
\Pr\left(\mathbf{\bar{X}}=\mathbf{\bar{x}}\right)=\Pr\left(\mathbf{X}=\mathbf{x}\right)=\left(\frac{1}{p}\right)^M\nonumber
\end{align}
for every vector $\mathbf{\bar{x}}\in\mathbb{Z}_p^M$. This in turn implies that the elements of $\mathbf{\bar{X}}$ are i.i.d with uniform distribution over $\mathbb{Z}_p$.
\end{proof}

\vspace{2mm}

Let
\begin{align}
\left(
  \begin{array}{c}
    \bar{X}_1 \\
    \bar{X}_2 \\
    \bar{X}_3 \\
  \end{array}
\right)=\left(
          \begin{array}{ccc}
            d1 & d2 & d3 \\
            0 & 1 & 0 \\
            a & b & c \\
          \end{array}
        \right)\left(\begin{array}{c}
                 X_1 \\
                 X_2 \\
                 X\bTh
               \end{array}\right) \ {\bmod \BI}.
               \label{LinearEq}
\end{align}
With this notation~(\ref{sumElements}) can be rewritten as
\begin{align}
U_B&=\left[\bar{X}_1 +\varepsilon(\bar{X}_2-\bar{X}_3)\right]^*.\nonumber
\end{align}
We now have to distinguish between the case where $\left[d_1 \ d_3\right]$ and $\left[a \ c\right]$ are linearly independent which will be referred to as \emph{Case $B1$}, and the case where they are linearly dependent. The case where $\left[d_1 \ d_3\right]$ and $\left[a \ c\right]$ are linearly dependent and $\left[a \ c\right]\neq\left[0 \ 0\right]$ will be called \emph{Case $B2$}, and the case where $\left[a \ c\right]=\left[0 \ 0\right]$ will be called \emph{Case $B3$}.

\vspace{2mm}

\emph{\underline{Case B1}} - If $\left[d_1 \ d_3\right]$ and $\left[a \ c\right]$ are linearly independent, the matrix in (\ref{LinearEq}) is full rank, and it follows from Lemma~\ref{lemmaIso} that $\left\{\bar{X}_1,\bar{X}_2,\bar{X}_3\right\}$ are each uniformly distributed over $\left[\frac{L}{p}\mathbb{Z}_p\right]^*$ and are statistically independent. In this case
\begin{align}
U_{B1}=\left[\bar{X}_1+\Theta_{B1}\right]^*,\nonumber
\end{align}
where
\begin{align}
\Theta_{B1}=\varepsilon(\bar{X}_2-\bar{X}_3)\nonumber
\end{align}
 is statistically independent of $\bar{X}_1$.

Applying Lemma~\ref{lemmaExpectation} gives
\begin{align}
\mathbb{E}\left[e^{-\frac{\text{SNR}}{8}U_{B1}^2}\right]\leq\frac{1}{p}+\sqrt{\frac{2\pi/3}{\text{SNR}}}.
\label{OmegaB1tmp}
\end{align}
Denote by $\Omega_{B1}$ the value of $\Omega$ associated with case $B1$. Substituting (\ref{OmegaB1tmp}) into (\ref{OmegaEq}) we have
\begin{align}
\Omega_{B1}<\frac{1}{p}+\sqrt{\frac{2\pi/3}{\text{SNR}}}+2e^{-\frac{3\text{SNR}}{8}}.
\label{OmegaB1}
\end{align}

\vspace{2mm}

\emph{\underline{Case B2}} - We now consider the case where $\left[d_1 \ d_3\right]$ and $\left[a \ c\right]$ are linearly dependent and $\left[a \ c\right]\neq\left[0 \ 0\right]$. In this case we have
\begin{align}
\left[d_1 X_1+d_3 X\bTh\right]^*=\left[r\left(a X_1+c X\bTh\right)\right]^*,\nonumber
\end{align}
for some $r\in\mathbb{Z}_p$. Let
\begin{align}
X_3=\left[a X_1 + c X\bTh\right]^*,\nonumber
\end{align}
and note that $X_3$ is uniformly distributed over $\left[\frac{L}{p}\mathbb{Z}_p\right]^*$ and is statistically independent of $X_2$. We can rewrite $U_{B2}$ as
\begin{align}
U_{B2}=\left[rX_3+d_2 X_2+\varepsilon\left(X_2-\left[X_3+b X_2\right]^*\right)\right]^*.\nonumber
\end{align}
Now let
\begin{align}
X_4=[X_3+b X_2]^*,\nonumber
\end{align}
and note that $X_4$ is statistically independent of $X_2$. Using this notation, we have
\begin{align}
U_{B2}&=\left[rX_4+\left(d_2-r b\right)X_2+\varepsilon\left(X_2-X_4\right)\right]^*\nonumber\\
&=\left[\left(f+\varepsilon\right)X_2+\left(r-\varepsilon\right)X_4\right]^*,
\label{VcaseB}
\end{align}
where
\begin{align}
f=\left[d_2-r b\right]_{\bmod p}.\nonumber
\end{align}
Let
\begin{align}
\Theta_{B2}=\left(r-\varepsilon\right)X_4,\nonumber
\end{align}
and note that $\Theta_{B2}$ is statistically independent of $X_2$.
Using Lemma~\ref{lemmaExp2} we have
\begin{align}
\mathbb{E}\left[e^{-\frac{\text{SNR}}{8}U_{B2}^2}\right]\leq\frac{1}{p}+\sqrt{\frac{2\pi/3}{\delta^2(p,\gamma)\text{SNR}}},
\label{OmegaB2tmp}
\end{align}
which means that
\begin{align}
\Omega_{B2}<\frac{1}{p}+\sqrt{\frac{2\pi/3}{\delta^2(p,\varepsilon)\text{SNR}}}+2e^{-\frac{3\text{SNR}}{8}}.
\label{OmegaB2tmp}
\end{align}
We further note that
\begin{align}
\delta(p,\gamma)&=\min_{l\in\mathbb{Z}_p\backslash\{0\}}l\cdot\bigg|\gamma-\frac{\lfloor l\gamma\rceil}{l}\bigg|\nonumber\\
&=\min_{l\in\mathbb{Z}_p\backslash\{0\}}l\cdot\bigg|\varepsilon-\frac{\lfloor l\varepsilon\rceil}{l}\bigg|=\delta(p,\varepsilon).\nonumber
\end{align}
Thus, (\ref{OmegaB2tmp}) is equivalent to
\begin{align}
\Omega_{B2}<\frac{1}{p}+\sqrt{\frac{2\pi/3}{\delta^2(p,\gamma)\text{SNR}}}+2e^{-\frac{3\text{SNR}}{8}}.
\label{OmegaB2}
\end{align}

\vspace{2mm}

\emph{\underline{Case $B3$}} - Since $a=c=0$ it follows that $d_1=1$ and $d_3=[-1]_{\bmod p}=p-1$. Thus,~(\ref{sumElements}) can be written as
\begin{align}
U_{B3}&=\bigg[X_1+d_2 X_2 + (p-1) X\bTh+\varepsilon\left(X_2-\left[bX_2\right]^*\right)\bigg]^*.\nonumber
\end{align}
Letting
\begin{align}
\Theta_{B3}&=\bigg[d_2 X_2 + (p-1) X\bTh+\varepsilon\left(X_2-\left[bX_2\right]^*\right)\bigg]^*,\nonumber
\end{align}
and using the fact that $X_1$ is uniformly distributed over $\left[\frac{L}{p}\mathbb{Z}_p\right]^*$ and is statistically independent of $\Theta_{B3}$, Lemma~\ref{lemmaExpectation} can be applied, which yields
\begin{align}
\mathbb{E}\left[e^{-\frac{\text{SNR}}{8}U_{B3}^2}\right]\leq\frac{1}{p}+\sqrt{\frac{2\pi/3}{\text{SNR}}}.
\label{OmegaB3tmp}
\end{align}
Denote by $\Omega_{B3}$ the value of $\Omega$ associated with case $B3$. Substituting (\ref{OmegaB3tmp}) into (\ref{OmegaEq}) gives
\begin{align}
\Omega_{B3}<\frac{1}{p}+\sqrt{\frac{2\pi/3}{\text{SNR}}}+2e^{-\frac{3\text{SNR}}{8}}.
\label{OmegaB3}
\end{align}
Since $\delta^2(p,\gamma)<1$, combining~(\ref{OmegaB1}) with~(\ref{OmegaB2}) and~(\ref{OmegaB3}) yields
\begin{align}
\Omega_B<\frac{1}{p}+\sqrt{\frac{2\pi/3}{\delta^2(p,\varepsilon)\text{SNR}}}+2e^{-\frac{3\text{SNR}}{8}},
\label{OmegaBtmp}
\end{align}
for all possible values of $a$, $b$, and $c$.

\subsection*{\underline{Case C}}

In this case
\begin{align}
U&=\big[X_1+\gamma X_2-\left[aX_1+bX_2+cX\bFo\right]^*-\gamma X\bFo\big]^*\nonumber\\
&=\left[(1-a)X_1+(\gamma-b)X_2+(-\gamma-c)X\bFo\right]^*,
\label{thirdCase}
\end{align}
where $X_1$, $X_2$ and $X\bFo$ are three i.i.d. random variables uniformly distributed over $\left[\frac{L}{p}\mathbb{Z}_p\right]^*$.

We distinguish between the case where $a\neq 1$, which we refer to as \emph{Case $C1$}, and the case where $a=1$, which we refer to as \emph{Case $C2$}.
\vspace{2mm}

\emph{\underline{Case $C1$}} - Since $a\neq 1$ the random variable $\left[(1-a)X_1\right]^*$ is uniformly distributed over $\left[\frac{L}{p}\mathbb{Z}_p\right]^*$. We further define the random variable
\begin{align}
\Theta_{C1}=(\gamma-b)X_2+(-\gamma-c)X\bFo,\nonumber
\end{align}
which is statistically independent of $\left[(1-a)X_1\right]^*$.
Applying Lemma~\ref{lemmaExpectation} yields
\begin{align}
\mathbb{E}\left[e^{-\frac{\text{SNR}}{8}U_{C1}^2}\right]\leq\frac{1}{p}+\sqrt{\frac{2\pi/3}{\text{SNR}}}.
\label{OmegaC1tmp}
\end{align}
Denote by $\Omega_{C1}$ the value of $\Omega$ associated with case $C1$. Substituting (\ref{OmegaC1tmp}) into (\ref{OmegaEq}) we have
\begin{align}
\Omega_{C1}<\frac{1}{p}+\sqrt{\frac{2\pi/3}{\text{SNR}}}+2e^{-\frac{3\text{SNR}}{8}}.
\label{OmegaC1}
\end{align}

\vspace{2mm}

\emph{\underline{Case $C2$}} - Since $a=1$,~(\ref{thirdCase}) can be rewritten as
\begin{align}
U_{C2}&=\left[(\gamma-b)X_2+(-\gamma-c)X\bFo\right]^*\nonumber\\
&=\left[(\varepsilon+\lfloor\gamma\rceil-b)X_2+(-\varepsilon-\lfloor\gamma\rceil-c)X\bFo\right]^*\nonumber\\
&=\left[(f+\varepsilon)X_2+(r-\varepsilon)X\bFo\right]^*
\label{Vc}
\end{align}
where
\begin{align}
\varepsilon=\gamma-\lfloor\gamma\rceil,\nonumber
\end{align}
\begin{align}
f=\left[\lfloor\gamma\rceil-b\right]_{\bmod p},\nonumber
\end{align}
and
\begin{align}
r=\left[-\lfloor\gamma\rceil-c\right]_{\bmod p}.\nonumber
\end{align}
Letting
\begin{align}
\Theta_{C2}=(r-\varepsilon)X\bFo,\nonumber
\end{align}
and applying Lemma~\ref{lemmaExp2} gives
\begin{align}
\Omega_{C2}<\frac{1}{p}+\sqrt{\frac{2\pi/3}{\delta^2(p,\gamma)\text{SNR}}}+2e^{-\frac{3\text{SNR}}{8}}.
\label{OmegaC2}
\end{align}
Combining (\ref{OmegaC1}) with (\ref{OmegaC2}), and using the fact that $\delta(p,\gamma)<1$, yields
\begin{align}
\Omega_C<\frac{1}{p}+\sqrt{\frac{2\pi/3}{\delta^2(p,\gamma)\text{SNR}}}+2e^{-\frac{3\text{SNR}}{8}},
\label{OmegaC}
\end{align}
for all possible values of $a$, $b$, and $c$.

\subsection*{\underline{Case D}}

In this case we have
\begin{align}
U_D&=\big[X_1+\gamma X_2-aX_1-bX_2-\gamma \left[cX_1+dX_2\right]^*\big]^*\nonumber\\
&=\bigg[(1-a-\lfloor\gamma\rceil c)X_1+(\lfloor\gamma\rceil-b-\lfloor\gamma\rceil d)X_2\nonumber\\
& \ \ \ \  +\varepsilon\left(X_2-\left[cX_1+dX_2\right]^*\right)\bigg]^*,
\label{Case4}
\end{align}
where $X_1$ and $X_2$ are two i.i.d. random variables uniformly distributed over $\left[\frac{L}{p}\mathbb{Z}_p\right]^*$, and $\varepsilon=\gamma-\lfloor\gamma\rceil$ as before. Further, letting
\begin{align}
d_1=\left[1-a-\lfloor\gamma\rceil c\right]_{\bmod p},\nonumber
\end{align}
and
\begin{align}
d_2=\left[\lfloor\gamma\rceil-b-\lfloor\gamma\rceil d\right]_{\bmod p},\nonumber
\end{align}
we have
\begin{align}
U_{D}&=\left[d_1 X_1+d_2 X_2+\varepsilon\left(X_2-\left[cX_1+dX_2\right]^*\right)\right]^*.\nonumber
\end{align}
This is the most complicated case in terms of the number of different combinations of $a$, $b$, $c$ and $d$ we have to consider. \emph{Case $D1$} corresponds to $c\neq 0$, \emph{Case $D2$} to $\{c=0,a\neq 1\}$, \emph{Case $D3$} to $\{c=0,a=1,d=1\}$  and finally the case where $\{c=0,a=1,d\geq 2\}$ will be referred to as \emph{Case $D4$}.
The case  where $\{c=0,a=1,d=0\}$ does not have to be considered as in this case $\mathbf{w}\bFo=0$, and hence the message vectors $\mathbf{w}\bTh$ and $\mathbf{w}\bFo$ are linearly dependent. As we recall, the decoder in our scheme does not consider such pairs of message vectors.

We denote by $\Omega_{Di}, \ i=1,\ldots,4$, the value of $\Omega$ associated with Case $Di$.

\vspace{2mm}

\emph{\underline{Case $D1$}} - Define

\begin{align}
X_3=[c X_1+d X_2]^*,\nonumber
\end{align}
which is statistically independent of $X_2$ as $c\neq0$.
Now
\begin{align}
U_{D1}&=\left[d_1 X_1+d_2 X_2+\varepsilon(X_2-X_3)\right]^*\nonumber\\
&=\left[(d_1 c^{-1})cX_1+d_2 X_2+\varepsilon(X_2-X_3)\right]^*\nonumber\\
&=\bigg[(d_1 c^{-1})(cX_1+d X_2)\nonumber\\
& \ +(d_2- d_1 c^{-1}d)X_2+\varepsilon(X_2-X_3)\bigg]^*,
\label{UdTmp}
\end{align}
where $c^{-1}\in\mathbb{Z}_p$ is the inverse element of $c$ in the field $\mathbb{Z}_p$. Let
\begin{align}
r=\left[d_1 c^{-1}\right]_{\bmod p},\nonumber
\end{align}
and
\begin{align}
f=\left[d_2- d_1 c^{-1}d\right]_{\bmod p}.\nonumber
\end{align}
With these notations, (\ref{UdTmp}) can be written as
\begin{align}
U_{D1}&=\left[r X_3+f X_2+\varepsilon(X_2-X_3)\right]^*\nonumber\\
&=\left[(f+\varepsilon) X_2+(r-\varepsilon) X_3\right]^*.
\label{Ud}
\end{align}
Letting
\begin{align}
\Theta_{D1}=(r-\varepsilon) X_3,\nonumber
\end{align}
and applying Lemma~\ref{lemmaExp2} gives
\begin{align}
\Omega_{D1}<\frac{1}{p}+\sqrt{\frac{2\pi/3}{\delta^2(p,\gamma)\text{SNR}}}+2e^{-\frac{3\text{SNR}}{8}}.
\label{OmegaD1}
\end{align}

\vspace{2mm}

\emph{\underline{Case $D2$}} - Since $c=0$, (\ref{Case4}) becomes
\begin{align}
U_{D2}&=\bigg[(1-a)X_1+d_2 X_2+\varepsilon(X_2-\left[dX_2\right]^*)\bigg]^*.\nonumber
\end{align}
We define
\begin{align}
\Theta_{D2}=d_2 X_2+\varepsilon(X_2-\left[dX_2\right]^*),\nonumber
\end{align}
which is statistically independent of $(1-a)X_1$. Since $a\neq 1$ the random variable $[(a-1)X_1]^*$ is uniformly distributed over $\left[\frac{L}{p}\mathbb{Z}_p\right]^*$. We can therefore apply Lemma~\ref{lemmaExpectation} and get
\begin{align}
\Omega_{D2}<\frac{1}{p}+\sqrt{\frac{2\pi/3}{\text{SNR}}}+2e^{-\frac{3\text{SNR}}{8}}.
\label{OmegaD2}
\end{align}

\vspace{2mm}

\emph{\underline{Case $D3$}} - Substituting $c=0$, $a=1$ and $d=1$ into (\ref{Case4}) gives

\begin{align}
U_{D3}=[-b X_2]^*.\nonumber
\end{align}
Applying Lemma~\ref{lemmaExpectation} with $\Theta=0$ gives
\begin{align}
\Omega_{D3}<\frac{1}{p}+\sqrt{\frac{2\pi/3}{\text{SNR}}}+2e^{-\frac{3\text{SNR}}{8}},
\label{OmegaD3}
\end{align}
for any $b\neq 0$. The case $b=0$ is not interesting because it implies $\mathbf{w}\bTh=\mathbf{w}_1$ and $\mathbf{w}\bFo=\mathbf{w}_2$, and an error does not occur.

\vspace{2mm}

\emph{\underline{Case $D4$}} -
We are left only with the case $a=1,c=0,d\geq2$ for which
\begin{align}
U_{D4}=\left[d_2 X_2+\varepsilon X_2-\varepsilon\left[d X_2\right]^*\right]^*.
\label{X2eq}
\end{align}
Since $X_2$ is uniformly distributed over $\left[\frac{L}{p}\mathbb{Z}_p\right]^*$, $d_2\in\mathbb{Z}_p\backslash\{0\}$ and $d\in\mathbb{Z}_p\backslash\{0,1\}$, we can apply Lemma~\ref{lemmaExp4} which gives
\begin{align}
\Omega_{D4}<\frac{p-1}{p}+\frac{1}{p}e^{-\frac{3\text{SNR}}{2p^2}\left(\varepsilon_{\bmod [-\frac{1}{4},\frac{1}{4})}\right)^2}+2e^{-\frac{3\text{SNR}}{8}}.
\label{OmegaD4}
\end{align}
Combining (\ref{OmegaD1}), (\ref{OmegaD2}), (\ref{OmegaD3}), (\ref{OmegaD4}) and the fact that $\delta(p,\gamma)<1$, we conclude that for all values of $a$, $b$, $c$, and $d$ that are considered by the decoder (except for $a=1,b=0,c=0,d=1$ which does not incur an error event), we have\footnote{We also used the fact that $\gamma_{\bmod [-\frac{1}{4},\frac{1}{4})}=\varepsilon_{\bmod [-\frac{1}{4},\frac{1}{4})}$.}
\begin{align}
\Omega_D<\max\bigg\{&\frac{1}{p}+\sqrt{\frac{2\pi/3}{\delta^2(p,\gamma)\text{SNR}}}+2e^{-\frac{3\text{SNR}}{8}},\nonumber\\
&\frac{p-1}{p}+\frac{1}{p}e^{-\frac{3\text{SNR}}{2p^2}\left(\gamma_{\bmod [-\frac{1}{4},\frac{1}{4})}\right)^2}+2e^{-\frac{3\text{SNR}}{8}}\bigg\}.
\label{OmegaD}
\end{align}


\subsection{Proofs of Lemmas~\ref{lemmaExpectation}, \ref{lemmaExp2}, \ref{lemmaExp3} and~\ref{lemmaExp4}.}
\label{subsec:proofs}

The aim of this subsection is to prove Lemmas~\ref{lemmaExpectation}, \ref{lemmaExp2}, \ref{lemmaExp3} and~\ref{lemmaExp4}. We begin by deriving some auxiliary lemmas that will be used.
\vspace{2mm}
\begin{lemma}
\label{lemaExpSeriesTmp}
\mbox{}
\begin{enumerate}
\item [(a)]\label{partA}
For any $\theta\in\left[-\frac{1}{2},\frac{1}{2}\right)$ and $\rho>0$
\begin{align}
\sum_{k=-\infty}^{\infty} e^{-\rho(k+\theta)^2}<e^{-\rho\theta^2}+\int_{-\infty}^{\infty}e^{-\rho x^2}dx.
\label{sumEq}
\end{align}
\item [(b)]\label{partB}
For any $\theta\in\mathbb{R}$ and $\rho>0$
\begin{align}
\sum_{k=-\infty}^{\infty} e^{-\rho(k+\theta)^2}<e^{-\rho(\theta-\lfloor \theta\rceil)^2}+\sqrt{\frac{\pi}{\rho}}\leq 1+\sqrt{\frac{\pi}{\rho}}.
\label{sumEq1}
\end{align}
\end{enumerate}
\end{lemma}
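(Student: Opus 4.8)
The plan is to prove part (a) directly by a sum‑versus‑integral comparison, and then obtain part (b) from part (a) using the periodicity of the sum in $\theta$. For part (a) it is convenient to center the Gaussian: with $g(u)=e^{-\rho u^2}$ the left‑hand side is $\sum_{k\in\mathbb{Z}}g(k+\theta)$, the first term on the right is $g(\theta)$ (because $\theta\in[-\tfrac12,\tfrac12)$ makes the lattice point $k+\theta$ with $k=0$ the one closest to the origin), and the second term is $\int_{-\infty}^{\infty}g(u)\,du$. Thus it suffices to prove $\sum_{k\neq 0}g(k+\theta)<\int_{-\infty}^{\infty}g(u)\,du$. The idea is to tile $\mathbb{R}$ by the unit intervals $[p_k,p_{k+1}]$ with $p_k=k+\theta$, set aside the central term $g(\theta)$, and charge every remaining value $g(p_k)$ to the adjacent unit interval lying on the side \emph{away} from the origin: I would use $[p_{k-1},p_k]$ for $k\geq 1$ and $[p_k,p_{k+1}]$ for $k\leq -1$. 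Since $g$ is even and strictly decreasing in $|u|$, on each of these intervals $g$ dominates the value $g(p_k)$, so $g(p_k)\leq \int g$ over that interval; these intervals are disjoint and tile $\mathbb{R}$, so summing gives $\sum_{k\neq 0}g(p_k)\leq\int_{-\infty}^{\infty}g$.

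The one step that is not immediate — and the main obstacle — is the single unit interval that contains the peak $u=0$. Depending on the sign of $\theta$ this is $[p_{-1},p_0]$ (when $\theta\in(0,\tfrac12)$) or $[p_0,p_1]$ (when $\theta\in(-\tfrac12,0)$), and there $g$ is not monotone, so the plain monotonicity bound fails for exactly one term. Here I would exploit that the non‑central endpoint is strictly farther from the origin than $\theta$: for $\theta\in(0,\tfrac12)$ one has $|p_{-1}|=1-\theta>\theta=|p_0|$, so $g(p_{-1})$ is in fact the minimum of $g$ over $[p_{-1},p_0]$, whence $\int_{p_{-1}}^{p_0}g> g(p_{-1})$, which is precisely the bound needed for that term. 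The case $\theta\in(-\tfrac12,0)$ is symmetric, and $\theta=0$ needs no special treatment. Strictness of the overall inequality then follows because $g$ is non‑constant on every interval used, so at least one of the comparisons is strict.

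For part (b) I would simply note that $\sum_{k}e^{-\rho(k+\theta)^2}$ is invariant under $\theta\mapsto\theta+1$ (reindex $k$), hence periodic with period $1$. Writing $\theta'=\theta-\lfloor\theta\rceil\in[-\tfrac12,\tfrac12)$ and reindexing gives $\sum_{k}e^{-\rho(k+\theta)^2}=\sum_{k}e^{-\rho(k+\theta')^2}$, to which part (a) applies verbatim. Using $\int_{-\infty}^{\infty}e^{-\rho x^2}\,dx=\sqrt{\pi/\rho}$ this yields $\sum_{k}e^{-\rho(k+\theta)^2}<e^{-\rho(\theta-\lfloor\theta\rceil)^2}+\sqrt{\pi/\rho}$, and the final bound follows from $e^{-\rho(\theta-\lfloor\theta\rceil)^2}\leq 1$.
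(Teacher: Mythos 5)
Your proof is correct and is essentially the paper's own argument: both isolate the central term $e^{-\rho\theta^2}$ and bound each remaining term by the integral of $e^{-\rho x^2}$ over the adjacent unit interval having that lattice point as its endpoint farther from the origin (the paper folds the negative tail onto the positive side and writes each term as $\min_{x\in[k-1+\theta,\,k+\theta]}e^{-\rho x^2}$, which is exactly your charging scheme, with the single interval straddling the peak handled by the same observation that the far endpoint attains the minimum), and both obtain part (b) from part (a) via the shift $\theta\mapsto\theta-\lfloor\theta\rceil$ together with $\int_{-\infty}^{\infty}e^{-\rho x^2}\,dx=\sqrt{\pi/\rho}$. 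One cosmetic remark: your intervals $[p_{k-1},p_k]$ for $k\geq1$ and $[p_k,p_{k+1}]$ for $k\leq-1$ lie on the side \emph{toward} the origin (with $p_k$ as the far endpoint), so your phrase ``on the side away from the origin'' is a slip of wording, not of mathematics.
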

\begin{proof}
In order to prove part~(a) of the lemma, we write
\begin{align}
\sum_{k=-\infty}^{\infty} e^{-\rho(k+\theta)^2}&
= e^{-\rho\theta^2}+\sum_{k=-\infty}^{-1} e^{-\rho(k+\theta)^2}+\sum_{k=1}^{\infty} e^{-\rho(k+\theta)^2}\nonumber\\
&= e^{-\rho\theta^2}+\sum_{k=1}^{\infty} e^{-\rho(k-\theta)^2}+\sum_{k=1}^{\infty} e^{-\rho(k+\theta)^2}.
\label{Proof2}
\end{align}
We have
\begin{align}
\int_{\theta}^{\infty}e^{-\rho x^2}dx&>\sum_{k=1}^{\infty} \min_{x\in[k-1+\theta,k+\theta]}e^{-\rho x^2}\nonumber\\
&=\sum_{k=1}^{\infty} e^{-\rho (k+\theta)^2},
\label{int2sum1}
\end{align}
and
\begin{align}
\int_{-\infty}^{\theta}e^{-\rho x^2}dx=\int_{-\theta}^{\infty}e^{-\rho x^2}dx&>\sum_{k=1}^{\infty} \min_{x\in[k-1-\theta,k-\theta]}e^{-\rho x^2}\nonumber\\
&=\sum_{k=1}^{\infty} e^{-\rho (k-\theta)^2}.
\label{int2sum2}
\end{align}
Substituting (\ref{int2sum1}) and (\ref{int2sum2}) into (\ref{Proof2}) yields
\begin{align}
\sum_{k=-\infty}^{\infty} e^{-\rho(k+\theta)^2}&< e^{-\rho\theta^2}+\int_{-\infty}^{\theta}e^{-\rho x^2}dx+\int_{\theta}^{\infty}e^{-\rho x^2}dx\nonumber\\
&=e^{-\rho\theta^2}+\int_{-\infty}^{\infty}e^{-\rho x^2}dx,
\label{Proof3}
\end{align}
which establishes the first part of the lemma.

In order to prove part~(b), we have
\begin{align}
\sum_{k=-\infty}^{\infty} e^{-\rho(k+\theta)^2}=\sum_{k=-\infty}^{\infty} e^{-\rho(k+\lfloor \theta\rceil+(\theta-\lfloor \theta\rceil))^2}.
\label{Proof2a}
\end{align}
Letting $\tilde{\theta}=\theta-\lfloor \theta\rceil\in\left[-\frac{1}{2},\frac{1}{2}\right)$ and $\tilde{k}=k+\lceil \theta\rfloor$, we have
\begin{align}
&\sum_{k=-\infty}^{\infty} e^{-\rho(k+\theta)^2}=\sum_{\tilde{k}=-\infty}^{\infty} e^{-\rho(\tilde{k}+\tilde{\theta})^2}\nonumber\\
&<e^{-\rho\tilde{\theta}^2}+\int_{-\infty}^{\infty}e^{-\rho x^2}dx\nonumber\\
&=e^{-\rho\tilde{\theta}^2}+\sqrt{\frac{\pi}{\rho}}\int_{-\infty}^{\infty}\frac{1}{\sqrt{2\pi\frac{1}{2\rho}}} e^{-\frac{1}{2}\frac{x^2}{1/2\rho}}dx\nonumber\\
&=e^{-\rho\tilde{\theta}^2}+\sqrt{\frac{\pi}{\rho}},
\label{Proof2b}
\end{align}
where we have used part~(a) of the lemma for the first inequality.
\end{proof}

\begin{lemma}
\label{lemmaMod}
For any $\theta\in\mathbb{R}$ and a prime number $p$
\begin{align}
\left[\mathbb{Z}_p+\theta\right]_{\bmod [-\frac{p}{2},\frac{p}{2})}\equiv \mathbb{Z}_p-\frac{p-1}{2}+\tilde{\theta},
\label{LemmaModEq}
\end{align}
where $\tilde{\theta}=\theta-\lfloor \theta\rceil\in\left[-\frac{1}{2},\frac{1}{2}\right)$, and the notation $\equiv$ stands for equality between sets of points (constellations).
\end{lemma}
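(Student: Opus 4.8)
The plan is to reduce the statement to the elementary fact that any block of $p$ consecutive integers forms a complete residue system modulo $p$, whose representatives inside $[-\frac{p}{2},\frac{p}{2})$ are exactly the centered ones. First I would split off the integer part by writing $\theta=\lfloor\theta\rceil+\tilde{\theta}$ with $\tilde{\theta}\in[-\frac12,\frac12)$, so that
\begin{align}
\mathbb{Z}_p+\theta=\left(\mathbb{Z}_p+\lfloor\theta\rceil\right)+\tilde{\theta},\nonumber
\end{align}
and observe that $\mathbb{Z}_p+\lfloor\theta\rceil=\{\lfloor\theta\rceil,\lfloor\theta\rceil+1,\ldots,\lfloor\theta\rceil+p-1\}$ is a set of $p$ consecutive integers. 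I would then show that reducing \emph{any} such block modulo $[-\frac{p}{2},\frac{p}{2})$ produces the centered constellation $\{-\frac{p-1}{2},\ldots,-1,0,1,\ldots,\frac{p-1}{2}\}$: a block of $p$ consecutive integers hits every residue class mod $p$ exactly once, and the unique representative of each class lying in $[-\frac{p}{2},\frac{p}{2})$ is its centered value. Since subtracting the integer $\frac{p-1}{2}$ from $\mathbb{Z}_p=\{0,\ldots,p-1\}$ produces precisely this set, this gives
\begin{align}
\left[\mathbb{Z}_p+\lfloor\theta\rceil\right]_{\bmod[-\frac{p}{2},\frac{p}{2})}\equiv\mathbb{Z}_p-\frac{p-1}{2}.\nonumber
\end{align}
Here the primality of $p$ enters only through $p$ being odd, which guarantees $\frac{p-1}{2}\in\mathbb{Z}$.

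The crux, and the step I expect to require the most care, is showing that the fractional perturbation $\tilde{\theta}$ commutes with the modulo reduction, i.e.\ that adding $\tilde{\theta}$ before reducing does not change which integer multiple of $p$ gets subtracted. To this end I would fix an integer $n$ in the block and let $c\in\{-\frac{p-1}{2},\ldots,\frac{p-1}{2}\}$ be its centered representative, so that $n-c\in p\mathbb{Z}$. Because $|c|\leq\frac{p-1}{2}$ and $\tilde{\theta}\in[-\frac12,\frac12)$, the sum obeys
\begin{align}
c+\tilde{\theta}\in\left[-\frac{p-1}{2}-\tfrac12,\ \frac{p-1}{2}+\tfrac12\right)=\left[-\frac{p}{2},\frac{p}{2}\right),\nonumber
\end{align}
so $n-c$ is exactly the multiple of $p$ that carries $n+\tilde{\theta}$ into the fundamental interval, yielding $\left[n+\tilde{\theta}\right]_{\bmod[-\frac{p}{2},\frac{p}{2})}=c+\tilde{\theta}$. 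The margin here is tight: the half-open endpoints of both $[-\frac12,\frac12)$ and $[-\frac{p}{2},\frac{p}{2})$ must line up for the sum never to cross the boundary, which is why the interval conventions have to be tracked precisely rather than treated asymptotically.

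Applying this coordinatewise to the whole block $\mathbb{Z}_p+\lfloor\theta\rceil$ then shows that reducing $(\mathbb{Z}_p+\lfloor\theta\rceil)+\tilde{\theta}$ gives the centered constellation shifted rigidly by $\tilde{\theta}$; since the map sending each integer to $c+\tilde{\theta}$ is injective, both sides consist of the same $p$ distinct points, establishing set equality. Combining this with the two displays above yields
\begin{align}
\left[\mathbb{Z}_p+\theta\right]_{\bmod[-\frac{p}{2},\frac{p}{2})}\equiv\left(\mathbb{Z}_p-\frac{p-1}{2}\right)+\tilde{\theta},\nonumber
\end{align}
which is the claim. The only genuinely substantive ingredient is the boundary bookkeeping of the middle paragraph; everything else is the standard identification of consecutive integers with residue classes.
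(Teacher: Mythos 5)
Your proof is correct and follows essentially the same route as the paper's: split $\theta=\lfloor\theta\rceil+\tilde{\theta}$, reduce the integer-shifted block $\mathbb{Z}_p+\lfloor\theta\rceil$ to the centered constellation $\mathbb{Z}_p-\frac{p-1}{2}$, and verify that adding $\tilde{\theta}\in[-\frac{1}{2},\frac{1}{2})$ keeps every point inside $[-\frac{p}{2},\frac{p}{2})$ so the outer modulo reduction acts as the identity. The paper states this as a terse chain of set equalities, whereas you spell out the two facts it leaves implicit --- that $p$ consecutive integers form a complete residue system with centered representatives (which, as you correctly note, uses only that $p$ is odd), and the half-open-interval boundary bookkeeping guaranteeing no point crosses $\pm\frac{p}{2}$.
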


\vspace{2mm}

\begin{proof}

\begin{align}
\left[\mathbb{Z}_p+\theta\right]&_{\bmod [-\frac{p}{2},\frac{p}{2})}\equiv \left[\mathbb{Z}_p+\lfloor \theta\rceil+\theta-\lfloor \theta\rceil\right]_{\bmod [-\frac{p}{2},\frac{p}{2})}\nonumber\\
&\equiv \left[\left[\mathbb{Z}_p+\lfloor \theta\rceil\right]_{\bmod [-\frac{p}{2},\frac{p}{2})}+\tilde{\theta}\right]_{\bmod [-\frac{p}{2},\frac{p}{2})}\nonumber\\
&\equiv \left[\mathbb{Z}_p-\frac{p-1}{2}+\tilde{\theta}\right]_{\bmod [-\frac{p}{2},\frac{p}{2})}\nonumber\\
&\equiv \mathbb{Z}_p-\frac{p-1}{2}+\tilde{\theta}.
\label{LemmaModProof}
\end{align}
\end{proof}
We are now ready to prove Lemma~\ref{lemmaExpectation}.
\begin{proof}[Proof of Lemma~\ref{lemmaExpectation}]
\begin{align}
\mathbb{E}&\left[e^{-\frac{\text{SNR}}{8}U^2}\right]=\mathbb{E}\left[\mathbb{E}\left[e^{-\frac{\text{SNR}}{8}U^2}|\Theta=\theta\right]\right]\nonumber\\
&\leq\max_{\theta\in\mathbb{R}}\mathbb{E}\left[e^{-\frac{\text{SNR}}{8}U^2}|\Theta=\theta\right]\nonumber\\
&=\max_{\theta\in\mathbb{R}}\frac{1}{p}\sum_{k=0}^{p-1}e^{-\frac{\text{SNR}}{8}\left(\left[\frac{L}{p}k+\theta\right]_{\bmod [-\frac{L}{2},\frac{L}{2})}\right)^2}\nonumber\\
&=\max_{\theta\in\mathbb{R}}\frac{1}{p}\sum_{k=0}^{p-1}e^{-\frac{\text{SNR}L^2}{8p^2}\left(\left[k+\frac{p}{L}\theta\right]_{\bmod [-\frac{p}{2},\frac{p}{2})}\right)^2}\nonumber\\
&\leq\max_{{\theta}\in\mathbb{R}}\frac{1}{p}\sum_{\tilde{k}=-\frac{p-1}{2}}^{\frac{p-1}{2}}e^{-\frac{\text{SNR}L^2}{8p^2}\left(\tilde{k}+\frac{p}{L}\theta-\lfloor\frac{p}{L}\theta\rceil\right)^2}\label{step1}\\
&\leq\max_{\tilde{\theta}\in[-0.5,0.5)}\frac{1}{p}\sum_{\tilde{k}=-\frac{p-1}{2}}^{\frac{p-1}{2}}e^{-\frac{\text{SNR}L^2}{8p^2}\left(\tilde{k}+\tilde{\theta}\right)^2}\nonumber\\
&\leq\frac{1}{p}\left(1+\sqrt{\frac{\pi\cdot 8 p^2}{12\text{SNR}}}\right)\label{step2}\\
&=\frac{1}{p}+\sqrt{\frac{2\pi/3}{\text{SNR}}},\nonumber
\end{align}
where (\ref{step1}) follows from Lemma~\ref{lemmaMod}, and (\ref{step2}) follows from part~(b) of Lemma~\ref{lemaExpSeriesTmp}, and the fact that $L^2=12$.
\end{proof}

\vspace{2mm}

Before proceeding to the proof of Lemma~\ref{lemmaExp2} we need to derive two more simple results.

\vspace{2mm}

\begin{lemma}
\label{lemmaConstellation}
Let $\mathcal{D}$ be some constellation of finite cardinality $|\mathcal{D}|$, with minimum distance
\begin{align}
d_{\text{min}}=\min_{x_1,x_2\in\mathcal{D},x_1\neq x_2}|x_1-x_2|.\nonumber
\end{align}
Then
\begin{align}
\sum_{x\in\mathcal{D}}e^{-\rho x^2}<\sum_{k=-\infty}^{\infty}e^{-\rho (k\cdot d_{\text{min}}+\theta)^2},
\label{NotTight}
\end{align}
for some $\theta\in\mathbb{R}$.
\end{lemma}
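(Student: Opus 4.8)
The plan is to bound the constellation sum by the sum over a regular arithmetic progression of spacing $d_{\text{min}}$, which is itself a subset of the grid $\{k\,d_{\text{min}}+\theta\}$. Write $\mathcal{D}=\{x_1<x_2<\cdots<x_N\}$ with $N=|\mathcal{D}|$. Since $d_{\text{min}}$ is the minimum distance, every consecutive gap satisfies $x_{i+1}-x_i\geq d_{\text{min}}$, and telescoping gives $x_j-x_i\geq (j-i)\,d_{\text{min}}$ for all $j>i$. The function $e^{-\rho t^2}$ is strictly decreasing in $|t|$, so it suffices to produce points $x'_i=\theta+(i-1)d_{\text{min}}$ lying on an arithmetic progression of step $d_{\text{min}}$ with $|x'_i|\leq |x_i|$ for every $i$: then $\sum_i e^{-\rho x_i^2}\leq \sum_i e^{-\rho (x'_i)^2}$, and since $\{x'_i\}_{i=1}^N$ is a finite subset of the grid $d_{\text{min}}\mathbb{Z}+\theta$ while the omitted grid points each contribute a strictly positive term, the strict inequality~(\ref{NotTight}) follows.

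Next I would reduce the existence of such a progression to the feasibility of a single scalar constraint on the offset $\theta$. The requirement $|x'_i|\leq |x_i|$ is equivalent to $-|x_i|-(i-1)d_{\text{min}}\leq \theta\leq |x_i|-(i-1)d_{\text{min}}$, so a valid $\theta$ exists precisely when
\begin{align}
\max_{i}\left(-|x_i|-(i-1)d_{\text{min}}\right)\leq \min_{j}\left(|x_j|-(j-1)d_{\text{min}}\right).\nonumber
\end{align}
This holds if and only if $(j-i)\,d_{\text{min}}\leq |x_i|+|x_j|$ for every pair of indices $i,j$.

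The key step — and the only place where any real work is needed — is verifying this last inequality. For $j\leq i$ the left-hand side is non-positive and the claim is immediate. For $j>i$ I would invoke the triangle inequality $|x_i|+|x_j|\geq |x_j-x_i|=x_j-x_i$ together with the spacing bound $x_j-x_i\geq (j-i)d_{\text{min}}$ derived above, which yields exactly $(j-i)d_{\text{min}}\leq |x_i|+|x_j|$. Hence the feasibility condition holds, a valid offset $\theta$ can be chosen, and the argument of the first paragraph closes the proof. The main subtlety to keep in mind is that the progression must be centered so as to dominate $\mathcal{D}$ in absolute value \emph{simultaneously} at every index; the pairwise inequality above is precisely what guarantees that the per-index admissible windows for $\theta$ share a common point.
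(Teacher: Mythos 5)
Your proposal is correct, and it follows the same overall strategy as the paper's proof---injectively dominate each constellation point by a grid point of no larger absolute value, then extend the finite sum to the full (infinite) progression to get strictness---but it implements the one nontrivial step differently. The paper constructs the offset explicitly: it anchors the progression at $\theta=d_0=\arg\min_{x\in\mathcal{D}}|x|$ and slides the $k$-th point above (resp.\ below) $d_0$ to $d_0+k\,d_{\text{min}}$ (resp.\ $d_0-k\,d_{\text{min}}$), tacitly using that $|d_0\pm k\,d_{\text{min}}|\leq|d_0\pm\Delta_{\pm k}|$; this last domination is asserted without comment in the paper, though it does hold, since $t\mapsto|d_0\pm t|$ is convex on $[0,\Delta_{\pm k}]$ and its values at both endpoints ($|d_0|$ and $|d_0\pm\Delta_{\pm k}|$, the latter being a constellation point) are at least $|d_0|$ but at most $|d_0\pm\Delta_{\pm k}|$ by minimality of $|d_0|$ over $\mathcal{D}$. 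You instead prove mere \emph{existence} of a valid offset by intersecting the per-index admissible windows $\left[-|x_i|-(i-1)d_{\text{min}},\,|x_i|-(i-1)d_{\text{min}}\right]$ for $\theta$ and reducing nonemptiness (a Helly-type fact for intervals) to the pairwise condition $(j-i)\,d_{\text{min}}\leq|x_i|+|x_j|$, which the triangle inequality plus the telescoped spacing bound $x_j-x_i\geq(j-i)d_{\text{min}}$ settles cleanly. The paper's route buys an explicit, one-line choice of $\theta$ (which is $d_0$, the value later used implicitly in the maximization over $\theta$ in the proof of Lemma~\ref{lemmaExp2}); your route is existence-only but is fully airtight, and in fact supplies precisely the verification the paper's displayed inequality leaves implicit. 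Both arguments correctly obtain strict inequality in~(\ref{NotTight}) from the infinitely many strictly positive grid terms not matched to points of the finite set $\mathcal{D}$ (and both, like the lemma itself, implicitly use $\rho>0$ so that $e^{-\rho t^2}$ is decreasing in $|t|$; for $\rho\leq 0$ the right-hand side diverges and the claim is vacuous).
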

\begin{proof}
Let us sort the points of $\mathcal{D}$ in ascending order by $$d_0-\Delta_{-M_\text{neg}}\leq\ldots \leq d_0-\Delta_{-1}\leq d_0\leq d_0+\Delta_{1}\ldots \leq d_0+\Delta_{M_{\text{pos}}},$$
where
\begin{align}
d_0=\arg\min_{x\in{\mathcal{D}}}|x|.\nonumber
\end{align}
We have
\begin{align}
\sum_{x\in\mathcal{D}}e^{-\rho x^2}&=\sum_{k=1}^{M_\text{neg}}e^{-\rho (d_0-\Delta_{-k})^2}+e^{-\rho d_0^2}+\sum_{k=1}^{M_\text{pos}}e^{-\rho (d_0+\Delta_{k})^2}\nonumber\\
&\leq \sum_{k=1}^{M_\text{neg}}e^{-\rho (d_0-k\cdot d_{\text{min}})^2}+\sum_{k=0}^{M_\text{pos}}e^{-\rho (d_0+k\cdot d_{\text{min}})^2}\nonumber\\
&\leq \sum_{k=-\infty}^{\infty}e^{-\rho (d_0+k\cdot d_{\text{min}})^2}.\nonumber
\end{align}
Setting $\theta=d_0$ the lemma is proved.
\end{proof}
\begin{remark}
We note that the bound~(\ref{NotTight}) is rather loose, and is one of the weakest links in the chain of bounds we use for obtaining an upper bound on the average pairwise error probability $\mathbb{E}(\tpe)$.
\end{remark}

\begin{lemma}
\label{lemmaDmin}
Let
\begin{align}
X\equiv\left[\frac{L}{p}\mathbb{Z}_p\right]^*\equiv\frac{L}{p}\left[\mathbb{Z}_p\right]_{\bmod [-\frac{p}{2},\frac{p}{2})},\nonumber
\end{align}
and let $\mathcal{D}\equiv\left[(f+\varepsilon)X+\theta\right]^*$ where $f\in\mathbb{Z}_p$, and $\varepsilon,\theta\in\mathbb{R}$ are arbitrary constants. The minimum distance in the constellation $\mathcal{D}$ is lower bounded by
\begin{align}
d_{\text{min}}\geq\frac{L}{p}\min_{l\in\mathbb{Z}_p\backslash\{0\}}l\cdot\bigg|\varepsilon-\frac{\lfloor l\varepsilon\rceil}{l}\bigg|.\nonumber
\end{align}
\end{lemma}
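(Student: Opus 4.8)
The plan is to parametrize two arbitrary \emph{distinct} points of $\mathcal{D}$ by their preimages in $\frac{L}{p}\mathbb{Z}_p$ and to show that, once the shift $\theta$ and all the wrap-around effects of the star-reductions cancel, their separation is controlled entirely by how well $\varepsilon$ is approximated by rationals with denominator at most $p-1$, i.e.\ by $\delta(p,\varepsilon)$.

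First I would write any point of $\mathcal{D}$ as $\left[(f+\varepsilon)\frac{L}{p}m+\theta\right]^*$ with $m\in\left[\mathbb{Z}_p\right]_{\bmod[-\frac{p}{2},\frac{p}{2})}$, so that $m$ is an integer with $|m|<p$. Picking two distinct points $x_1,x_2$ with parameters $m_1,m_2$, distinctness forces $m_1\neq m_2$. Unfolding the reduction, $x_i=(f+\varepsilon)\frac{L}{p}m_i+\theta-k_iL$ for integers $k_i$ chosen so that $x_i\in\BI$. Subtracting, the common shift $\theta$ cancels and I obtain
\[
x_1-x_2=\frac{L}{p}\Big[(f+\varepsilon)(m_1-m_2)-p(k_1-k_2)\Big].
\]

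Next I would set $m=m_1-m_2$, a nonzero integer with $1\le|m|\le p-1$, and note that $fm$ and $p(k_1-k_2)$ are both integers, so the bracket equals $\varepsilon m-N$ with $N=p(k_1-k_2)-fm\in\mathbb{Z}$. Hence
\[
|x_1-x_2|=\frac{L}{p}\left|\varepsilon m-N\right|\ge\frac{L}{p}\left|\varepsilon m-\lfloor\varepsilon m\rceil\right|,
\]
the lower bound coming from replacing the uncontrolled $N$ by the nearest integer to $\varepsilon m$. Using that the distance to the nearest integer is an even function, $\left|\varepsilon m-\lfloor\varepsilon m\rceil\right|=\big||m|\varepsilon-\lfloor|m|\varepsilon\rceil\big|$, and since $|m|\in\{1,\ldots,p-1\}\subseteq\mathbb{Z}_p\backslash\{0\}$ this is at least $\min_{l\in\mathbb{Z}_p\backslash\{0\}}|l\varepsilon-\lfloor l\varepsilon\rceil|=\min_{l}\,l\cdot\big|\varepsilon-\frac{\lfloor l\varepsilon\rceil}{l}\big|=\delta(p,\varepsilon)$. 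As this bound holds for every pair of distinct points, it bounds $d_{\text{min}}$ from below and establishes the lemma.

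The argument is entirely elementary and there is no genuine obstacle, only careful bookkeeping. The one point that deserves attention is the cancellation that isolates the Diophantine quantity: the integer part $f$ contributes only the integer $fm$ and the shift $\theta$ cancels in the difference, so neither appears in the final bound, consistent with the statement being independent of $f$ and $\theta$. I would also remark explicitly that $N$ ranges over $\mathbb{Z}$, which legitimizes passing to the nearest-integer minimum in the displayed inequality above.
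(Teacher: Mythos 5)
Your proof is correct and takes essentially the same route as the paper's: both arguments cancel $\theta$, absorb the integer contributions $fm$ and the wrap-around multiples of $L$, and lower-bound the residual $\frac{L}{p}\left|\varepsilon m - N\right|$ by the distance of $m\varepsilon$ to the nearest integer, minimized over $1\le |m|\le p-1$. The only difference is cosmetic: the paper keeps the computation inside nested modulo reductions (using $|x|\ge|x^*|$ and then passing from reduction modulo $[-\frac{p}{2},\frac{p}{2})$ to reduction modulo $[-\frac{1}{2},\frac{1}{2})$), whereas you unfold the star reductions into explicit integer offsets $k_iL$, which makes the step that $N$ ranges over $\mathbb{Z}$ transparent.
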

\begin{proof}
The distance between any pair of distinct constellation points can be written as
\begin{align}
&\big|\left[(f+\varepsilon)x_1+\theta\right]^*-\left[(f+\varepsilon)x_2+\theta\right]^*\big|\nonumber\\
&\geq\big|\left[\left[(f+\varepsilon)x_1+\theta\right]^*-\left[(f+\varepsilon)x_2+\theta\right]^*\right]^*\big|\nonumber\\
&=\big|\left[f(x_1-x_2)+\varepsilon(x_1-x_2)\right]^*\big|,
\label{proof6a}
\end{align}
where $x_1$ and $x_2$ are two distinct points in $\left[\frac{L}{p}\mathbb{Z}_p\right]^*$.
Letting
\begin{align}
\tilde{x}_1=\frac{p}{L}x_1\in\left[\mathbb{Z}_p\right]_{\bmod [-\frac{p}{2},\frac{p}{2})}\nonumber
\end{align}
and
\begin{align}
\tilde{x}_2=\frac{p}{L}x_2\in\left[\mathbb{Z}_p\right]_{\bmod [-\frac{p}{2},\frac{p}{2})},\nonumber
\end{align}
we can further bound (\ref{proof6a}) as
\begin{align}
&\bigg|\left[f(x_1-x_2)+\varepsilon(x_1-x_2)\right]^*\bigg|\nonumber\\
&=\frac{L}{p}\bigg|\left[f(\tilde{x}_1-\tilde{x}_2)+\varepsilon(\tilde{x}_1-\tilde{x}_2)\right]_{\bmod [-\frac{p}{2},\frac{p}{2})}\bigg|\nonumber\\
&\geq\frac{L}{p}\bigg|\left[f(\tilde{x}_1-\tilde{x}_2)+\varepsilon(\tilde{x}_1-\tilde{x}_2)\right]_{\bmod [-\frac{1}{2},\frac{1}{2})}\bigg|\nonumber\\
&=\frac{L}{p}\bigg|\left[\varepsilon(\tilde{x}_1-\tilde{x}_2)\right]_{\bmod [-\frac{1}{2},\frac{1}{2})}\bigg|\nonumber\\
&\geq\frac{L}{p}\min_{l\in\mathbb{Z}_p\backslash\{0\}}\big|l\varepsilon-\lfloor l\varepsilon\rceil\big|\label{minimization}\\
&\geq\frac{L}{p}\min_{l\in\mathbb{Z}_p\backslash\{0\}}l\cdot\bigg|\varepsilon-\frac{\lfloor l\varepsilon\rceil}{l}\bigg|,\nonumber
\end{align}
where inequality (\ref{minimization}) is true since $0<|\tilde{x}_1-\tilde{x}_2|\leq p-1$.
\end{proof}
Aided by Lemmas~\ref{lemmaConstellation} and~\ref{lemmaDmin}, we can now prove Lemma~\ref{lemmaExp2}.
\vspace{2mm}
\begin{proof}[Proof of Lemma~\ref{lemmaExp2}]
\begin{align}
&\mathbb{E}\left[e^{-\frac{\text{SNR}}{8}U^2}\right]=\mathbb{E}\left[\mathbb{E}\left[e^{-\frac{\text{SNR}}{8}U^2}|\Theta=\theta\right]\right]\nonumber\\
&\leq\max_{\theta\in\mathbb{R}}\mathbb{E}\left[e^{-\frac{\text{SNR}}{8}U^2}|\Theta=\theta\right]\nonumber\\
&=\max_{\theta\in\mathbb{R}}\frac{1}{p}\sum_{k=0}^{p-1}e^{-\frac{\text{SNR}}{8}\left(\left[(f+\varepsilon)\frac{L}{p}k+\theta\right]_{\bmod [-\frac{L}{2},\frac{L}{2})}\right)^2}\nonumber\\
&<\max_{\tilde{\theta}\in\mathbb{R}}\frac{1}{p}\sum_{k=-\infty}^{\infty}e^{-\frac{\text{SNR}}{8}\left(k\cdot d_{\text{min}}+\tilde{\theta}\right)^2}\label{proof7aInter}\\
&=\max_{\tilde{\theta}\in\mathbb{R}}\frac{1}{p}\sum_{k=-\infty}^{\infty}e^{-\frac{\text{SNR}\cdot d^2_{\text{min}}}{8}\left(k +\frac{\tilde{\theta}}{d_{\text{min}}}\right)^2}\nonumber\\
&<\frac{1}{p}\left(1+\sqrt{\frac{8\pi}{\text{SNR}\cdot d^2_{\text{min}}}}\right)\label{proof7bInter}\\
&\leq \frac{1}{p}\left(1+\sqrt{\frac{8\pi p^2}{\text{SNR}\cdot \delta^2(p,\varepsilon) L^2}}\right)\label{proof7cInter}\\
&=\frac{1}{p}+\sqrt{\frac{2\pi/3}{\delta^2(p,\varepsilon)\text{SNR}}},\nonumber
\end{align}
where (\ref{proof7aInter}) follows from Lemma~\ref{lemmaConstellation}, (\ref{proof7bInter}) follows from part (b) of Lemma~\ref{lemaExpSeriesTmp}, and (\ref{proof7cInter}) from Lemma~\ref{lemmaDmin}.
\end{proof}
\vspace{2mm}
We use a similar technique for the proof of Lemma~\ref{lemmaExp3}.
\vspace{2mm}
\begin{proof}[Proof of Lemma~\ref{lemmaExp3}]
Let
\begin{align}
\tilde{X}_i=\frac{p}{L}X_i, \ \text{for }i=1,2,3,\nonumber
\end{align}
such that $\tilde{X}_1$, $\tilde{X}_2$ and $\tilde{X}_3$ are three i.i.d. random variables uniformly distributed over $\left[\mathbb{Z}_p\right]_{\bmod [-p/2,p/2)}$. With this notation, (\ref{Udef3}) can be written as
\begin{align}
U=\frac{L}{p}\left[\tilde{X}_1+\gamma\left(\tilde{X}_2-\tilde{X}_3\right)\right]_{\bmod [-\frac{p}{2},\frac{p}{2})}.\nonumber
\end{align}
Further, let
\begin{align}
\Theta=\tilde{X}_2-\tilde{X}_3,\nonumber
\end{align}
and note that for $\theta\in [0,\pm 1,\ldots,\pm (p-1)]$ we have
\begin{align}
\Pr\left(\Theta=\theta\right)=\frac{p-|\theta|}{p^2}\leq\frac{1}{p}.
\label{PrTheta}
\end{align}
Now,
\begin{align}
&\mathbb{E}\left[e^{-\frac{\text{SNR}}{8}U^2}\right]=\mathbb{E}\left[\mathbb{E}\left[e^{-\frac{\text{SNR}}{8}U^2}|\Theta=\theta\right]\right]\nonumber\\
&=\sum_{\theta=-(p-1)}^{p-1}\Pr\left(\Theta=\theta\right)\mathbb{E}\left[e^{-\frac{\text{SNR}}{8}U^2}|\Theta=\theta\right].
\label{PrThetaEq}
\end{align}
For any value of $\theta$, we have
\begin{align}
\mathbb{E}&\left[e^{-\frac{\text{SNR}}{8}U^2}|\Theta=\theta\right]\nonumber\\
&=\mathbb{E}\left[e^{-\frac{\text{SNR}}{8}\frac{L^2}{p^2}\left(\left[\tilde{X}_1+\gamma\theta\right]_{\bmod [-\frac{p}{2},\frac{p}{2})}\right)^2}\right]\nonumber\\
&=\frac{1}{p}\sum_{k=-\frac{p-1}{2}}^{\frac{p-1}{2}}e^{-\frac{3\text{SNR}}{2p^2}\left(k+(\gamma\theta-\lfloor \gamma\theta\rceil)\right)^2}\label{ConstellationBound}\\
&\leq \frac{1}{p}\left(e^{-\frac{3\text{SNR}}{2p^2}\left(\gamma\theta-\lfloor \gamma\theta\rceil\right)^2}+\sqrt{\frac{p^2 2\pi/3}{\text{SNR}}}\right)\label{ExponentSeriesBound}\\
&=\sqrt{\frac{2\pi/3}{\text{SNR}}}+\frac{1}{p}e^{-\frac{3\text{SNR}}{2p^2}\left(\gamma\theta-\lfloor \gamma\theta\rceil\right)^2},
\label{ThetaBound}
\end{align}
where~(\ref{ConstellationBound}) follows from Lemma~\ref{lemmaMod} and~(\ref{ExponentSeriesBound}) follows from part (b) of Lemma~\ref{lemaExpSeriesTmp} .

We note that for $\theta=0$, (\ref{ThetaBound}) becomes
\begin{align}
\mathbb{E}&\left[e^{-\frac{\text{SNR}}{8}U^2}|\Theta=0\right]\leq\sqrt{\frac{2\pi/3}{\text{SNR}}}+\frac{1}{p},
\label{ThetaIsZero}
\end{align}
and for any $\theta\neq 0 $
\begin{align}
\mathbb{E}&\left[e^{-\frac{\text{SNR}}{8}U^2}|\Theta\neq 0\right]\leq\sqrt{\frac{2\pi/3}{\text{SNR}}}+\frac{1}{p}e^{-\frac{3\text{SNR}}{2p^2}\delta^2(p,\gamma)},
\label{ThetaNotZero}
\end{align}
which follows directly from the definition of $\delta(p,\gamma)$ and the fact that $|\theta|\in\mathbb{Z}_p\backslash\{0\}$.

Substituting~(\ref{PrTheta}),~(\ref{ThetaIsZero}) and~(\ref{ThetaNotZero}) into~(\ref{PrThetaEq}) yields
\begin{align}
\mathbb{E}&\left[e^{-\frac{\text{SNR}}{8}U^2}\right]
\leq\frac{1}{p^2}+\frac{1}{p}e^{-\frac{3\text{SNR}}{2p^2}\delta^2(p,\gamma)}+\sqrt{\frac{2\pi/3}{\text{SNR}}}.\nonumber
\end{align}
\end{proof}

\vspace{2mm}

\begin{proof}[Proof of Lemma~\ref{lemmaExp4}]

The first step in proving the lemma is showing that for any $r\in\mathbb{Z}_p\backslash\{0,1\}$, there exists at least one value of $x\in\left[\frac{L}{p}\mathbb{Z}_p\right]^*$ for which $|u|>\frac{L}{p}|\varepsilon_{\bmod[-\frac{1}{4},\frac{1}{4})}|$.

For any value of $x\in\left[\frac{L}{p}\mathbb{Z}_p\right]^*$, define $\tilde{x}=\frac{p}{L}x\in\left[\mathbb{Z}_p\right]_{\bmod[-\frac{p}{2},\frac{p}{2})}$. We have
\begin{align}
|u|&=\left|\left[f x+\varepsilon x-\varepsilon \right[r x\left]^*\right]^*\right|\nonumber\\
&=\frac{L}{p}\left|\left[f \tilde{x}+\varepsilon \tilde{x}-\varepsilon \left[r \tilde{x}\right]_{\bmod[-\frac{p}{2},\frac{p}{2})}\right]_{\bmod[-\frac{p}{2},\frac{p}{2})}\right|\nonumber\\
&\geq\frac{L}{p}\left|\left[f \tilde{x}+\varepsilon \tilde{x}-\varepsilon \left[r \tilde{x}\right]_{\bmod[-\frac{p}{2},\frac{p}{2})}\right]_{\bmod[-\frac{1}{2},\frac{1}{2})}\right|\nonumber\\
&=\frac{L}{p}\left|\left[\varepsilon \left(\tilde{x}-\left[r \tilde{x}\right]_{\bmod[-\frac{p}{2},\frac{p}{2})}\right)\right]_{\bmod[-\frac{1}{2},\frac{1}{2})}\right|.
\label{Umax}
\end{align}
We focus on the expression
\begin{align}
\tilde{x}-\left[r \tilde{x}\right]_{\bmod[-\frac{p}{2},\frac{p}{2})}.
\label{diffExpression}
\end{align}
We show that for any value of $2\leq r<p-1$ there exists a value of $\tilde{x}\in\left[\mathbb{Z}_p\right]_{\bmod[-\frac{p}{2},\frac{p}{2})}$ for which~(\ref{diffExpression}) equals $1$. In order to see that, we observe that the equation
\begin{align}
&\left[\tilde{x}-\left[r \tilde{x}\right]_{\bmod[-\frac{p}{2},\frac{p}{2})}\right]_{\bmod[-\frac{p}{2},\frac{p}{2})}\nonumber\\
& \ \ \ \ \ \ \ \ \ =\left[(1-r)\tilde{x}\right]_{\bmod[-\frac{p}{2},\frac{p}{2})}=1,
\label{diffExpression2}
\end{align}
has a (single) solution for any $r\in\mathbb{Z}_p\backslash\{1\}$. Further, since
\begin{align}
\tilde{x}-\left[r \tilde{x}\right]_{\bmod[-\frac{p}{2},\frac{p}{2})}\in[-(p-1),(p-1)],
\label{diffExpression3}
\end{align}
it can be deduced that for any $r\in\mathbb{Z}_p\backslash\{1\}$, there exists a (single) value of $\tilde{x}$ for which~(\ref{diffExpression}) equals either $1$ or $-(p-1)$.
Since~(\ref{diffExpression}) equals $-(p-1)$ only if
\begin{align}
\tilde{x}=-\left[r\tilde{x}\right]_{\bmod[-\frac{p}{2},\frac{p}{2})}=-\frac{p-1}{2},\nonumber
\end{align}
which is possible only for $r=p-1$, we conclude that for any $2\leq r<p-1$ there is a value of $x$, which we denote $x^1$, for which~(\ref{diffExpression}) equals $1$. Substituting $x^1$ into~(\ref{Umax}) yields
\begin{align}
|u|\geq\frac{L}{p}\left|\varepsilon_{\bmod[-\frac{1}{2},\frac{1}{2})}\right|\geq\frac{L}{p}\left|\varepsilon_{\bmod[-\frac{1}{4},\frac{1}{4})}\right|.
\label{UforR}
\end{align}
We are left with the case $r=p-1$, for which
\begin{align}
\tilde{x}-[r\tilde{x}]_{\bmod [-\frac{p}{2},\frac{p}{2})}=\tilde{x}-[-\tilde{x}]=2\tilde{x}.
\label{TwoTildex}
\end{align}
Substituting~(\ref{TwoTildex}) into~(\ref{Umax}) gives
\begin{align}
|u|\geq\frac{L}{p}\left|[2\tilde{x}\varepsilon]_{\bmod[-\frac{1}{2},\frac{1}{2})}\right|.\nonumber
\end{align}
It follows that for $r=p-1$ and $\tilde{x}=1$
\begin{align}
|u|\geq\frac{L}{p}\left|[2\varepsilon]_{\bmod[-\frac{1}{2},\frac{1}{2})}\right|\geq\frac{L}{p}\left|\varepsilon_{\bmod[-\frac{1}{4},\frac{1}{4})}\right|.
\label{UforPminusOne}
\end{align}
Combining~(\ref{UforR}) and~(\ref{UforPminusOne}), we conclude that for any $r\in\mathbb{Z}_p\backslash\{0,1\}$, there exists at least one value of $x$ for which
\begin{align}
|u|\geq\frac{L}{p}\left|\varepsilon_{\bmod[-\frac{1}{4},\frac{1}{4})}\right|.
\label{Umax2}
\end{align}
Now, since at least one of the equiprobable $p$ possible values of $U$ is bigger (in absolute value) than $\frac{L}{p}\left|\varepsilon_{\bmod[-\frac{1}{4},\frac{1}{4})}\right|$, $\mathbb{E}\left[e^{-\frac{\text{SNR}}{8}U^2}\right]$ can be upper bounded by
\begin{align}
\mathbb{E}\left[e^{-\frac{\text{SNR}}{8}U^2}\right]&=\frac{1}{p}\sum_{u}e^{-\frac{\text{SNR}}{8}U^2}\nonumber\\
&\leq \frac{p-1}{p}+\frac{1}{p}e^{-\frac{L^2\text{SNR}}{8p^2}\left(\varepsilon_{\bmod[-\frac{1}{4},\frac{1}{4})}\right)^2}\nonumber\\
&= \frac{p-1}{p}+\frac{1}{p}e^{-\frac{3\text{SNR}}{2p^2}\left(\varepsilon_{\bmod[-\frac{1}{4},\frac{1}{4})}\right)^2}.
\end{align}

\end{proof}

\section{An Example of a $3$-User Interference Channel Power-Time Code}
\label{sec:powerTime}
In this section we introduce a coding scheme that utilizes both power back-off at the transmitters, and the time domain, in order to allow for perfect interference alignment (with some loss in the number of DoF). We illustrate the scheme by an example which is useful for the general $3$-user interference channel and achieves $9/8$ degrees of freedom out of the $3/2$ DoF afforded by the channel for almost all channel gains (see~\cite{Khandany}).

We consider the channel
\begin{align}
H=\left(
    \begin{array}{ccc}
      h_{11} & h_{12} & h_{13} \\
      h_{21} & h_{22} & h_{23} \\
      h_{31} & h_{32} & h_{33} \\
    \end{array}
  \right).\nonumber
\end{align}
We use the channel $4n$ times, in order to transmit $3$ codewords of length $n$ by each user. We refer to $n$ consecutive channel uses as a frame. The actions taken by the transmitters and the receiver vary from frame to frame, as will be described in detail.
All transmitters and receivers use the same linear codebook $\mathcal{C}$ of rate $R_{\text{sym}}$ and length $n$ during all frames. The codeword transmitted by user $k$ at frame $t$ is denoted by $\mathbf{x}_{k,t}$.

We assume that $h_{13}\geq h_{12}$, $h_{22}\geq h_{23}$ and $h_{32}\geq h_{31}$. There is no loss of generality in this assumption, as the scheme we now describe can be easily modified for different ratios between the channels gains. For all frames, receiver $1$ scales its observation by $1/h_{12}$, receiver $2$ scales its observation by $1/h_{23}$ and receiver $3$ scales its observation by $1/h_{31}$ such that the equivalent channel is
\begin{align}
\tilde{H}=\left(
    \begin{array}{ccc}
      \tilde{h}_{11} & 1 & \tilde{h}_{13} \\
      \tilde{h}_{21} & \tilde{h}_{22} & 1 \\
      1 & \tilde{h}_{32} & \tilde{h}_{33} \\
    \end{array}
  \right),\nonumber
\end{align}
where $\tilde{h}_{1j}=h_{1j}/h_{12}$, $\tilde{h}_{2j}=h_{2j}/h_{23}$ and $\tilde{h}_{3j}=h_{3j}/h_{31}$.

We describe the operations taken by the transmitters and the receivers at each frame.
\vspace{2mm}
\begin{enumerate}
\item \emph{Frame $1$}:
User $1$ transmits the codeword $\mathbf{x}_{1,1}$, user $2$ transmits the codeword $\mathbf{x}_{2,1}$ and user $3$ transmits the codeword $\mathbf{x}_{3,1}$.

User $3$ scales its codeword by the factor $\alpha_3=1/\tilde{h}_{13}$, and all other transmitters do not scale their codewords. The equivalent channel is thus

\begin{align}
\tilde{H}_1=\left(
    \begin{array}{ccc}
      \tilde{h}_{11} & 1 & 1 \\
      \tilde{h}_{21} & \tilde{h}_{22} & 1/\tilde{h}_{13} \\
      1 & \tilde{h}_{32} & \tilde{h}_{33}/\tilde{h}_{13} \\
    \end{array}
  \right).\nonumber
\end{align}
Due to the perfect alignment at receiver $1$, it can decode $\mathbf{x}_{1,1}$, the codeword transmitted by user $1$. The other receivers cannot decode their codewords at this stage.

\item \emph{Frame $2$}:
User $1$ transmits the codeword $\mathbf{x}_{1,2}$, user $2$ transmits the codeword $\mathbf{x}_{2,2}$ and user $3$ transmits the codeword $\mathbf{x}_{3,2}$.

User $1$ scales its codeword by the factor $\alpha_1=1/\tilde{h}_{21}$, and all other transmitters do not scale their codewords. The equivalent channel is thus

\begin{align}
\tilde{H}_2=\left(
    \begin{array}{ccc}
      \tilde{h}_{11}/\tilde{h}_{21} & 1 & \tilde{h}_{13} \\
      1 & \tilde{h}_{22} & 1 \\
      1/\tilde{h}_{21} & \tilde{h}_{32} & \tilde{h}_{33} \\
    \end{array}
  \right).\nonumber
\end{align}
Due to the perfect alignment at receiver $2$, it can decode $\mathbf{x}_{2,2}$, the codeword transmitted by user $2$. The other receivers cannot decode their codewords at this stage.

\item \emph{Frame $3$}:
User $1$ transmits the codeword $\mathbf{x}_{1,3}$, user $2$ transmits the codeword $\mathbf{x}_{2,3}$ and user $3$ transmits the codeword $\mathbf{x}_{3,3}$.

User $2$ scales its codeword by the factor $\alpha_2=1/\tilde{h}_{32}$, and all other transmitters do not scale their codewords. The equivalent channel is thus

\begin{align}
\tilde{H}_3=\left(
    \begin{array}{ccc}
      \tilde{h}_{11} & 1/\tilde{h}_{32} & \tilde{h}_{13} \\
      \tilde{h}_{21} & \tilde{h}_{22}/\tilde{h}_{32} & 1 \\
      1 & 1 & \tilde{h}_{33} \\
    \end{array}
  \right).\nonumber
\end{align}
Due to the perfect alignment at receiver $3$, it can decode $\mathbf{x}_{3,3}$, the codeword transmitted by user $3$. The other receivers cannot decode their codewords at this stage.

\item \emph{Frame $4$}:
In this frame, each user repeats a codeword it has already transmitted in one of the previous frames.
Specifically, user $1$ transmits the codeword $\mathbf{x}_{1,1}$, user $2$ transmits the codeword $\mathbf{x}_{2,2}$ and user $3$ transmits the codeword $\mathbf{x}_{3,3}$.

None of the users scale their codewords, such that the equivalent channel is $\tilde{H}$.
Now, receiver $k$ observes the signal

\begin{align}
\mathbf{y}_{k,4}=\sum_{j=1}^3 \tilde{h}_{jk}\mathbf{x}_{j,j}+\mathbf{z}_{k},\nonumber
\end{align}
where $\mathbf{z}_{k}$ is the Gaussian noise present at receiver $k$.

Since receiver $k$ had already decoded the codeword $\mathbf{x}_{k,k}$ in the $k_{th}$ frame, it can subtract $\tilde{h}_{kk}\mathbf{x}_{k,k}$ from $\mathbf{y}_{k,4}$ which results in the equivalent two-user MAC channel
\begin{align}
\bar{\mathbf{y}}_{k,4}=\mathbf{y}_{k,4}-\tilde{h}_{kk}\mathbf{x}_{k,k}=\sum_{j=1,j\neq k}^3 \tilde{h}_{jk}\mathbf{x}_{j,j}+\mathbf{z}_{k}.\nonumber
\end{align}
User $k$ can now decode the two codewords transmitted by the other users during the fourth frame. For instance, in this step, user $1$ decodes the codewords $\mathbf{x}_{2,2}$ and $\mathbf{x}_{3,3}$.

Now that user $1$ has the side information $\mathbf{x}_{2,2}$ and $\mathbf{x}_{3,3}$, it can return to its observations from the second and the third frames. It can subtract from $\mathbf{y}_{1,2}$ the term $\mathbf{x}_{2,2}$, leaving it with a two-user MAC which allows it to decode $\mathbf{x}_{1,2}$. In the same manner, it can subtract the term $\tilde{h}_{13}\mathbf{x}_{3,3}$ from $\mathbf{y}_{1,3}$, leaving it with a two-user MAC which allows it to decode $\mathbf{x}_{1,3}$.

The same procedure is done by each one of the other decoders.
\end{enumerate}

The described power-time code results in $3$ different codewords, each with a rate that scales like $\nicefrac{1}{4}\log\text{SNR}$, that were decoded by each decoder. Taking into account the (symbol) rate of the power-time code which is $3/4$ (since the fourth channel use is ``wasted''), we get a sum rate that scales like $\frac{9}{8}\frac{1}{2}\log\text{SNR}$, which means that the number of DoF is $9/8$ .
More importantly, using Theorem~\ref{capacityTheorem}, we can find an achievable symmetric rate for this power-time code for any SNR. From our DoF analysis of~\ref{subsec:DoF}, we know that for almost any channel realization there exist a certain value of SNR from which the described power-time code outperforms time sharing.

\end{appendices}

\bibliographystyle{IEEEtran}
\bibliography{OrBib2}

\end{document}